\documentclass[showpacs,twocolumn,pra,longbibliography,superscriptaddress,notitlepage]{revtex4-2}
\usepackage{amsmath,amssymb,amsthm,mathrsfs,amsfonts,dsfont}
\usepackage{mathtools}
\usepackage{subfigure, epsfig}
\usepackage{braket}
\usepackage{bm}
\usepackage{enumerate}
\usepackage{color}
\usepackage[normalem]{ulem}
\usepackage{comment}
\usepackage[colorlinks = true]{hyperref}

\hypersetup{colorlinks=true, linkcolor=blue, citecolor=blue, urlcolor=black}

\usepackage{enumitem}

\usepackage{algorithm}  
\usepackage{algpseudocode}  


\graphicspath{{./figure/}}

\newtheorem{theorem}{Theorem}
\newtheorem{proposition}{Proposition}

\newtheorem{lemma}{Lemma}

\newtheorem{definition}{Definition}
\newtheorem{assumption}{Assumption}

\DeclareMathOperator*{\argmax}{arg\,max}

\newcommand{\mc}{\mathcal}

\newcommand{\mbb}{\mathbb}
\newcommand{\mr}{\mathrm}

\newcommand{\sinc}{\mathrm{sinc}}
\newcommand{\sech}{\mathrm{sech}}

\newcommand{\red}[1]{{\color{red} #1}}

\newcommand{\peking}{Center on Frontiers of Computing Studies, Peking University, Beijing 100871, China}

\newcommand{\oxford}{Clarendon Laboratory, University of Oxford, Oxford OX1 3PU, UK}
\newcommand{\chicago}{Pritzker School of Molecular Engineering, The University of Chicago, Illinois 60637, USA}

\begin{document}

\title{Universal quantum algorithmic  cooling on a quantum computer}

\begin{abstract}
Quantum cooling, a deterministic process that drives any state to the lowest eigenstate, has been widely used from studying ground state properties of chemistry and condensed matter quantum physics to general optimization problems. However, the cooling procedure is non-unitary, making its practical realization on a quantum computer a non-trivial task. 
Here, we propose universal quantum cooling algorithms with small resource costs. By utilizing a dual-phase representation of decaying functions, we show how to universally and deterministically realize a general cooling procedure with shallow quantum circuits. We demonstrate its applications in cooling an arbitrary input state with known ground state energy, corresponding to satisfactory, linear algebra tasks, and quantum state compiling tasks, and preparing eigenstates with unknown eigenenergies, corresponding to quantum many-body problems. Compared to quantum phase estimation, our method uses only one ancillary qubit, showing exponential improvement of the circuit complexity with respect to the final state infidelity while maintaining the Heisenberg limit of the eigenenergy estimation with proper cooling functions. We numerically benchmark the algorithms for the $8$-qubit Heisenberg model and verify its feasibility for accurately finding eigenenergies and obtaining eigenstate measurements. Our work paves the way for efficient and universal quantum algorithmic cooling with near-term and universal fault-tolerant quantum devices. 
\end{abstract}
\date{\today}
            
\author{Pei Zeng}
\email{peizeng.phy@gmail.com}   
\affiliation{\chicago}

\author{Jinzhao Sun}
\email{jinzhao.sun@physics.ox.ac.uk}
\affiliation{\peking}
\affiliation{\oxford}

\author{Xiao Yuan}
\email{xiaoyuan@pku.edu.cn}
\affiliation{\peking}
\affiliation{School of Computer Science, Peking University, Beijing 100871, China}
\maketitle


\emph{\textbf{Introduction.---}}
Many tasks, whether finding ground states~\cite{Sethground99,Alan05Science}, studying low temperature physics~\cite{Temme2011,Yung754}, solving optimization problems~\cite{farhi2014quantum,Lukin20prx}, or even simulating quantum dynamics~\cite{Kassal18681}, can be abstracted as a cooling problem. Among the various task-specific classical algorithms~---~such as perturbation theories~\cite{abrikosov2012methods}, variational approaches~\cite{WFT1,Orus2019}, self-consistent embedding methods~\cite{DFT2,DMET2012}, machine learning~\cite{PauliNet,Ferminet}, etc~---~imaginary time evolution~\cite{QMC} defines a natural and universal cooling procedure. Consider a Hamiltonian $ H$, the imaginary time evolution $e^{-\tau H}$ with a real-valued time $\tau$  monotonically lower down the average energy and deterministically drives an arbitrary pure state to the lowest eigenstate that has a nonzero overlap. Despite being mathematically universal, realizing imaginary time evolution for an arbitrary Hamiltonian is by no means an easy problem. The notorious sign problem~\cite{Signproblem} in Monte Carlo implementation of imaginary time evolution has limited its usage for solving general strongly correlated problems~\cite{Koloren__2011,Austin2012}. 

Can we realize a cooling procedure more efficiently on a quantum computer? The answer is yes! The most prominent algorithm is quantum phase estimation~\cite{kitaev1995quantum,kitaev2002classical}, which efficiently projects an arbitrary input quantum state to an eigenstate of the Hamiltonian, given the input state has a nonvanishing overlap with the target eigenstate. 
However, quantum phase estimation generally requires a deep circuit that is only realizable with a universal quantum computer~\cite{PhysRevX.8.041015,Gidney_2021}. With near-term noisy intermediate-scale quantum devices, variational approaches have been developed to emulate the imaginary time evolution with fixed or adaptive parametrized quantum circuits~\cite{McArdle2019,Yuan_2019,Motta2020,Nishi2021,amaro2021filtering}. Nevertheless, whether such circuits exist or how to efficiently find them still remains unclear~\cite{Abbas2021,holmes2021connecting,PhysRevLett.126.190501}. Several recent works have indicated the possibility of more efficient cooling algorithms~\cite{Ge19,lu2021algorithms,CVLCU,Rodeo21}, 
whereas they still assume the usage of many ancillary qubits or unconventional computational models.

 
Here, we propose a new paradigm for simulating general quantum cooling procedures on a quantum computer. For a general decaying function $g(h)$, such as $g = e^{-|h|}$ or $g=e^{-h^2}$, we show how to realize a cooling procedure $g(\tau H)$ for Hamiltonian $ H$ using solely real-time evolution and proper classical post-processing. The algorithm essentially implements general non-Hermitian dynamics, and it can efficiently find energy spectra and corresponding eigenstates, for any initial states that have nonvanishing overlaps with the target state. Compared to quantum phase estimation, our method only needs an exponentially shallower circuit with only one ancillary qubit and maintains the Heisenberg limit of the eigenenergy estimations when using proper cooling functions. Compared to variational approaches, our algorithm works deterministically without any assumption on parametrized circuits. \\



\emph{\textbf{Quantum algorithmic cooling.---}}
We start with a real-valued function $g(h)$ satisfying strictly non-increasing absolute value, $|g(h')|< |g(h)|,~\forall h'> h>0$ or $h'<h<0$~\footnote{The framework also works even if the function is not strictly non-increasing.}, and vanishing asymptotic value, 
$\lim_{\tau\rightarrow\infty}|g(\tau h')/g(\tau h)|=0,~\forall h'> h>0$ or $h'<h<0$ or
$\lim_{\tau\rightarrow\infty}|g(\tau h)/g(0)|=0,~\forall |h|> 0$. Consider a positive Hamiltonian $H=\sum_i E_i\ket{u_i}\bra{u_i}$ with eigenstates $\{\ket{u_i}\}$ and eigenenergies $E_i\ge 0$~\footnote{For any finite dimensional Hamiltonian, we can always shift it by a constant value to make it positive.}, it defines a cooling procedure $g(\tau H)$ with Hamiltonian $H$ and real-valued time $\tau$ for an arbitrary input state $\ket{\psi_0}=\sum_i c_i \ket{u_i}$ as
\begin{equation}\label{Eq:coolingdef}
	\ket{\psi(\tau)} = \frac{g(\tau H)\ket{\psi_0} }{\|g(\tau H)\ket{\psi_0} \|} = \frac{\sum_i g(\tau E_i)c_i\ket{u_i}}{\sum_i p_i g(\tau E_i)^2},
\end{equation}
with $p_i=|c_i|^2$. 
Since for larger eigenenergies the function $g(\tau E_i)$ decreases faster with $\tau$, the amplitudes of the normalized state $\ket{\psi(\tau)}$ concentrate to lower eigenstates, and the evolved state asymptotically approximates the lowest eigenstate $\ket{u_i}$ with nonzero $c_i$ for sufficiently large $\tau$. 


To realize the cooling function $g(h)$ with valid Fourier transforms, we consider the dual form of $g(h)$ via its Fourier transform $g(h) = \frac{1}{2\pi}\int^\infty_{-\infty} f(x) e^{ixh} \textrm dx$, with the inverse transform $f(x) = \int^\infty_{-\infty} g(h) e^{-ixh} \textrm dh$. Given the norm of $f(x)$ defined as $\|f\|=\int^\infty_{-\infty} |f(x)|\textrm dx$, we consider the normalized function $p(x)=|f(x)|/\|f\|$ and we have
\begin{equation} \label{eq:ghFourier}
	g(h) = c\int^\infty_{-\infty} e^{i\theta_x}p(x) e^{ixh} \textrm dx,
\end{equation}
where $c = {\|f\|}/{2\pi}$ and $e^{i\theta_x}=f(x)/|f(x)|$. 
The function $g(h)$ is \emph{realizable} if the following conditions hold.

\vspace{0.2 cm}
\noindent \emph{C1}: The normalization $\|f\|$ or $c$ is finite. 

\vspace{0.2 cm}

\noindent {\emph{C2}: $\big|1-\int^{L(\varepsilon)}_{-L(\varepsilon)} p(x) dx\big|\le \varepsilon, ~\forall \varepsilon\ge 0,~ \exists L(\varepsilon)=\mathcal O(\textrm{poly}(\frac{1}{\varepsilon}))$.}	

\vspace{0.2 cm}

\noindent The first condition implies a finite norm of $\|f\|$ and the second condition implies that a finite frequency in $[-L(\varepsilon), L(\varepsilon)]$ with $L(\varepsilon)=\mathcal O(\textrm{poly}(1/\varepsilon))$ is sufficient for approximating the normalized function $g(h)/c$, i.e., $\big|g(h)/c-\int^{L(\varepsilon)}_{-L(\varepsilon)} p(x) e^{ixh+\theta_x} \textrm dx\big|\le \varepsilon$.  In Table~\ref{Table:functinos}, we give several examples of $g(h)$.  Since our examples have zero phase, for a simpler presentation, we set the phase terms to be $0$ in the following discussion. Although, our following discussion naturally generalizes to the case with nonzero phases.  The detailed analysis of the cooling functions is given in Supplementary Materials. 

With the normalized dual phase representation $p(x)$ of $g(h)$, the cooling procedure of Eq.~\eqref{Eq:coolingdef} becomes
\begin{equation}\label{Eq:superposed}
	\ket{\psi(\tau)}  \propto \int^\infty_{-\infty} \textrm dp(x) \ket{\phi(x\tau)} ,
\end{equation}
which is now a superposition of real-time evolved states $\ket{\phi(x\tau)} = e^{ix\tau H}\ket{\psi_0}$ with amplitudes $\textrm dp(x) = p(x)\textrm dx$. The linear-combination-of-unitary algorithm~\cite{Long2011,LCU15,CVLCU} could be exploited to create the superposition.
However, creating such continuous-variable states is challenging with a digital quantum computer. Here, we propose an experimentally more feasible and efficient way to effectively realize the superposition.\\

\begin{table}[t]
\caption{Representative cooling functions $g(h)$. Here $f(x)$ corresponds to Fourier transform of $g(h)$, \emph{C1} refers to $c=\|f\|/2\pi$  which should be finite,  \emph{C2} refers to $L(\varepsilon)$ which should be less than $O(\textrm{poly}(\frac{1}{\varepsilon}))$, $\eta(\textrm{true/false})=1/0$, $\textrm{sinc}(x)=\frac{\sin (x)}{x}$, $\textrm{sech}(x)=\frac{2}{e^x+e^{-x}}$. Except for the rectangular function, the other four functions satisfy \emph{C1} and \emph{C2}.  }
\begin{tabular}{ccccc}
\hline
Type & $g(h)$ & $f(x)$ & \emph{C1} & \emph{C2}\\ \hline
Rectangular & $\eta(|h|\le 1/2)$ & $\textrm{sinc}(x/2\pi)$ & $\infty$ & $\infty$ \\
Triangle & $(1-|h|)\eta(|h|\le 1)$ & $\textrm{sinc}^2(x/2\pi)$ & $2\pi$&$6/\varepsilon$\\
Exponential & $e^{-|h|}$ & $\frac{2}{x^2+1}$ & $1$ & $2/(\pi\varepsilon)$ \\
Gaussian & $e^{-h^2}$ & $\sqrt{\pi}e^{-x^2/4}$ & $1$ & $2\sqrt{\ln(1/\varepsilon)}$ \\
Hyperbolic & $\textrm{sech}(h)$ & $\pi\textrm{sech}(\pi x/2)$ & $1$ & $\frac{2}{\pi}\ln(1/\varepsilon)$ \\
\hline
\end{tabular}
\label{Table:functinos}
\end{table}

Instead of preparing the quantum state of Eq.~\eqref{Eq:superposed}, we focus on the goal to obtain arbitrary observable expectation values.  
Specifically, we aim to measure any observable $O$ of the evolved state $\ket{\psi(\tau)}$, i.e., 
\begin{equation}\label{Eq:OND}
	\braket{O}_{\psi(\tau)} = \braket{\psi(\tau)|O|\psi(\tau)} = \frac{N_\tau(O) }{D_\tau},
\end{equation}
where $N_\tau(O)=\int^\infty_{-\infty}\int^\infty_{-\infty} \textrm dp(x,x') \braket{\phi(x'\tau)|O|\phi(x\tau )} $ and $D_\tau=\int^\infty_{-\infty}\int^\infty_{-\infty} \textrm dp(x,x') \braket{\phi(x'\tau)|\phi(x\tau )} $ with $\textrm dp(x,x') = p(x)p(x')\textrm dx\textrm dx'$. Here $D_\tau$ can be simplified as $D_\tau=\int^\infty_{-\infty} \textrm d\tilde p(y) \braket{\psi_0|e^{iy\tau H}|\psi_0} $ with $\textrm d\tilde p(y) = \tilde p(y)\textrm dy$ and $\tilde p(y) =\frac{1}{2}\int^\infty_{-\infty}p(\frac{z+y}{2})p(\frac{z-y}{2})\textrm dz$. 
For the numerator $N_\tau(O)$, we can efficiently obtain it by sampling the distribution $\textrm dp(x,x')$ and then estimating the mean value $\mathbb E_{x,x'} \braket{\phi(x'\tau)|O|\phi(x\tau )}$, where each term is realizable on a quantum computer with the Hadamard test circuit~\cite{supplementary}. We can similarly obtain the denominator $D_\tau$ by estimating $ \mathbb E_y \braket{\psi_0|e^{iy\tau H}|\psi_0}$ with probability $\textrm d\tilde p(y)$. Therefore, we only need a quantum computer to efficiently estimate the values of state overlaps like $\braket{\psi_0|U|\psi_0}$ with $U$ being either $e^{-ix'\tau H}Oe^{ix\tau H}$ or $e^{iy\tau H}$, and then we can effectively obtain the time-dependent expectation value of any observable by post-processing the measurement outcomes. In practice, we also consider a cutoff of the integral from $[-\infty,\infty]$ to $[-x_m,x_m]$ to avoid infinite integration. We show shortly the analysis on the circuit and sample complexity of the algorithm. \\

\emph{\textbf{Applications.---}}Here, we
consider applications of quantum algorithmic cooling for quantum systems with known and unknown eigenenergies. The detailed algorithms can be found in Supplementary Materials.

First, we consider Hamiltonians $H$ with a known ground state energy $E_0\ge 0$, corresponding to satisfactory problems~\cite{farhi2014quantum,Lukin20prx}, linear algebra tasks~\cite{bravo-prieto2019, XU2021, huang2019near}, quantum state compiling~\cite{QAQC,heya2018variational,jones2018quantum,sharma2019noise}, quantum error correction~\cite{johnson2017qvector,xu2019variational}, etc. 
For an initial state $\ket{\psi_0}=\sum_ic_i\ket{u_i}$ with a nonvanishing $p_0 =|\alpha_0|^2$, we can obtain the average of any observable of a well approximated the ground state $\ket{u_0}$ according to Eq.~\eqref{Eq:OND}.
However,  when $E_0>0$, the denominator
$
	D_\tau = {\braket{\psi_0|g(\tau H)^2|\psi_0}}/c^2
$,
may decrease exponentially with $\tau$, where $c=(2\pi/\|f\|)^2$ is a constant. For example, when $g(h)=e^{-|h|}$, we have $D_\tau=\sum_ip_ie^{-2\tau E_i} < p_0e^{-2\tau E_0}$ with $p_i=|c_i|^2$. In this case, the numerator also decreases exponentially with $\tau$ and we need an exponential number of samples to achieve a desired accuracy~\footnote{We note that the argument assumes a large $\tau$ independent of the measurement accuracy. However, when the gap between the ground and the first excited state is finite, as happens in satisfactory problems, linear algebra tasks, and quantum state compiling, we only need to choose $\tau=\mathcal O(\log(1/\varepsilon))$ as a logarithmic function of the state infidelity $\varepsilon$ (when $g(h)$ is exponential, Gaussian, or hyperbolic). Then the numerator $N_\tau(O)$ and denominator $D_\tau$ scales  as $\mathcal O(\textrm{poly}(1/\varepsilon))$ and the sample complexity will not be exponentially large.}. 
A simple strategy to circumvent the exponential decay is to shift the Hamiltonian by $-E_0$, so that the shifted ground state energy is 0, and  for large $\tau$, the denominator $D_\tau$ scales as $p_0$. Furthermore, it is not hard to see that for the valid cooling functions considered in Table~\ref{Table:functinos}, the algorithm works for any eigenstate $\ket{u_i}$ with known eigenenergy $E_i$, by shifting the Hamiltonian by $-E_i$. The algorithm is efficient as long as the probability $p_i$ is not too small. 

For the second task, we consider problems with unknown (inaccurate) eigenenergy, such as chemistry or condensed matter problems~\cite{cao2018quantum,mcardle2020quantum,Bauer2020,motta2021emerging}. 
Starting with a guess $E$ of an eigenenergy (for instance, obtained from classical methods), we shift the Hamiltonian by $-E$ and estimate the denominator $D_{\tau,\psi_0}(E)$  as
$
		D_{\tau,\psi_0}(E) = {\braket{\psi_0|g((H-E)\tau)^2|\psi_0}}/c^2
$.
Note that $D_{\tau,\psi_0}(E)$ is locally maximized to $p_i/c^2$ when $E=E_i$ is an eigenenergy of $H$, otherwise it exponentially decays with $\tau$. Then we can scan the denominator $D_{\tau,\psi_0}(E)$ with different trial energy $E$ and find the eigenenergies of $H$ via the peaks of $D_{\tau,\psi_0}(E)$. Furthermore, the denominator $D_{\tau,\psi_0}(E)$ is 
\begin{equation}
	D_{\tau,\psi_0}(E)=\int^\infty_{-\infty} \textrm d\tilde p(y) e^{-iy\tau E}\braket{\psi_0|e^{iy\tau H}|\psi_0}, 
\end{equation}
where $\braket{\psi_0|e^{iy\tau H}|\psi_0}$ is obtained from quantum computers, and the information of $E$ and the distribution $\textrm d\tilde p(y)$ only appears as the coefficient. Therefore $D_{\tau,\psi_0}(E)$ with different $E$ could be obtained from different classical post-processing of the same measurement outcome $\braket{\psi_0|e^{iy\tau H}|\psi_0}$, making the sample-efficiency of the algorithm independent of the value and the number of trial energies $E$. Due to the local monotonic feature of $g$, there also exist efficient classical algorithms, such as binary search algorithms, to find the peaks.

In the above discussion, we have assumed a `good' initial state that has a nonvanishing overlap with the target state, an assumption usually adopted in quantum phase estimation~\cite{nielsen2002quantum} or other non-variational ground-state preparation algorithms~\cite{albash2018adiabatic,Ge19,lin2020nearoptimalground}. Although this assumption is hard to promise for arbitrary problems, it is reasonable in many practical scenarios.
For example, we could just use a classical approximated solution~\cite{WFT1,Orus2019}, such as the Hartree-Fock state or a tensor network solution, or we can apply adiabatic state preparation to enhance the state overlap~\cite{yuan2020quantum,Sethground99,Alan05Science}. Besides, one can use shallow parametrized quantum circuits and apply variational optimization to search for an approximated solution~\cite{Sugurureview,Cerezo2021,bharti2021noisy}. Integrating these approximation algorithms with our quantum algorithmic cooling, it provides an efficient way to study eigenstates and eigenenergies of many-body quantum systems. 
\\

\emph{\textbf{Error and resource estimation.---}}When applying the cooling algorithm in practice, we have to take the error caused by finite circuit depth and finite sampling number into consideration. There are three dominant factors limiting the precision of our estimation~---~the finite imaginary time $\tau$, the cutoff $x_m$ of the integration, and the finite sampling number $N_M$ during the experiment. We summarize how those factors affect  observable and eigenenergy estimation and refer to Supplementary Materials for details.


First, we consider the error of observable estimation with a known target eigenenergy. 
Denote the ideal (estimated) measurements with infinite (finite) $\tau$, $x_m$, and $N_M$ as $\braket{O}$ ($\braket{\hat{O}}_{\tau}^{(x_m)}$), the effect of the three factors on observable estimation is given as follows. 
\begin{theorem}[Accuracy of the observable estimation] \label{thm:observable} 
Given constant $K>0$, error $\varepsilon\in (0,1)$, finite imaginary time $\tau\geq g^{-1}\left({\varepsilon p_j}/{6}\right)/{\Delta}$, normalized cutoff time $x_m\geq \sqrt{2} L\left({\varepsilon p_j}/{12}\right)$, and sample number $N_M \geq K/ ({\varepsilon p_j}/{6})^{2}$, the error between the expectation value estimation $\braket{\hat{O}}^{(x_m)}_\tau$ and the ideal expectation value $\braket{O}$ is bounded by
$
|\braket{\hat{O}}_{\tau}^{(x_m)} - \braket{O} | \leq \varepsilon \|O\|_1$
with a failure probability of $\delta=4 \exp(-K/8)$. Here $\|O\|_1=\sum_l |o_l|$ where $\{o_l\}_l$ are the Pauli coefficients $O=\sum_l o_l P_l$, $\Delta=\min\{|E_{j-1}-E_j|,|E_{j+1}-E_j|\}$, and $p_j$ is the overlap with the target eigenstate.
\end{theorem}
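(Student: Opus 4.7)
My plan is to split the total error via the triangle inequality into three contributions matching the three hypotheses on $\tau$, $x_m$, and $N_M$. Writing $\braket{O}_\tau$ for the idealized cooled expectation (infinite integration window, exact estimators) and $\braket{O}_\tau^{(x_m)}$ for the truncated but noiseless version, I decompose
\begin{equation*}
\bigl|\braket{\hat O}_\tau^{(x_m)} - \braket{O}\bigr| \le \bigl|\braket{O}_\tau - \braket{O}\bigr| + \bigl|\braket{O}_\tau^{(x_m)} - \braket{O}_\tau\bigr| + \bigl|\braket{\hat O}_\tau^{(x_m)} - \braket{O}_\tau^{(x_m)}\bigr|,
\end{equation*}
and I allocate a budget of roughly $\varepsilon\|O\|_1/3$ to each summand, which accounts for the common factor $\varepsilon p_j/6$ (up to numerical constants) appearing in all three thresholds.

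\textbf{Finite-$\tau$ contribution.} After shifting $H\mapsto H-E_j$, the cooled state reads $\ket{\psi(\tau)}\propto \sum_i c_i\, g(\tau(E_i-E_j))\ket{u_i}$. Strict monotonicity of $|g|$ gives $|g(\tau(E_i-E_j))|\le |g(\tau\Delta)|$ for every $i\ne j$, so expanding $\braket{O}_\tau=\braket{\psi(\tau)|O|\psi(\tau)}$ in the eigenbasis, separating the $i=j$ diagonal term and bounding each off-diagonal $\braket{u_i|O|u_k}$ by $\|O\|_1$, yields $|\braket{O}_\tau-\braket{u_j|O|u_j}|=\mathcal{O}\bigl(g(\tau\Delta)\|O\|_1/p_j\bigr)$. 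Requiring this to be at most $\varepsilon\|O\|_1/6$ gives $g(\tau\Delta)\le \varepsilon p_j/6$, i.e.\ $\tau\ge g^{-1}(\varepsilon p_j/6)/\Delta$, exactly as stated.

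\textbf{Cutoff and sampling contributions.} For the cutoff, condition \emph{C2} gives $\bigl|g(h)/c-\int_{-L(\varepsilon')}^{L(\varepsilon')} p(x)e^{ixh}\,dx\bigr|\le \varepsilon'$; since $D_\tau$ reduces to a single integral over $y$ weighted by the auto-convolution $\tilde p$, whose effective support is dilated by a factor $\sqrt{2}$ relative to $p$, the truncation $[-x_m,x_m]$ propagates to an error on both numerator and denominator controlled by $\varepsilon'=\varepsilon p_j/12$ as soon as $x_m\ge \sqrt{2}\,L(\varepsilon p_j/12)$. For the sampling step, the numerator and denominator in Eq.~\eqref{Eq:OND} are Monte Carlo averages of complex random variables with modulus bounded by $\|O\|_1$ and $1$ respectively; applying Hoeffding's inequality separately to the real and imaginary parts of each estimator with $N_M\ge K/(\varepsilon p_j/6)^2$ samples, and union-bounding over the four independent failure events, recovers the stated $\delta=4\exp(-K/8)$.

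\textbf{Main obstacle.} The tightest step is combining the three budgets, because the reported quantity is a ratio $\hat N/\hat D$ whose error is amplified when $|D|$ is small. To control this cleanly, I would first use the finite-$\tau$ and cutoff bounds to certify $|D_\tau^{(x_m)}|=\Theta(p_j/c^2)$ from below, and only then invoke the first-order identity
\begin{equation*}
\bigl|\hat N/\hat D - N/D\bigr|\le \frac{|\hat N-N|}{|D|}+\frac{|N|\,|\hat D-D|}{|D|^{2}},
\end{equation*}
which is precisely why the sampling threshold carries a $1/p_j^2$ rather than a $1/\varepsilon^2$ factor, and why each of the three intermediate tolerances is expressed in the rescaled form $\varepsilon p_j$ rather than $\varepsilon$ alone.
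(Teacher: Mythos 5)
Your proposal is correct and follows essentially the same route as the paper's supplementary proof: a three-way error split with per-source tolerances of order $\varepsilon p_j/6$, the bound $g(\tau\Delta)\le \varepsilon p_j/6$ from monotonicity of $g$ (the paper phrases this as $\|g(\tau(H-E_j))-\hat P_j\|_\infty\le \varepsilon_\tau/2$), the $\sqrt{2}$ dilation of the cutoff from the auto-convolution $\tilde p = p\star p$ in the denominator, Hoeffding with a union bound yielding $4\exp(-K/8)$, and the quotient bound $|\hat N/\hat D - N/D|\le |\hat N - N|/|D| + |N||\hat D - D|/|D|^2$ after certifying $|D|=\Theta(p_j)$, which is exactly why the paper also rescales every tolerance by $p_j$ and pays $p_j^{-2}$ in the sample complexity. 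The only differences are bookkeeping (the paper bounds $N_\tau$ and $D_\tau$ separately before forming the ratio, and its single-shot estimators carry an extra factor of $2$ in their range), not substance.
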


Theorem~\ref{thm:observable} states that the asymptotic time complexity of the cooling algorithm is determined by $g^{-1}(\varepsilon)L(\varepsilon)$ of the cooling function. The valid cooling functions in Table~\ref{Table:functinos} all satisfy $g^{-1}(\varepsilon)=\mc{O}(\log(1/\varepsilon))$. Therefore, we can choose $\tau$ to be $\mc{O}( \log(1/(p_j\varepsilon_\tau))\Delta^{-1})$, which is logarithmic to the inverse state overlap $1/p_j$ and inverse error $1/\varepsilon_\tau$, and linear to the inverse gap $\Delta^{-1}$.
For the cutoff $x_m$, we have shown that for the triangle and exponential cooling functions, we have $L(\varepsilon)=\mathcal{O}(\textrm{poly}({1}/{\varepsilon}))$, and for the Gaussian and secant hyperbolic cooling functions, we can achieve even better results $\mathcal{O}(\sqrt{\log({1}/{\varepsilon})})$ or $L(\varepsilon)=\mathcal{O}(\log({1}/{\varepsilon}))$. Take the Gaussian function as an example, we have $x_m = \mathcal O(\sqrt{\log(1/(\varepsilon p_j))})$. Note that the maximum real-time evolution  is given by $t_m = \tau x_m$. 
As a result, we have the following result. 

\begin{proposition}
The time (circuit) $t_m$ and sample $N_M$ complexity for the Gaussian cooling function is 
\begin{equation}
\begin{aligned}
t_m &\sim  \mc{O}\big( \Delta^{-1}\log(1/(\varepsilon p_j)) \big), \\
N_M &\sim \mathcal O(1/(\varepsilon p_j)^{2}).
\end{aligned}
\end{equation}
\end{proposition}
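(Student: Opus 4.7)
The plan is to apply Theorem~\ref{thm:observable} directly and then specialize the two functions $g^{-1}$ and $L$ appearing in the bounds on $\tau$ and $x_m$ to the Gaussian case, after which the product $t_m=\tau x_m$ collapses to a single logarithm. The sample complexity claim is immediate from the hypothesis $N_M\geq K/(\varepsilon p_j/6)^2$ in Theorem~\ref{thm:observable}, since this is independent of the choice of cooling function; treating $K$ as an $\mathcal{O}(1)$ constant (absorbing $\log(1/\delta)$ into it if desired), we read off $N_M=\mathcal{O}(1/(\varepsilon p_j)^2)$.

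For the time complexity, I would first write $t_m=\tau x_m$ and use the minimal admissible values $\tau=g^{-1}(\varepsilon p_j/6)/\Delta$ and $x_m=\sqrt{2}\,L(\varepsilon p_j/12)$. For $g(h)=e^{-h^2}$, inverting gives $g^{-1}(y)=\sqrt{\log(1/y)}$, so
\begin{equation}
\tau = \frac{1}{\Delta}\sqrt{\log\!\big(6/(\varepsilon p_j)\big)} = \mathcal{O}\!\left(\Delta^{-1}\sqrt{\log(1/(\varepsilon p_j))}\right).
\end{equation}
From Table~\ref{Table:functinos} the Gaussian satisfies $L(\varepsilon)=2\sqrt{\log(1/\varepsilon)}$, hence
\begin{equation}
x_m = \mathcal{O}\!\left(\sqrt{\log(1/(\varepsilon p_j))}\right).
\end{equation}

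Multiplying these two bounds yields
\begin{equation}
t_m = \tau x_m = \mathcal{O}\!\left(\Delta^{-1}\log(1/(\varepsilon p_j))\right),
\end{equation}
which is the claimed scaling. There is no real obstacle here beyond bookkeeping: the only mildly non-trivial observation is that the two $\sqrt{\log}$ factors, one coming from the tail of $g$ (and thus from how long an imaginary time $\tau$ is needed to suppress contaminating eigenstates) and the other from the Fourier tail of $g$ (and thus from how far out in frequency the real-time evolution must be sampled), combine to give a clean single-log dependence. I would end by noting that both pieces rely only on the quantitative entries for the Gaussian row of Table~\ref{Table:functinos} and on the hypothesis bounds of Theorem~\ref{thm:observable}; the argument thus factors through and can be repeated verbatim for any other entry of the table, giving the corresponding (slightly worse or slightly different) scalings.
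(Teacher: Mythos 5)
Your proposal is correct and matches the paper's own derivation: the paper likewise takes the minimal admissible $\tau \sim \Delta^{-1}g^{-1}(\varepsilon p_j)$ and $x_m \sim L(\varepsilon p_j)$ from the observable-estimation theorem, specializes to the Gaussian entries $g^{-1}(y)=\sqrt{\log(1/y)}$ and $L(\varepsilon)=2\sqrt{\log(1/\varepsilon)}$ so that the two $\sqrt{\log}$ factors in $t_m=\tau x_m$ combine into a single logarithm, and reads $N_M=\mathcal{O}(1/(\varepsilon p_j)^2)$ directly off the sampling hypothesis with $K=\mathcal{O}(1)$. Your closing remark that the argument factors through Table~I and applies verbatim to the other cooling functions is exactly how the paper treats the non-Gaussian cases as well.
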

\noindent The time or circuit complexity is logarithmic to $1/(p_j\varepsilon_\tau)$, which is exponentially better than that of quantum phase estimation or adiabatic state preparation, which is generally polynomial to $1/(p_j\varepsilon_\tau)$. The sample complexity is slightly worse than that of quantum phase estimation, which is $\mathcal O(1/( p_j\varepsilon^{2}))$. 
\\

Next, we consider the error of eigenenergy estimation. 
\begin{theorem}[Accuracy of the eigenenergy estimation] \label{thm:eigenenergy}
Given constant $K>0$, error $\varepsilon\in (0,1)$, finite imaginary time $\tau\geq {1}/{\kappa}$, normalized cutoff time $x_m\geq \sqrt{2}L\left({(1-g(1))}p_j/{4}\right)$, and sample number $N_M\geq {2K}p_j^{-2}/({1-g(1)})  $, the error between the estimated eigenenergy  $\hat{E}_j$ and the one ideal  $E_j$ is
$
|\hat{E}_j - E_j | \leq \kappa$, 
with a failure probability of $2\delta= 4\exp\left(-K/8 \right)$.
\end{theorem}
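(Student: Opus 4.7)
The plan is to treat the three error sources — finite cooling time $\tau$, frequency cutoff $x_m$, and Monte Carlo sample size $N_M$ — separately, bound each under the stated conditions, and then convert closeness of the estimator $\hat D_\tau^{(x_m)}$ to the ideal function $D_{\tau,\psi_0}(E) = c^{-2}\sum_i p_i\,g((E_i-E)\tau)^2$ into a resolution bound on the estimated peak location. The core structural fact is that $D_{\tau,\psi_0}$ has a local maximum at each eigenenergy $E_i$ of height $\approx p_i/c^2$, and the width of each peak is controlled by $1/\tau$, so any $E$ that is more than $\kappa$ away from every eigenenergy with $\tau\ge 1/\kappa$ must lie substantially below the peak value.

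First I would bound the two systematic errors via the integral representation $D_{\tau,\psi_0}(E)=\int d\tilde p(y)\,e^{-iy\tau E}\langle\psi_0|e^{iy\tau H}|\psi_0\rangle$. Truncating $|y|\le x_m$ introduces an error controlled by the tail mass of $\tilde p$; since $\tilde p$ is the self-convolution of $p$, the condition $x_m\ge\sqrt{2}\,L((1-g(1))p_j/4)$ pushes this systematic error to $O((1-g(1))p_j/c^2)$ via condition \emph{C2}. For the Monte Carlo step, because the integrand has modulus at most one, Hoeffding's inequality applied to real and imaginary parts yields $|\hat D_\tau^{(x_m)}(E) - D_\tau^{(x_m)}(E)|\le t$ with failure probability $\le 4\exp(-N_M t^2/2)$; the stated bound $N_M\ge 2Kp_j^{-2}/(1-g(1))$ with $t$ of order $p_j\sqrt{1-g(1)}$ produces the claimed $4\exp(-K/8)$ failure probability.

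Next I would use $\tau\ge 1/\kappa$ to convert estimator accuracy into eigenenergy resolution. For any $E$ with $|E-E_j|\ge\kappa$, monotonicity of $|g|$ and $|(E_j-E)\tau|\ge 1$ give $g((E_j-E)\tau)^2\le g(1)^2\le g(1)$, so isolating the diagonal $i=j$ term yields the separation $D_{\tau,\psi_0}(E_j)-D_{\tau,\psi_0}(E)\ge p_j(1-g(1))/c^2-R$, where $R$ collects the off-diagonal contributions from $i\ne j$. Restricting the peak search to a $\kappa$-neighbourhood of $E_j$ (as is natural in the binary-search implementation mentioned in the main text) makes $R$ negligible, so the right-hand side exceeds twice the total estimator error established above. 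A triangle-inequality argument then forces any local maximum $\hat E_j$ of $\hat D_\tau^{(x_m)}$ to lie within $\kappa$ of $E_j$, and a union bound over the truncation and sampling failures yields the overall probability $2\delta = 4\exp(-K/8)$.

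The main obstacle, just as in Theorem~\ref{thm:observable}, is the translation from uniform estimator closeness to argmax closeness: this step requires both a quantitative lower bound on the separation $D_{\tau,\psi_0}(E_j)-D_{\tau,\psi_0}(E)$ when $|E-E_j|\ge\kappa$ and, implicitly, a guarantee that the off-diagonal cross-peak contributions $R$ do not spoil this gap in the window being scanned. I would handle $R$ by exploiting the minimal spectral gap around $E_j$ together with the fast decay of $g$, restricting $E$ to a window smaller than that gap so that every $i\ne j$ term is evaluated at $|(E_i-E)\tau|\gg 1$ and contributes negligibly compared with the diagonal drop $p_j(1-g(1))/c^2$.
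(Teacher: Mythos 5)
Your proposal is correct and follows essentially the same route as the paper's proof: the same three-error decomposition (finite $\tau$ controlled through the spectral gap as in Proposition~\ref{prop:D_finitetau}, truncation through the tail of the self-convolution $\tilde p$ with the $\sqrt{2}$ factor as in Proposition~\ref{prop:D_finitexm}, and Hoeffding on the bounded single-shot Hadamard-test estimators as in Proposition~\ref{prop:D_finiteNM}), followed by the same conversion of estimator accuracy into peak resolution via monotonicity of $g$ and $\tau\geq 1/\kappa$ --- the paper phrases this last step as the direct chain $p_j \leq D_\tau(E_j)+p_j\varepsilon_\tau \leq \cdots \leq p_j\, g(\tau(\hat{E}_j-E_j))^2 + 2p_j(\varepsilon_\tau+\varepsilon_x+\varepsilon_n)$ starting from $\hat{D}^{(x_m)}_\tau(\hat{E}_j)\geq \hat{D}^{(x_m)}_\tau(E_j)$, which is exactly the contrapositive of your separation bound $D_\tau(E_j)-D_\tau(E)\geq p_j(1-g(1))-R$ with the same budget $\varepsilon_\tau+\varepsilon_x+\varepsilon_n\leq 1-g(1)$. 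One caution: your intermediate phrase ``restricting the peak search to a $\kappa$-neighbourhood of $E_j$'' would be circular if taken literally; the correct hypothesis --- which you do state in your closing paragraph and which the paper adopts --- is only that the scan window satisfies $[E^L_j,E^U_j]\subset[E_j-\frac{\Delta}{2},E_j+\frac{\Delta}{2}]$, so that every cross term is evaluated at $|E_i-E|\geq \frac{\Delta}{2}$ and $R$ is suppressed once $\tau\gtrsim \frac{2}{\Delta}g^{-1}\bigl(\frac{(1-g(1))p_j}{6}\bigr)$, an extra requirement on $\tau$ that the main-text statement absorbs into $\tau\geq 1/\kappa$ under the implicit assumption $\kappa\ll \Delta/2$.
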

\noindent Note that since only the imaginary time $\tau$ depends on the inverse accuracy $1/\kappa$, the total cost $\tau\cdot x_m\cdot N_M$ (circuit complexity $\times$ sample complexity) also scales as $1/\kappa$, indicating Heisenberg's limit for eigenenergy estimation, {similar to a former result by Lin and Tong~\cite{lin2021heisenberglimited}}.
We refer to Table I and Table II in \cite{supplementary} for the complexity comparison in the eigenenergy estimation and observable estimation, respectively.
\\

\begin{figure}[t]
\centering
\includegraphics[width =0.87\linewidth]
{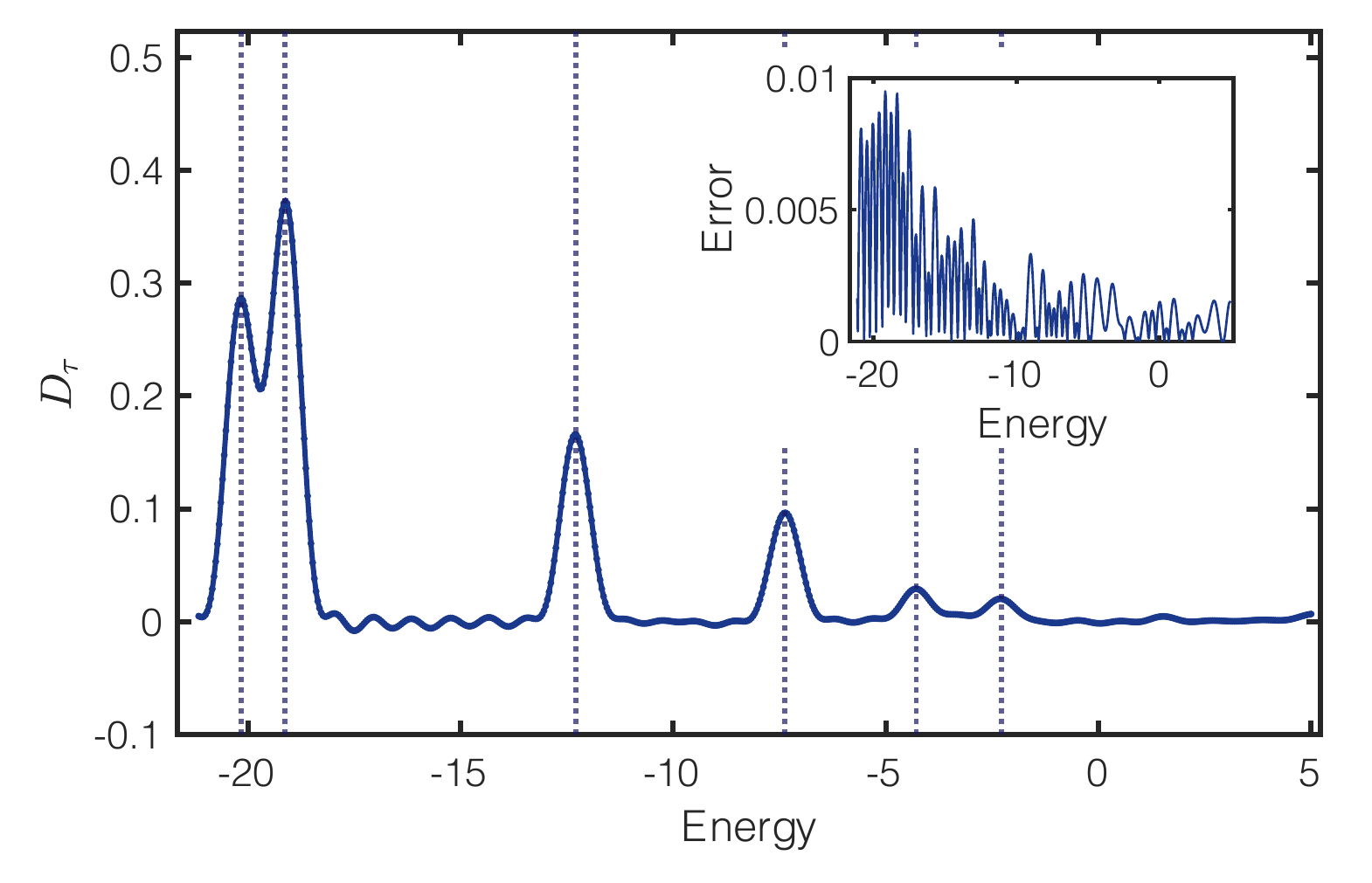}
 \caption{Energy spectra search of the $8$-site anisotropic Heisenberg model using the Gaussian cooling function. The solid line shows the denominator of $D_{\tau}$ with $\tau = 1.7$ and cutoff $x_m=4.4$.
 The dashed line shows the exact eigenenergy of the Heisenberg model. The figure inset shows the error of $D_{\tau}$ with respect to that under the exact cooling. We set $N_s = 10^5$ in the Monte Carlo sampling of the integral, and we use the expectation value for each sample that ignores measurement shot noise.
 }
\label{fig:spectra}
\end{figure}

\emph{\textbf{Numerical test.---}}Here we show numerical implementation of the  algorithmic cooling method for the anisotropic Heisenberg Hamiltonian,
$H = J\sum_i \left(\sigma_i^x \sigma_{i+1}^x + \sigma_i^y \sigma_{i+1}^y + 2 \sigma_i^z \sigma_{i+1}^z\right)  + h\sum_i \sigma_i^z$, where $J=1$ is the exchange coupling, $\sigma_i^{\alpha}$ ($\alpha=x,y,z$) is the Pauli operator on the $i$th
site, $h=1$ is strength of a uniform magnetic field in the $z$ direction, and we impose periodic boundary condition. We consider the initial state in the computation basis as $|01010101\rangle$, which is close to the ground state and has nonzero overlap with a relatively small number of low-energy eigenstates.
We first consider determining energy spectra 
by searching for the peaks of $D_{\tau}(E)$ with the  Gaussian cooling function. Using a finite imaginary time $\tau =1.7$ and cutoff $x_m=4.4$ according to $\tau,\, x_m \propto \sqrt{2\log(1/\epsilon)}$ with error $\varepsilon$, and $10^5$ number of samples for the integral, we show the energy spectra in Fig.~\ref{fig:spectra}. The maximum error introduced is below $0.01$, which aligns with our error analysis. We further show the searching process with different time $\tau$ in Ref.~\cite{supplementary}.

\begin{figure}[t]
\centering
\includegraphics[width =1.05\linewidth]
{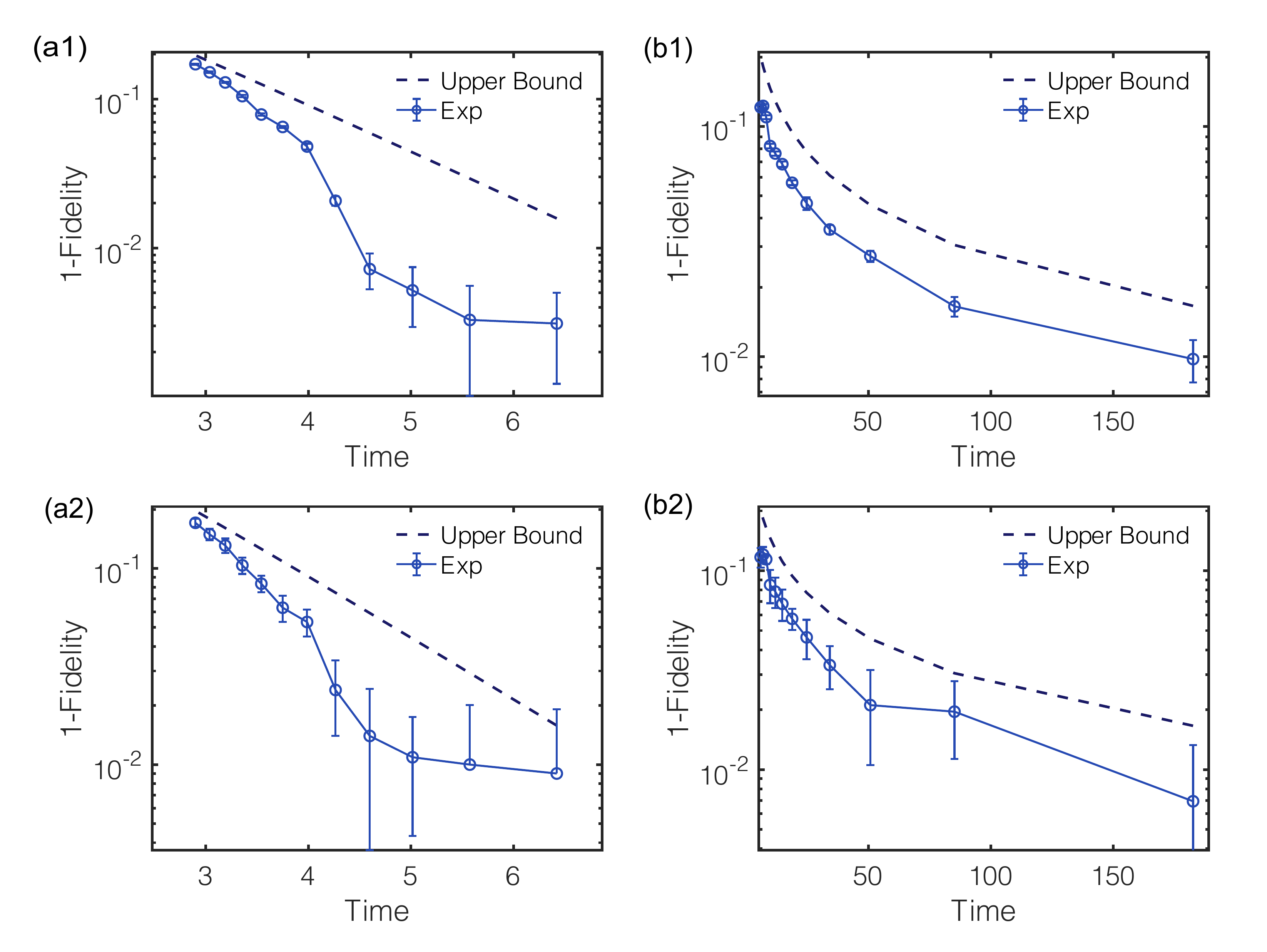}
 \caption{Error dependence of the total evolution time with (a1-a2) Gaussian  and (b1-b2) exponential cooling functions.  
In (a1) and (b1), we adjust both the imaginary time $\tau$ and the cutoff $x_m$, and study the infidelity versus maximal total time $t_m=\tau x_m$. We ignore the measurement shot noise in (a1,  b1).
In (a2) and (b2), we further consider measurement shot noise, which introduces more instability of the infidelity. Nevertheless, we still observe the exponential and {linear inverse} decay of the infidelity for the Gaussian and exponential functions, respectively. 
The errorbar is the standard deviation of $D_{\tau}$ error over 10 independent repetitions of the entire setup.
 We use $10^5$ samples in the Monte Carlo sampling of the integral.
Dashed line: theoretical upper bound of the infidelity assuming infinite sample;
Dots with error bar: realistic infidelity with a constant number of samples.
}
\label{fig:scale}
\end{figure}


Next, we show the error dependence of the eigenstate preparation with a special focus on the  time or circuit complexity. 
Suppose we aim to find the second eigenstate $\ket{u_2}$, which has the largest overlap with the initial state. Given the associated eigenenergy $E_2$ found by the above method, we now analyze the error introduced from finite imaginary time and cutoff.
Here, we focus on the state infidelity of the normalized state after cooling and the target ideal eigenstate, which can be expressed as $\varepsilon = 1-\langle O \rangle_{\tau}$ with $O = \ket{u_2}\bra{u_2}$. 
We show the error dependence on the imaginary time $\tau$ and the maximal total evolution time $t_m = \tau x_m$ with Gaussian and exponential cooling functions in Fig.~\ref{fig:scale}(a1-a2) and (b1-b2), respectively. We plot the theoretical upper bound of the infidelity in the dashed line in the figures (assuming finite $\tau$, infinite $x_m$, and infinite samples $N_M$) and realistic infidelity with a constant resources using dots.
In Fig~\ref{fig:scale}(a1) and (b1), we consider finite $\tau$ and $x_m$ with $10^5$ samples for calculating the integral, but ignore measurement errors. We see the exponential scaling with respect to the total evolution time $t_m$ for the Gaussian cooling function, while it asymptotically becomes  $1/\varepsilon$ for the exponential cooling function.
In Fig~\ref{fig:scale}(a2) and (b2), we further consider single-shot measurement outcomes again with a total of $10^5$ samples and show the error dependence accordingly. \\




\emph{\textbf{Conclusion \& Discussion.---}}We propose a universal and economic quantum algorithm to implement general cooling procedures. The algorithm is applicable to estimate eigenenergies and prepare eigenstates of problems with known and unknown eigenenergies when the overlap with the initial state is nonvanishing. In practice, we can use adiabatic state preparation or variational quantum eigensolver to prepare a good initial state. Our method is more competitive to quantum phase estimation or bare variational algorithms when the circuit depth and qubits are limited.


Another potential application is to study system properties under the finite-temperature condition. Given the ability of implementing general cooling function or preparing eigenstates, one may combine existing technique, such as Metropolis sampling~\cite{lu2021algorithms}, to effectively prepare thermal states and study finite-temperature physics. 
Besides, the algorithm may also help realize a general Markovian quantum process, which is described by a non-Hermitian Hamiltonian~\cite{PhysRevLett.125.010501}. We can decompose the Hamiltonian to the real and imaginary part, and further expand the imaginary part using Fourier transform. In this way, we could simulate a general non-Hermitian dynamic process.

In this work, we mainly explore the power of continuous Fourier transform in Eq.~\eqref{eq:ghFourier} to construct the cooling function $g(h)$. We remark that the Fourier transform is a universal tool to construct any non-unitary process else than cooling. For example, one can create a ``two-peak'' cooling function to prepare a cat-like state, i.e., the superposition of two remote eigenstates. Meanwhile, we can explore different linear transforms of continuous or discrete functions, such as discrete Fourier transform and continuous wavelet transform, to study their possible usage in realizing even more general non-unitary processes. Several related recent ideas include Refs.~\cite{Ge19,lu2021algorithms,CVLCU,Rodeo21,lin2021heisenberglimited}. Comparison and combination of these ideas are also an interesting future work.\\

\begin{acknowledgments}

This work is supported by the National Natural Science Foundation of China Grant No.~12175003.
The numerics is supported by High-performance Computing Platform of Peking University.

\emph{Note added.---}Recently, we became aware of a relevant work by \textcite{huo2021shallow}, who considered a specific cooling function using the Fourier transform of the Lorentz-Gaussian function. They studied its application for realizing imaginary time evolution and discussed its combination with error mitigation for  implementation with near-term quantum devices.
\end{acknowledgments}


%

\clearpage
\newpage
\widetext

\section*{Supplementary Material for ``Universal algorithmic quantum cooling on a quantum computer''}
\maketitle

{
  \hypersetup{linkcolor=black}
  \tableofcontents
}

\section{Related works} \label{Sec:relatedworks}

In this section, we review the related ground-state preparation and eigenstate preparation works and compare their performances.

Variational quantum eigensolver is a major type of the ground-state preparation method~\cite{peruzzo2014variational,McClean2016theory}. In the variational methods, people try to search the ground state of a Hamiltonian in a pre-determined ansatz, usually characterized by parametric quantum circuits. They then adjust and optimize the parameters in the quantum circuit based on the measured value of energy. 
The quantum approximate optimization algorithm~\cite{farhi2014quantum} is another type of variational methods aiming to solve the combinatorial optimization problems based on an variational evolution combined with adiabatic evolution.
The variational methods are suitable for the near-term quantum computers, since it usually requires no ancillary qubit and shallow circuits. The weakness of the variational methods is that the effectiveness always depends on the choice of the ansatz, whose validity varies for different Hamiltonian problems.

Adiabatic state preparation~\cite{albash2018adiabatic} is an experimentally-fridently non-variational ground-state method based on a time-dependent Hamiltonian evolution. In the adiabatic state preparation, people first prepare the ground state of a simple Hamiltonian $H_0$ and then slowly evolve it under a Hamiltonian that changes slowly from $H_0$ to the target Hamiltonian $H$. Based on the adiabatic theorem, the resulting state is then close the the ground state. Unlike the variational algorithms, the adiabatic algorithms are universal and valid without ansatz assumptions. When it comes to the practical usage, the adiabatic algorithm has two downsides. First, the requried evolution time $t$ depends inverse polynomially to the minimum spectral gap along the entire path from $H_0$ to $H$. Second, the required evolution time $t$ is $\mc{O}(1/\varepsilon)$, where $\varepsilon$ is the infidelity of the target state. 

Phase estimation~\cite{kitaev1995quantum,nielsen2002quantum} is another commonly-used non-variational method which prepare the eigenstate of a given unitary $U$ based on the controlled-$U$ evolution and complementary-basis measurement on the ancillary qubits. In the canonical phase estimation algorithm~\cite{nielsen2002quantum}, in order to prepare the eigenstate of the Hamiltonian $H$, people introduce $\omega$ qubits prepared on the state $\ket{+}$, perform controlled-$e^{-2\pi i H t}$ gate sequentially from one ancillary qubit to the original system with different time $t$, perform inversed Fourier transform on the ancillary qubits and then measure them. The required number of qubit $\omega$ is logarithmic to the inverse of the eigenenergy precision. The circuit depth reflected by the summation of the controlled evolution time $t$, is $\mc{O}(1/\varepsilon)$, where $\varepsilon$ is the infidelity of the eigenstate preparation. The phase estimation can be improved to an iterative version with only one ancillay qubit~\cite{kitaev1995quantum}. However, similar to the adiabatic methods, the required circuit depth of the phase estimation is fundamentally $\mc{O}(1/\varepsilon)$, which is unfavorable if we want to achieve a high-precision preparation.

To further improve the efficiency of the above two nonvariational methods, the linear-combination-of-unitary (LCU) methods have been proposed~\cite{Ge19,lu2021algorithms}. In Ref.~\cite{Ge19}, the authors use $\cos^M(H)$ as a spectral projector to the ground state. They decompose the Hamiltonian function $\cos^M(H)$ based on a LCU formula~\cite{childs2012Hamiltonian,berry2015simulating}, and realize the function $\cos^M(H)$ using $\mc{O}(\log(\frac{1}{\Delta}\log\frac{1}{\varepsilon}))$ ancillar qubits and amplitude amplification. Here, $\Delta$ is the known lower bound of the Hamiltonian energy gap and. Later on, by introducing a block-encoding method~\cite{berry2015Hamiltonian} and a quantum-signal-processing technique~\cite{low2017optimal} to realize a polynomial approximation of the sign function, Lin and Tong~\cite{lin2020nearoptimalground} further reduce the number of required ancillary qubit, making it independent of the precision requirement. The query complexity of these methods are $\mc{O}(\frac{1}{\Delta p_0}\log(\frac{1}{\varepsilon p_0})$, which is exponentially improved comparing to the phase estimation algorithm. The difficulties of realizing these LCU or qubitization-based methods in the near-term is that, they both require many ancillary resources for LCU or block-encoding; furthermore, the realization of the select-$H$ oracles in the LCU or qubitization approach is complicated, especially for the case of non-sparse Hamiltonian $H$ where the number of components in $H$ is large.

In all the above non-variantional ground-state preparation methods, the efficiency relies on the following two assumptions. We remark that, the adiabatic state preparation methods also rely on a function of the state overlap $p_0$ which depends on the adiabatic path connecting $H_0$ and $H$. 

\begin{assumption}[nonvanishing state overlap assumption~\cite{albash2018adiabatic,Ge19,lin2020nearoptimalground}] \label{ass:overlap}
For an $n$-qubit gapped Hamiltonian $H$ with the ground state $\ket{u_0}$, we assume that it is feasible to prepare an initial state $\ket{\psi_0}$ satisfying a nonvanishing overlap of the target eigenstate $p_0 := |\braket{\psi_0|u_0}|^2$. We assume that the lower bound of the overlap satisfies  
$ 
p_0 \geq \Omega (\rm{Poly} (\frac{1}{n}) ).
$
\end{assumption}

\begin{assumption}[Gapped Hamiltonian assumption~\cite{albash2018adiabatic,Ge19,lin2020nearoptimalground}] \label{ass:gap}
For a $n$-qubit gapped Hamiltonian $H$ with the ground state energy $E_0$, we assume that the energy gap $E_1-E_0$ is lower bounded by a known value $\Delta$.
\end{assumption}

Assumption~\ref{ass:overlap} is reasonable in many practical scenarios, since even in many strongly-correlated quantum systems or  chemical molecules Hamiltonians, the mean-field solution, like Hartree-Fock state, still has a considerable overlap with the ground state. In practice, people usually apply the variational methods~\cite{mcardle2020quantum} or adiabatic methods~\cite{oh2008quantum} as a heuristic way to prepare an initial state with a large overlap $p_0$ with the ground state. 
Assumption~\ref{ass:gap} is also of practical relevance for the study of quantum many-body systems. For example, the ferromagnetic XXZ Heisenberg chain $H=\sum_i \sigma_{i}^{x}\sigma_{i+1}^{x}+\sigma_{i}^{y}\sigma_{i+1}^{y}+ \Delta \sigma_{i}^{z}\sigma_{i+1}^{z}$ with $J<0$ has a finite energy gap if  $\Delta > 1$~\cite{koma1997spectral}.

In this work, we propose a quantum algorithm to estimate the ground state energy as well as any ground-state properties with the two assumptions above. Our algorithm has the following advantages,
\begin{enumerate}
\item We only need one ancillary qubit in the whole algorithm. Furthermore, we do not need any quantum oracle, like select-$H$ for LCU or qubitization methods, which could be difficult to be implemented.
\item The time complexity for the ground-state energy estimation reaches the Heisenberg limit.
\item The time complexity for the ground-state property estimation is logarithmic to the inverse of the precision requirement when we adopt the Gaussian cooling function.
\end{enumerate}
To our knowledge this is the first algorithm to achieve all the advantages above. 
In Table~\ref{table:EnergyComp} and Table~\ref{table:ObservableComp}, we compare our method with several typical quantum algorithms in the two primary tasks considered in this work, i.e., energy estimation and observable estimation. Here, we mainly consider the algorithms without the usage of oracles for the block-encoding of a Hamiltonian.
In the Table, we compare the maximal evolution time that characterizes the complexity of quantum circuits that realizes Hamiltonian simulation $e^{-iHt}$, the number of repetitions required to run the quantum circuit, and the total evolution time needed for estimating the ground state energy to within error $\epsilon$. The complexity of estimating properties of excited states  can be analysed in a similar fasion. Note that we simply compare with the conventional phase estimation method.
Here we also use the asymptotic notations besides the usual $\mathcal{O}$ notation 
to denote the complexity up to a polylogorithmic factors, similarly to that in Ref.~\cite{Ge19}.

\begin{table}[h!]
\centering
\begin{tabular}{c c c c c} 
 \hline
\hline
Estimation               & Max time                & Repetition           & Extra qubits        & Total time complexity\\
\hline
This work               & $\mathcal{O}(\mathrm{log}(p_0^{-1})\epsilon^{-1})$ &$\mathcal{O}(p_0^{-2})$   & 1        &$ {\mathcal{O}}(\mathrm{log}(p_0^{-1}) p_0^{-2}\epsilon^{-1})$           \\
Lin \& Tong 21   \cite{lin2021heisenberglimited}         & $\mathcal{O}(\mathrm{log}(p_0^{-1}\epsilon^{-1})\epsilon^{-1})$ &$\mathcal{O}(p_0^{-2})$   & 1        &$ {\mathcal{O}}(\mathrm{log}(p_0^{-1}\epsilon^{-1})p_0^{-2}\epsilon^{-1})$           \\
Time series     \cite{somma2019quantum}      & $\mathcal{O}(\mathrm{Polylog}(p_0^{-1}\epsilon^{-1})\epsilon^{-1})$ &$\mathcal{O}(\epsilon^{-3}p_0^{-2})$ & 1        &$\tilde {\mathcal{O}}(p_0^{-2}\epsilon^{-4})$           \\
Phase estimation              & $\tilde {\mathcal{O}}( p_0^{-1}\epsilon^{-1})$ &$\mathcal{O}(p_0^{-1})$   & $\log(\epsilon^{-1})+\log (p_0^{-1})$     &$\tilde {\mathcal{O}}(p_0^{-2}\epsilon^{-1})$           \\
Projection    \cite{Ge19}          & $\tilde {\mathcal{O}}( p_0^{-1/2} \epsilon^{-{3}/{2}})$ &$\tilde{\mathcal{O}}( p_0^{-1/2})$   & $\log(\epsilon^{-1}) $    &$\tilde {\mathcal{O}}(p_0^{-1}\epsilon^{-3/2})$           \\
\hline
\hline
\end{tabular}
\caption{Comparison of ground state energy estimation. The result of this work  in the table is based on Theorem~\ref{thm:eigenenergy} using Gaussian cooling function. We mainly compare with the methods without the block encoding of Hamiltonian.}
\label{table:EnergyComp}
\end{table}

\begin{table}[h!]
\centering
\begin{tabular}{c c c c c} 
\hline
\hline
Estimation               & Max time                & Repetition           & Extra qubits        & Total time complexity\\
\hline
This work               & $\mathcal{O}(\mathrm{log}(p_0^{-1}\epsilon^{-1})\Delta^{-1})$ &$\mathcal{O}(p_0^{-2}\epsilon^{-2})$   & 1        &$\tilde {\mathcal{O}}(p_0^{-2}\epsilon^{-2}\Delta^{-1})$           \\
Phase estimation  (known $E_0$)         & $\tilde {\mathcal{O}}( p_0^{-1}\epsilon^{-1}\Delta^{-1})$ &$\mathcal{O}(p_0^{-1}\epsilon^{-2})$   & $\log(\epsilon^{-1})+\log (\Delta^{-1} )$    &$\tilde {\mathcal{O}}(p_0^{-2}\epsilon^{-3}\Delta^{-1})$           \\
Projection  (known $E_0$)   \cite{Ge19}       & $\tilde {\mathcal{O}}( p_0^{-1/2} \Delta^{-1})$ &$\tilde {\mathcal{O}}(p_0^{-1/2}\epsilon^{-2})$   & $\log\log(\epsilon^{-1})+\log(\Delta^{-1})$     &$\tilde {\mathcal{O}}(p_0^{-1}\epsilon^{-2}\Delta^{-1})$           \\
Projection  (Unknown $E_0$)   \cite{Ge19}       & $\tilde {\mathcal{O}}( p_0^{-1/2} \Delta^{-{3}/{2}})$ &$\tilde {\mathcal{O}}(p_0^{-1/2}\epsilon^{-2}\Delta^{-{1}/{2}})$   & $\log\log(\epsilon^{-1})+\log(\Delta^{-1})$     &$\tilde {\mathcal{O}}(p_0^{-1}\epsilon^{-2}\Delta^{-{2}})$           \\
\hline
\hline
\end{tabular}
\caption{Comparison of observable estimation on the ground state. The result of this work in the table is based on Theorem~\ref{thm:observable} using Gaussian cooling function. We mainly compare with the methods without the block encoding of Hamiltonian.
}
\label{table:ObservableComp}
\end{table}

We remark that, one can generalize Assumption~\ref{ass:overlap} and Assumption~\ref{ass:gap} to the case of the estimation of $j$-th eigenenergy $E_j$ and the properties of $j$-th eigenstate $\ket{u_j}$: we assume the initial state has a large overlap $p_j$ with the $j$-th eigenstate of the target Hamiltonian $H$; we also assume a known lower bound $\Delta$ on the energy gap $\min\{|E_j-E_{j-1}|, |E_{j}-E_{j+1}|\}$. In this case, our proposed algorithm can then be used for eigenenergy and eigenstate property estimation.

\section{Cooling functions and the dual phase representations} \label{Sec:cooling_func}

In this work, we consider an $n$-qubit system with a gapped Hamiltonian $H$. The eigenstate $\ket{u_i}$ and the corresponding eigenenergy $E_i$ of the Hamiltonian satisfy,
\begin{equation}
    H \ket{u_i} = E_i \ket{u_i}, \quad i=0,1,...,N-1.
\end{equation}
Here, $N:= 2^n$. 

We define the cooling function $g(h)$ as follows.
\begin{definition} \label{Def:coolingfunc}
A real-valued function $g(h):\mbb{R}\to\mbb{R}$ is called a cooling function if it satisfies,
\begin{enumerate}
    \item The value of $g(0)$ is non-zero: $g(0)\neq 0$;
    \item The absolute value of $g(h)$ is a single peak shape: $g(h') \leq g(h)$, $\forall h'>h>0$ and $\forall h'<h<0$;
    \item Th asymptotic value of $h\neq 0$ is vanishing: $\lim_{\tau\to\infty} \left|\frac{g(\tau h)}{g(0)}\right|=0$, $\forall h\neq 0$.
\end{enumerate}
\end{definition}

For a given $n$-qubit Hamiltonian, suppose we want to prepare the $j$-th eigenstate $\ket{u_j}$ with eigenenergy $E_j$. We define a cooling operator as follows,
\begin{equation} \label{eq:g_HEj}
    g( \tau(H-E_j) ) := \sum_{i=0}^{N-1} g(\tau (E_i - E_j)) \ket{u_i}\bra{u_i}.
\end{equation}
Then, for any given initial state $\ket{\psi_0}$ with $|\braket{\psi_0|u_j}|^2 \neq 0$, the cooling operator evolve $\ket{\psi_0}$ to the eigenstate $\ket{u_j}$,
\begin{equation}
    \lim_{\tau\to\infty} g( \tau(H-E_j) ) \ket{\psi_0} \propto \ket{u_j}.
\end{equation}

The cooling operator is usually a nonphysical operation. To realize it, we consider its dual realization based on a Fourier transform,
\begin{equation} \label{eq:FourierTrans}
\begin{aligned}
    f(x) &= \int_{-\infty}^{\infty} g(h) \, e^{-ixh} dh, \\
    g(h) &= \frac{1}{2\pi} \int_{-\infty}^{\infty} f(x) \, e^{ixh} dx.
\end{aligned}
\end{equation}

We define the norm of the dual function $f(x)$ as
\begin{equation}
        \|f\| := \int_{-\infty}^{\infty} |f(x)| dx.
\end{equation}
Suppose $\|f\|$ is finite, we can decompose $f(x)$ to a normalized probability distribution with an extra phase,
\begin{equation} \label{eq:fxtopx}
    f(x) = |f(x)| e^{i\phi(x)} = \|f\| \, p(x) e^{i\phi(x)}.
\end{equation}
The Fourier transform in Eq.~\eqref{eq:FourierTrans} can then be re-written as
\begin{equation} \label{eq:ghpx}
\begin{aligned}
    p(x) &= \frac{1}{\|f\|} \int_{-\infty}^{\infty} g(h) \, e^{-i\phi(x)} e^{-ixh} dh, \\
    g(h) &=  \frac{\|f\|}{2\pi} \int_{-\infty}^{\infty} p(x) \, e^{i\phi(x)} e^{ixh} dx.
\end{aligned}
\end{equation}

Based on Eq.~\eqref{eq:ghpx}, the cooling operator $g(\tau(H-E_j))$ in Eq.~\eqref{eq:g_HEj} can be expanded as,
\begin{equation} \label{eq:coolingop_exp}
\begin{aligned}
    g(\tau (H-E_j)) &= \frac{\|f\|}{2\pi} \int_{-\infty}^{\infty} p(x) e^{i\phi(x)} e^{ix \tau(H-E_j)} dx, \\
    &= \frac{\|f\|}{2\pi} \int_{-\infty}^{\infty} p(x) e^{i(\phi(x) - \tau x E_j)} e^{i \tau x H} dx.
\end{aligned}
\end{equation}
Therefore, to realize a quantum cooling operator $g(\tau(H-E_j))$, we can expand it to a weighted superposition of the unitary operators $e^{itH}$ with different evolution time $t=\tau x$ ranging from $-\infty$ to $\infty$. 

In practice, the evolution time is a limited number related to the depth of the quantum circuit. We define the truncated cooling function as,
\begin{equation}
    \tilde{g}(h;x_m) = \frac{\|f\|}{2\pi} \int_{-x_m}^{x_m} p(x) \, e^{i\phi(x)} e^{ixh} dx.
\end{equation}

The difference of the $\tilde{g}(h;x_m)$ from $g(h)$ is bounded by
\begin{equation}
\begin{aligned}
    |g(h) - \tilde{g}(h;x_m)| &= \frac{\|f\|}{2\pi} \left| \int_{-\infty}^{-x_m} p(x) \, e^{i\phi(x)} e^{ixh} dx + \int_{x_m}^{\infty} p(x) \, e^{i\phi(x)} e^{ixh} dx \right| \\
    &\leq \frac{\|f\|}{2\pi} \int_{-\infty}^{-x_m} | p(x) \, e^{i\phi(x)} e^{ixh}| dx + \int_{x_m}^{\infty} |p(x) \, e^{i\phi(x)} e^{ixh}| dx \\
    &= \frac{\|f\|}{2\pi} \left( 1 - \int_{-x_m}^{x_m} p(x)dx \right).
\end{aligned}
\end{equation}
Later in Sec.~\ref{Sec:cooling_error_complexity} we will show that, the estimation error caused by the finite evolution time will be bounded by the tail probability of $p(x)$. 

\begin{definition} \label{Def:realizable}
We say a cooling function $g(h)$ in Definition~\ref{Def:coolingfunc} is \textit{realizable} if the following requirements hold. 
\begin{enumerate}[label=C\arabic*.,start=0]
    \item The Fourier transform of $g(h)$ exists,
    \begin{equation}
    f(x) = \int_{-\infty}^{\infty} g(h) e^{-ixh} dh.
    \end{equation}
    \item The norm of the dual function $\|f\|$
    \begin{equation}
        \|f\| := \int_{-\infty}^{\infty} |f(x)| dx,
    \end{equation}
    is finite. 
    \item $\big|1-\int^{L(\varepsilon)}_{-L(\varepsilon)} p(x) dx \big|\le \varepsilon, ~\forall \varepsilon\ge 0,~ \exists L(\varepsilon)=\mathcal{O}(\mathrm{poly}(\frac{1}{\varepsilon}))$.
\end{enumerate}
\end{definition}

Now, we introduce several typical cooling functions $g(h)$ and dicuss their properties. We will focus on the cutoff cost $L(\varepsilon)$ defined in Definition~\ref{Def:realizable} which shows the hardness of realizing $g(h)$ with real-time sampling.


\subsection{Rectangular function}

The rectangular function $\mr{rect}(h)$ is defined as,
\begin{equation}
\mr{rect}(h) = 
\begin{cases}
1, & |h|\leq \frac{1}{2}, \\
0, & |h| > \frac{1}{2}.
\end{cases}
\end{equation}
It satisties Definition~\ref{Def:coolingfunc} and hence a cooling function. When acting as a cooling function $g(\tau(H-E_j))$, the rectangular function is an ideal ``energy band filter'', as it projects the state to the energy subspace in the range $[E_j-\frac{1}{2\tau}, E_j+\frac{1}{2\tau}]$. 

The Fourier transform of $\mr{rect}(h)$ is the Sinc function,
\begin{equation}
    f(x) = \sinc\left(\frac{x}{2\pi}\right),
\end{equation}
where $\sinc(x):= \frac{\sin x}{x}$.

Unfortunately, the norm of $f(x)$ is infinite,
\begin{equation}
\begin{aligned}
\|f\| &= \int_{-\infty}^{\infty} |\sinc\left( \frac{x}{2\pi} \right)| \\
&= 4\pi \int_0^{\infty} |\sinc y| dy \\
&= 4\pi \sum_{k=0}^{\infty} \int_{k\pi}^{(k+1)\pi} |\frac{\sin y}{y}| dy \\
&> 4\pi \sum_{k=0}^{\infty} \int_{k\pi}^{(k+1)\pi} \frac{|\sin y|}{(k+1)\pi} dy \\
&= 4\pi \sum_{k=0}^{\infty} \frac{1}{k+1}.
\end{aligned}
\end{equation}
From the divergence of $\sum_{k=0}^{\infty} \frac{1}{k+1}$ we know that $\|f\|=\infty$. As a result, the rectangular function is not a realizable cooling function based on Definition~\ref{Def:realizable}.

\subsection{Triangular function}

The triangular function $\mr{tri}(h)$ is defined as,
\begin{equation}
\mr{tri}(h) = 
\begin{cases}
1 - |h|, & |h|\leq 1, \\
0, & |h| > 1.
\end{cases}
\end{equation}
It satisties Definition~\ref{Def:coolingfunc} and hence a cooling function. Similar to the rectangular function, it can be used as an ``energy band filter'', since it filters the energy in the range $[E_j-\frac{1}{\tau}, E_j+\frac{1}{\tau}]$. However, it will modify the weight of different energy values in that range. When $h\geq 0$, the inverse function of $\mr{tri}(h)$ is
\begin{equation} \label{eq:tri_inv}
\mr{tri}^{-1}(p) = 1 - p \leq 1.
\end{equation}

The Fourier transform of $\mr{rect}(h)$ is the square of Sinc function,
\begin{equation}
    f(x) = \sinc^2 \left(\frac{x}{2\pi}\right).
\end{equation}

The norm of $f(x)$ is finite,
\begin{equation}
\|f\| = 2\pi \int_{-\infty}^{\infty} \sinc^2(y) dy = 2\pi^2.
\end{equation}
The sample probability distribution is then,
\begin{equation} \label{eq:px_tri}
p(x) = \frac{1}{\|f\|}f(x) = \frac{1}{2\pi^2}\sinc^2 \left(\frac{x}{2\pi}\right).
\end{equation}

To check the realizability of cooling function $g(h)$, we need to study the property of the tail probability of $p(x)$ with respect to the cutoff sample time $x_m$. Solving the following equation,
\begin{equation}
\varepsilon = \int_{-x_m}^{x_m} p(x) dx,
\end{equation}
where $p(x)$ is defined in Eq.~\eqref{eq:px_tri}, we have
\begin{equation}
\varepsilon = \frac{2}{\pi}\left( \frac{\pi}{2}- \mr{Si}(\frac{x_m}{\pi})  \right) + \frac{2 - 2\cos(\frac{x_m}{\pi}) }{x_m},
\end{equation}
where $\mr{Si}(x):= \int_0^{x} \sinc(y) dy$ is the Sinc integral function. 

\begin{proposition} 
We have
\begin{equation}
\frac{\pi}{2} - \mr{Si}(x) \leq \frac{1}{x}.
\end{equation}
\end{proposition}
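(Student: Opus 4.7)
The plan is to rewrite the left-hand side as a tail integral of the sinc function, then use a Laplace-transform representation to obtain a manifestly bounded integral. Since $\mathrm{Si}(\infty)=\pi/2$, we have
\begin{equation}
\frac{\pi}{2} - \mathrm{Si}(x) = \int_x^{\infty} \frac{\sin t}{t}\, dt,
\end{equation}
so it suffices to bound the absolute value of this tail integral by $1/x$ (this is stronger than the stated inequality).

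Next I would use the identity $1/t = \int_0^{\infty} e^{-tu}\, du$ valid for $t>0$ and swap the order of integration. The plan is to first truncate the outer integral at some large $T>x$, apply Fubini on the finite strip $[x,T]\times[0,\infty)$ where absolute integrability holds (since $\int_x^T |\sin t|/t\, dt < \infty$), and then send $T\to\infty$. The closed form of the inner integral is elementary:
\begin{equation}
\int_x^{\infty} \sin t \, e^{-tu}\, dt \;=\; \frac{e^{-xu}\bigl(u\sin x + \cos x\bigr)}{u^2+1},
\end{equation}
so after the swap I obtain
\begin{equation}
\int_x^{\infty} \frac{\sin t}{t}\, dt \;=\; \int_0^{\infty} \frac{e^{-xu}\bigl(u\sin x + \cos x\bigr)}{u^2+1}\, du.
\end{equation}

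Finally I would apply the trigonometric identity $|u\sin x + \cos x|\leq \sqrt{u^2+1}$ (since $u\sin x + \cos x = \sqrt{u^2+1}\,\sin(x+\varphi)$ for an appropriate phase $\varphi$) to get
\begin{equation}
\left|\int_x^{\infty} \frac{\sin t}{t}\, dt\right| \leq \int_0^{\infty} \frac{e^{-xu}}{\sqrt{u^2+1}}\, du \leq \int_0^{\infty} e^{-xu}\, du = \frac{1}{x},
\end{equation}
which gives the claim. The only mildly delicate step is the Fubini swap, since the integrand is not absolutely integrable on the full strip $[x,\infty)\times[0,\infty)$; this is handled by the $T\to\infty$ truncation argument above, noting that the boundary contribution at $T$ decays like $1/T$ by the same $\sqrt{u^2+1}$ bound.
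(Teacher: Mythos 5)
Your proposal is correct and follows essentially the same route as the paper: both derive the key representation $\frac{\pi}{2}-\mr{Si}(x)=\int_0^\infty \frac{e^{-xu}\left(u\sin x+\cos x\right)}{u^2+1}\,du$ --- the paper via Laplace-transform identities (the shift theorem and the division-by-$t$ property evaluated at $s=0$), you via the equivalent direct insertion of $1/t=\int_0^\infty e^{-tu}\,du$ and a Fubini swap --- and then apply the identical Cauchy--Schwarz bound $|u\sin x+\cos x|\le\sqrt{u^2+1}$ followed by $\int_0^\infty e^{-xu}\,du=1/x$. If anything, your truncation argument with the $1/T$ boundary estimate justifies the interchange more carefully than the paper's bare appeal to Laplace-transform properties for a conditionally convergent integral.
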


\begin{proof}
By definition, we have,
\begin{equation}
\frac{\pi}{2} - \mr{Si}(x) = \int_x^{\infty} \frac{\sin(t)}{t} dt  = \int_x^{\infty} \frac{\cos x \sin(t-x) + \sin x \cos(t-x)}{t} dt.
\end{equation}
Now we denote $f(t) = \sin x\cos t + \cos x\sin t$, then
\begin{equation}
\begin{aligned}
\frac{\pi}{2} - \mr{Si}(x) = \int_x^{\infty} \frac{f(t-x)}{t} dt.
\end{aligned}
\end{equation}
We introduce the Laplace transform of $f(t)$,
\begin{equation}
\mc{L}[f(t)](s) = \int_0^{\infty} f(t) e^{-st} dt,
\end{equation}
then,
\begin{equation} \label{eq:Lfu}
\begin{aligned}
\frac{\pi}{2} - \mr{Si}(x) = \mc{L}\left[ \frac{f(t-x) u(t-x)}{t} \right](0),
\end{aligned}
\end{equation}
where $u(t):= \eta(t\geq 0)$.

From the property of Laplace transform, we have
\begin{equation} \label{eq:Lfexp1}
\begin{aligned}
\int_{s}^\infty \mc{L}[f(t)](p)\,e^{-x p} dp &= \int_{s}^\infty \mc{L}[f(t-x)u(t-x)] dp \\
&= \mc{L}\left[\frac{f(t-x)u(t-x)}{t}\right](s).
\end{aligned}
\end{equation}
On the other hand, we know that 
\begin{equation}
\mc{L}[f(t)](p) = \mc{L}[\sin x\cos t + \cos x\sin t](p) = \frac{\sin x\, p + \cos x}{p^2+1},
\end{equation}
then
\begin{equation} \label{eq:Lfexp2}
\begin{aligned}
\int_{s}^\infty \mc{L}[f(t)](p)\,e^{-x p} dp &= \int_{s}^\infty \frac{\sin x\, p + \cos x}{p^2+1} \,e^{-x p} dp.
\end{aligned}
\end{equation}
Taking $s=0$ and combine Eq.~\eqref{eq:Lfu},~\eqref{eq:Lfexp1},~and \eqref{eq:Lfexp2}, we have
\begin{equation}
\frac{\pi}{2} - \mr{Si}(x) = \int_0^{\infty} \frac{\sin x\, p + \cos x}{p^2+1} \,e^{-x p} dp.
\end{equation}
Using Cauchy-Schwarz inequality we can prove that,
\begin{equation}
p\sin x + 1\cdot \cos x \leq \sqrt{1+p^2}\sqrt{\sin^2 x + \cos^2 x} = \sqrt{1+p^2},
\end{equation}
hence
\begin{equation}
\begin{aligned}
\frac{\pi}{2} - \mr{Si}(x) &= \int_0^{\infty} \frac{\sin x\, p + \cos x}{p^2+1} \,e^{-x p} dp \\
&\leq \int_0^{\infty} \frac{1}{\sqrt{p^2+1}} \,e^{-x p} dp \\
&\leq \int_0^{\infty} e^{-x p} dp \\
&= \frac{1}{x}.
\end{aligned}
\end{equation}
\end{proof}

Therefore,
\begin{equation}
\begin{aligned}
\varepsilon &\leq \frac{2}{\pi} \frac{\pi}{x_m} + \frac{2-2\cos(\frac{x_m}{\pi})}{x_m} \\
&\leq \frac{2}{x_m} + \frac{4}{x_m} = \frac{6}{x_m}.
\end{aligned}
\end{equation}
We have
\begin{equation}
x_m \leq \frac{6}{\varepsilon} =: L(\varepsilon) = \mathcal O(\mathrm{poly}(\frac{1}{\varepsilon})).
\end{equation}

Therefore, the triangle function is a realizable cooling function. However, the circuit depth requirement of triangle function is still demanding, as the tail probability $\varepsilon$ decays slowly with the cutoff time $x_m$.

\subsection{Double-side exponential decay function}

The double-side exponential decay function is 
\begin{equation}
g(h):= e^{-|h|}.
\end{equation} 
Hereafter we call it exponential function for simplicity. It satisties Definition~\ref{Def:coolingfunc} and hence a cooling function.
The exponential function is of special importance as it describes the imaginary time evolution $e^{-\tau H}$ when the eigenenergies are all positive. When $h\geq 0$, the inverse function is
\begin{equation} \label{eq:exp_inv}
g^{-1}(p) = \ln(\frac{1}{p}).
\end{equation}

The Fourier transform of the exponential function is 
\begin{equation}
f(x) = \frac{2}{1+x^2}.
\end{equation}
The norm of $f(x)$ is finite,
\begin{equation}
\|f\| = \int_{-\infty}^{\infty} \frac{2}{1+x^2} dx = 2\pi.
\end{equation}

The sample probability distribution is,
\begin{equation} \label{eq:px_exp}
p(x) = \frac{1}{\|f\|}f(x) = \frac{1}{2\pi} \frac{2}{1+x^2},
\end{equation}
which is a Lorentian distribution.

Now, we check the tail probability of $p(x)$ with respect to the cutoff sample time $x_m$. Solving the following equation,
\begin{equation}
\varepsilon = \int_{-x_m}^{x_m} p(x) dx,
\end{equation}
where $p(x)$ is defined in Eq.~\eqref{eq:px_exp}, we have
\begin{equation}
x_m = \tan\left[\frac{\pi}{2}(1-\varepsilon)\right],
\end{equation}

To calculate the dependence of $x_m$ with respect to $\frac{1}{\varepsilon}$, we expand $\frac{1}{x_m}$ at the point $\varepsilon \to 0^+$ using Taylor expansion with reminder,
\begin{equation} \label{eq:tanexpand}
\begin{aligned}
\frac{1}{x_m} &= \frac{1}{\tan\left[\frac{\pi}{2}(1-\varepsilon)\right]} \\
&= 0 + \frac{\pi}{2} \frac{1}{\cos^2(\frac{\pi}{2}\cdot0)} \varepsilon + \frac{\pi^2}{4}\frac{\sin(\pi\xi)}{\cos^4(\frac{\pi}{2}\xi)} \frac{\varepsilon^2}{2} \\
&>  \frac{\pi}{2} \varepsilon,
\end{aligned}
\end{equation}
where $\xi\in(0,\varepsilon)$. We then have,
\begin{equation}
x_m < \frac{2}{\pi}\frac{1}{\varepsilon} =: L(\varepsilon) = \mathcal O(\mathrm{poly}(\frac{1}{\varepsilon})).
\end{equation}
Therefore, the exponential function is a realizable cooling function. Still, the circuit depth requirement is demanding, as the tail probability $\varepsilon$ decays slowly with the cutoff time $x_m$. In what follows, we consider two cooling functions with small tails.

\subsection{Gaussian function}

The Gaussian function we used here is 
\begin{equation}
g(h) := e^{-h^2}.
\end{equation}
It satisfies Definition~\ref{Def:coolingfunc} and hence a cooling function. When $h\geq 0$, the inverse function is
\begin{equation} \label{eq:gau_inv}
g^{-1}(p) = \sqrt{\ln(\frac{1}{p})}.
\end{equation}

The Fourier transform of Gaussian function is still a Gaussian function,
\begin{equation}
f(x) = \sqrt{\pi} e^{-\frac{x^2}{4}}.
\end{equation}
The norm of $f(x)$ is finite,
\begin{equation}
\|f\| = \sqrt{\pi} \int_{-\infty}^{\infty} e^{-\frac{x^2}{4}} dx = 2\pi.
\end{equation}

The sample probability distribution is,
\begin{equation} \label{eq:px_gau}
p(x) = \frac{1}{\|f\|}f(x) = \frac{1}{2\pi} \sqrt{\pi} e^{-\frac{x^2}{4}},
\end{equation}
which is a Gaussian distribution.

Now, we check the tail probability of $p(x)$ with respect to the cutoff sample time $x_m$. Solving the following equation,
\begin{equation}
\varepsilon = \int_{-x_m}^{x_m} p(x) dx,
\end{equation}
where $p(x)$ is defined in Eq.~\eqref{eq:px_gau}, we have
\begin{equation}
\varepsilon = \mr{erfc}(\frac{x_m}{2}),
\end{equation}
where $\mr{erfc}(x):= 1 - \frac{2}{\sqrt{\pi}}\int_0^x e^{-t^2} dt$ is the complementary of the error function. It can be upper-bounded by a Chernoff-type formula~\cite{karagiannidis2007improved},
\begin{equation}
\mr{erfc}(x) \leq \frac{1}{\sqrt{\pi}x} e^{-x^2}, \quad x\geq 0.
\end{equation}
Therefore, 
\begin{equation}
\begin{aligned}
& \varepsilon = \mr{erfc}(\frac{x_m}{2}) \leq \frac{2}{\sqrt{\pi}x_m} e^{-\frac{x_m^2}{4}} \\
\Rightarrow\quad & \ln \frac{\sqrt{\pi}\varepsilon}{2} \leq -\frac{x_m^2}{4} - \ln x_m \\
\Rightarrow\quad & \frac{x_m^2}{4} \leq \frac{x_m^2}{4} + \ln x_m \leq -\ln \left(\frac{\sqrt{\pi}\varepsilon}{2} \right) \\
\Rightarrow\quad & x_m \leq 2 \sqrt{-\ln\varepsilon - \ln\frac{\sqrt{\pi}}{2}} \leq 2\sqrt{\ln\left(\frac{1}{\varepsilon}\right)}.
\end{aligned}
\end{equation}

We have $L(\varepsilon) = 2\sqrt{\ln\left(\frac{1}{\varepsilon}\right)} = \mathcal O(\mathrm{poly}(\frac{1}{\varepsilon}))$. Therefore, the Gaussian function is a realizable cooling function. Note that, unlike the triangular function or the exponential function, the tail of Gaussian function decays quickly with respect to $x_m$. This implies that the Gaussian cooling is more experiementally friendly.

\subsection{Secant hyperbolic function}

The secant hyperbolic function is 
\begin{equation}
\sech(h):= \frac{2}{e^h + e^{-h}}. 
\end{equation}
It satisfies Definition~\ref{Def:coolingfunc} and hence a cooling function. The inverse of $\sech$ is hard to be presented in an analytical form, but we can easily derive an upper bound for it,
\begin{equation} \label{eq:sech_inv}
g^{-1}(p) < \ln\frac{2}{p}.
\end{equation}

The Fourier transform of secant hyperbolic function is still a secant hyperbolic function,
\begin{equation}
f(x) = \pi\, \sech\left(\frac{\pi}{2}x\right).
\end{equation}
The norm of $f(x)$ is finite,
\begin{equation}
\|f\| = \pi \int_{-\infty}^{\infty} \sech\left(\frac{\pi}{2}x\right) dx = 2\pi.
\end{equation}

The sample probability distribution is,
\begin{equation} \label{eq:px_sech}
p(x) = \frac{1}{\|f\|}f(x) = \frac{1}{2\pi} \pi\, \sech\left(\frac{\pi}{2}x\right),
\end{equation}
which is a secant hyperbolic distribution.

Now, we check the tail probability of $p(x)$ with respect to the cutoff sample time $x_m$. Solving the following equation,
\begin{equation}
\varepsilon = \int_{-x_m}^{x_m} p(x) dx,
\end{equation}
where $p(x)$ is defined in Eq.~\eqref{eq:px_sech}, we have
\begin{equation} 
\begin{aligned}
& x_m = \frac{2}{\pi} \ln\left\{\tan \left[ \frac{\pi}{2}(1-\frac{\varepsilon}{2}) \right] \right\} \\
\Rightarrow \quad & e^{ -\frac{\pi}{2}x_m } = \frac{1}{\tan\left[\frac{\pi}{2}(1-\frac{\varepsilon}{2})\right]}.
\end{aligned}
\end{equation}
Similar to Eq.~\eqref{eq:tanexpand}, we expand the inversed tangent function at the point $\varepsilon\to 0^+$ using Taylor expansion with reminder,
\begin{equation} \label{eq:tanexpand2}
\begin{aligned}
e^{ -\frac{\pi}{2}x_m } &= \frac{1}{\tan\left[\frac{\pi}{2}(1-\frac{\varepsilon}{2})\right]} \\
&= 0 + \frac{\pi}{4} \frac{1}{\cos^2(\frac{\pi}{4}\cdot0)} \varepsilon + \pi^2 \frac{ \sin^4(\frac{\pi}{4}\xi) }{ \sin^3(\frac{\pi}{2}\xi) } \frac{\varepsilon^2}{2} \\
&>  \frac{\pi}{4} \varepsilon,
\end{aligned}
\end{equation}
where $\xi\in(0,\varepsilon)$.
Therefore,
\begin{equation} \label{eq:tanexpand3}
\begin{aligned}
x_m < \frac{2}{\pi} \ln(\frac{4}{\pi}\frac{1}{\varepsilon})=: L(\varepsilon) = \mathcal O(\mathrm{poly}(\frac{1}{\varepsilon})).
\end{aligned}
\end{equation}
Therefore, the secant hyperbolic function is also a realizable function. The overhead of secant hyperbolic function is better then the Gaussian function; however, the asymptotic cost of the Gaussian function is quadratically better then the secant hyperbolic function.

\section{Detailed quantum cooling algorithm} \label{Sec:cooling_algorithm}

Suppose we want to estimate the expectation value on an eigenstate $\ket{u_j}$, $\braket{O}:=\braket{u_j|O|u_j}$, using the quantum cooling algorithm. For a given finite evolution time $\tau$, the estimation value of $\braket{O}_{\tau}$ is given by
\begin{equation}
\braket{O}_{\tau} = \frac{N_\tau(O)}{D_\tau},
\end{equation}
where
\begin{equation}
\begin{aligned}
D_\tau &= \braket{\psi_0| g^2(\tau(H-E_j)) |\psi_0}, \\
N_\tau(O) &= \braket{\psi_0| g(\tau(H-E_j)) O g(\tau(H-E_j)) |\psi_0}.
\end{aligned}
\end{equation}
are, respectively, the normalization factor and the unnormalized expectation value. To apply the Hadamard test, we decompose the observable $O$ by the Pauli operators, 
\begin{equation} \label{eq:Odecom}
O = \sum_{l\in\mc{P}_n} o_l P_l = \|O\|_1 \sum_{l\in\mc{P}_n} \mr{Pr}_O(l) P_l
\end{equation}
where $\mc{P}_n$ denotes the $n$-qubit Pauli group. $o_l$ is the coefficient of the Pauli component $P_l$. We remark that, the coefficients $\{o_l\}$ are all set to be positive. The signs of the coefficients are put into the corresponding Pauli matrices $\{P_l\}$. In Eq.~\eqref{eq:Odecom}, $\|O\|_1$ is the $l_1$-norm of the Pauli coefficients of $O$,
\begin{equation}
\|O\|_1 = \sum_{l\in\mc{P}_n} o_l.
\end{equation}
The probability distribution $\mr{Pr}_O(l)$ is defined to be,
\begin{equation}
\mr{Pr}_O(l) = \frac{o_l}{\sum_{l\in\mc{P}_n} o_l} = \frac{1}{\|O\|_1} o_l.
\end{equation} 

Using Eq.~\eqref{eq:coolingop_exp}, we expand the cooling operator $g(\tau(H-E_j))$ and receive the following estimation formulas,
\begin{equation} \label{eq:DNestimation}
\begin{aligned}
D_\tau &= \left(\frac{\|f\|}{2\pi}\right)^2  \int_{-\infty}^{\infty} dy \tilde{p}(y) e^{-i\tau y E_j} \braket{\psi_0| e^{i\tau y H} |\psi_0}, \\
N_\tau(O) &= \|O\|_1 \left(\frac{\|f\|}{2\pi}\right)^2 \int_{-\infty}^{\infty} dx \int_{-\infty}^{\infty} dx' p(x) p(x') \sum_{l\in\mc{P}_n} \mr{Pr}_O(l) e^{-i\tau (x-x') E_j} \braket{\psi_0| e^{-i\tau x' H} P_l e^{i\tau x' H} |\psi_0},
\end{aligned}
\end{equation}
where $\tilde{p}(y):= \frac{1}{2}\int_{-\infty}^{\infty}p(\frac{z+y}{2})p(\frac{z-y}{2})dz$ is the self-correlation of the probability $p(x)$. We remark that, the normalization factor $\left(\frac{\|f\|}{2\pi}\right)^2$ can be removed, since it is the same for both $D_\tau$ and $N_\tau(O)$, hence independent of the estimation of $\braket{O}_{\tau}$. In the following discussion, we ignore this normalization factor during the estimation procedure.

In the quantum cooling algorithm, we generate the (normalized) evolution time $x$ (or $y$) based on the sample probability $p(x)$ (or $\tilde{p}(y)$) and sample the innerproduct values using quantum experiments. For a practical consideration, when the (normalized) evolution time $x$ (or $y$) is larger then a cutoff value $x_m$, we don't perform the quantum experiment and direct denote the estimation of this round to be $0$. In this way, the estimation formula of $N_\tau(O)$ and $D_\tau$, originally given by Eq.~\eqref{eq:DNestimation}, now becomes
\begin{equation} \label{eq:DNestimationCut}
\begin{aligned}
D_\tau^{(x_m)} &= \int_{-x_m}^{x_m} dy \tilde{p}(y) e^{-i\tau y E_j} \braket{\psi_0| e^{i\tau y H} |\psi_0}, \\
N_\tau^{(x_m)}(O) &= \|O\|_1 \int_{-x_m}^{x_m} dx \int_{-x_m}^{x_m} dx' p(x) p(x') \sum_{l\in\mc{P}_n} \mr{Pr}_O(l) e^{-i\tau (x-x') E_j} \braket{\psi_0| e^{-i\tau x' H} P_l e^{i\tau x' H} |\psi_0}.
\end{aligned}
\end{equation}

To estimate the values of $N_\tau^{(x_m)}(O)$ and $D_\tau^{(x_m)}$ in Eq.~\eqref{eq:DNestimationCut}, the core issue is to realize the unbiased estimation of the following quantities,
\begin{equation}
\begin{aligned}
& \braket{\psi_0| e^{i\tau y H} |\psi_0}, \\
& \braket{\psi_0| e^{-i\tau x' H} P_l e^{i\tau x' H} |\psi_0}.
\end{aligned}
\end{equation}
These $\braket{\psi|U|\psi}$ form quantities can be estimated using the Hadamard test, shown in Fig.~\ref{fig:HadamardTest}. 
Here, the $S$ gate is
\begin{equation}
S=
\begin{pmatrix}
1 & 0 \\
0 & -i
\end{pmatrix}.
\end{equation}
To measure $\braket{\psi|U|\psi}$, we first prepare the state $\ket{\psi}$ and an extra ancillary qubit prepared on $\ket{+}$. Afterward, we perform a $C$-$U$ gate from ancillary to $\ket{\psi}$. If we directly measure the ancillary qubit on the $X$-basis, the outcome $a$ will be $0$ with a probability of $\frac{1}{2}(1 + \mr{Re}(\braket{\psi|U|\psi}))$ and $1$ with a probability of $\frac{1}{2}(1 - \mr{Re}(\braket{\psi|U|\psi}))$. Alternatively, if we perform an extra $W$ gate before the $X$-basis measurement, the outcome $a$ will be $0$ with a probability of $\frac{1}{2}(1 + \mr{Im}(\braket{\psi|U|\psi}))$ and $1$ with a probability of $\frac{1}{2}(1 - \mr{Im}(\braket{\psi|U|\psi}))$. 

\begin{figure}[htbp]
\centering
\includegraphics[width=6cm]{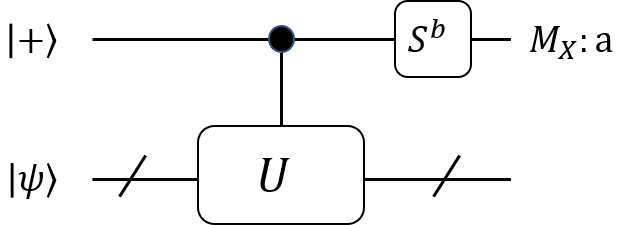}
\caption{The diagram of the Hadamard test. If $b=0$, $\Pr(a=0)=\frac{1}{2}(1 + \mr{Re}(\braket{\psi|U|\psi}))$; if $b=1$, $\Pr(a=0)=\frac{1}{2}(1 + \mr{Im}(\braket{\psi|U|\psi}))$.} \label{fig:HadamardTest}
\end{figure}

To simplify the theoretical analysis, we now introduce a combined estimation of the real part and imaginary part in a single round. In each round of experiment, we first randomly decide the binary value $b$ uniformly from $\{0,1\}$ in Fig.~\ref{fig:HadamardTest}. We denote the binary uniform distribution of $b$ as $\Pr_U(b)$. Based on the measurement outcome $a$, we construct the following complex-valued estimator,
\begin{equation} \label{eq:hatr}
\hat{r} =
\begin{cases}
2, & \quad (b,a)=(0,0),\\
-2, & \quad (b,a)=(0,1),\\
2i, & \quad (b,a)=(1,0),\\
-2i, & \quad (b,a)=(1,1).
\end{cases}
\end{equation}
That is, $\hat{r}=2i^b(-1)^a$. In this case, we have
\begin{equation}
\begin{aligned}
\mbb{E}(\hat{r}) &= \sum_{b\in{0,1}} \Pr_U(b) \Pr(a|b)\, \hat{r}(b,a) \\
&= \frac{1}{4}(1 + \mr{Re}(\braket{\psi|U|\psi}))\cdot 2 + \frac{1}{4} (1 - \mr{Re}(\braket{\psi|U|\psi}))\cdot (-2) \\
&\quad + \frac{1}{4}(1 + \mr{Im}(\braket{\psi|U|\psi}))\cdot (2i) + \frac{1}{4}(1 - \mr{Im}(\braket{\psi|U|\psi}))\cdot (-2i) \\
&= \mr{Re}(\braket{\psi|U|\psi}) + i \mr{Im}(\braket{\psi|U|\psi}) \\
&= \braket{\psi|U|\psi}.
\end{aligned}
\end{equation}
Therefore, $\hat{r}$ is an unbiased estimator for $\braket{\psi|U|\psi}$.

In the estimation of $D_\tau$ and $N_\tau(O)$, we also put the phase term and normalization factor $\|O\|_1$ in Eq.~\eqref{eq:DNestimation} into the estimator. That is, we define
\begin{equation}
\begin{aligned}
\hat{d} &= e^{-i\tau y E_j} \hat{r}, \\
\hat{n} &= \|O\|_1 e^{-i\tau(x-x') E_j} \hat{r}.
\end{aligned}
\end{equation}
As a result, if we randomly sample $y$ from $\tilde{p}(y)$, based on Eq.~\eqref{eq:DNestimation} we have
\begin{equation}
\mbb{E}_{y,b,a} (\hat{d}) = D_\tau.
\end{equation}
On the other hand, if we randomly sample $x,x'$ from $p(x)$ and $l$ from $\mr{Pr}_O(l)$, based on Eq.~\eqref{eq:DNestimation} we have
\begin{equation}
\mbb{E}_{x,x',l,b,a} (\hat{n}) = N_\tau(O).
\end{equation}

In the experiments, we generate random variables $x, x', b$ and $l$ independently in each round of experiment. After performing the single-shot Hadamard test experiements, we will obtain a group of unbiased estimators $\{\hat{d}_p\}_{p=1}^{N_M}$ and $\{\hat{n}_q\}_{p=1}^{N_M}$ for $D_\tau$ and $N_\tau(O)$, respectively. We then use the average value of the estimators as an accurate estimate of $D_\tau$ and $N_\tau(O)$, whose tightness is given by the concentration bound analyzed in Sec.~\ref{Ssc:energyFiniteNM} and Sec.~\ref{Ssc:finiteNM}.

In the main text, we consider two applications of the universal cooling: 1) to estimate the ground state properties when the ground state energy $E_0$ is known; 2) to estimate the properties of the $j$-th eigenstate without the exact knowledge of the energy. We can divide it to three elementary tasks:
\begin{enumerate}
    \item Estimate the eigenenergy $E_j$ of the $j$-th eigenstate given a initial guess interval $[E_j^L, E_j^U]$;
    \item Given a known eigenenergy $E_j$, estimate the normalization factor $D_\tau$;
    \item Given a known eigenenergy $E_j$ and an observable $O$, estimate the normalization factor $N_\tau(O)$.
\end{enumerate}

We now introduce the detailed algorithm for these tasks.

\subsection{Estimate the normalization factor}

We first introduce the algorithm to estimate the normalization factor $D_\tau$ based on Eq.~\eqref{eq:DNestimationCut} and the Hadamard test, when the eigenenergy $E_j$ is known. Recall that the unbiased estimator for the normalization factor is,
\begin{equation} \label{eq:hatdunbiased}
\begin{aligned}
\hat{d} &= e^{-i\tau y E_j} \hat{r} = 2i^b(-1)^a e^{-i\tau y E_j}, \\
\end{aligned}
\end{equation}
If we randomly sample $y$ from $\tilde{p}(y)$, based on Eq.~\eqref{eq:DNestimation} we have
\begin{equation}
\mbb{E}_{y,b,a} (\hat{d}) = D_\tau.
\end{equation}

Algorithm~\ref{Alg:NormEst} below is based on constructing the estimator in Eq.~\eqref{eq:hatdunbiased}. We also introduce the truncation into the sampling of $y$.

\begin{figure}
\begin{algorithm}[H]
        \caption{Normalization factor estimation}
        \label{Alg:NormEst}
        \begin{algorithmic}[1]
            \Require
            An $n$-qubit Hamiltonian $H$; initial state $\ket{\psi_0}$ with nonzero overlap with $j$-th eigenstate of $H$: $p_j = |\braket{u_j|\psi_0}|^2$; the energy $E_j$ for the $j$-th eigenstate; cooling function $g(h)$ and the corresponding sampling probability $p(x)$; proper choice of the imaginary time $\tau$, normalized cutoff time $x_m$.
            \Ensure
            Estimation of the normalization factor $D_\tau$.
            \For{$p= 1~\text{\textbf{to}}~N_M$} 
                \State Sample a normalized evolution time $y_p$ from $\tilde{p}(y):= \frac{1}{2}\int_{-\infty}^{\infty}p(\frac{z+y}{2})p(\frac{z-y}{2})dz$.
                \If{$y_p>x_m$} \Comment{Exceed the preset cutoff value}
                    \State Set the estimation $\hat{d}_p=0$.
                \Else \Comment{Normal single-shot quantum sampling by Hadamard test}
                    \State Prepare an ancillary qubit on $\ket{+}$ state and the initial state $\ket{\psi_0}$.
                    \State Implement a controlled-unitary $C$-$U$ from the ancillary qubit to the state $\ket{\psi_0}$. Here, $U = e^{i\tau y_p H}$.
                    \State Generate a random bit $b_p$ with a uniform distribution of values $\{0,1\}$. Perform a gate $W^{b_p}$ on the ancillary qubit. Here, $S=\mr{diag}(1,-i)$ is a $\frac{\pi}{4}$-rotation gate.
                    \State Measure the ancillary qubit on $X$-basis, and record the binary result $a_p$. Then record the values $\{y_p,b_p,a_p\}$ and set the estimation value $\hat{d}_p:= 2(-1)^{a_p}\, i^{b_p} e^{-i\tau y_p E_j}$.
                \EndIf
            \EndFor
            \State Calculate the estimated normalization factor $\hat{D}^{(x_m)}_\tau:= \mr{Re}\left( \frac{1}{N_M} \sum_{p=1}^{N_M} \hat{d}_p \right)$.
        \end{algorithmic}
\end{algorithm}
\end{figure}

\subsection{Estimate the eigenenergy}

Suppose the initial state $\ket{\psi_0}$ has a constant overlap with the eigenstate $\ket{u_j}$. Then we can sweep the parameter $E_j^{(e)}$ in a range to maximize $D_\tau$. The estimation accuracy depends on the finite imaginary time $\tau$, finite truncation $x_m$, sampling number $N_M$ and the overlap $p_j$.

To clarify this, we consider a simple case where the $j$-th eigenenergy is know to be in a range $E_j \in [E_j^L, E_j^U]$; moreover, we suppose other eigenenergies is far from this range. We expand the initial state in the eigenstate basis,
\begin{equation}
\ket{\psi_0} = \sum_{j=0}^{N-1} c_j \ket{u_j}.
\end{equation}
We denote the square overlap of $\ket{\psi}$ and $\ket{u_i}$ to be
\begin{equation}
p_i = |\braket{u_i|\psi}|^2 = |c_i|^2.
\end{equation}

For an energy value $E$ in the range $[E_j^L,E_j^U]$, we calculate the ideal normalization factor $D_\tau$,
\begin{equation} \label{eq:DtaupsiVar}
\begin{aligned}
D_\tau(E) &= \braket{\psi| g^2(\tau(H-E)) |\psi} \\
&= g^2(\tau(E_j-E)) p_j   +  \sum_{i\neq j} g^2(\tau(E_i-E)) p_i \\
&\approx g^2(\tau(E_j-E)) p_j.
\end{aligned}
\end{equation}
The approximation holds when $g^2(\tau(E_j-E)) p_j \gg g^2(\tau(E_i-E)) p_i$. This naturally holds when the eigenenergies $\{E_i\}_{i\neq j}$ is far from the range $[E_j^L,E_j^U]$. From Eq.~\eqref{eq:DtaupsiVar} we can see that, the normalization factor takes the local maximum value close to $p_j$ when the energy value $E=E_j$,
\begin{equation}
E_j = \mathop{\mathrm{argmax}}\limits_{E\in[E_j^L,E_j^U]} D_\tau(E).
\end{equation}

Therefore, we can sweep the values of $E$ in a range $[E_j^L, E_j^U]$ to search the largest value of $D_\tau(E)$. The eigenenergy searching algorithm is in Algorithm~\ref{Alg:eigenenergyEst}. In the algorithm, we assume a free usage of computational resources. This can be improved by introducing better peak-value searching algorithm. We leave the improvement of the classical search process for future works.

\begin{figure}
\begin{algorithm}[H]
        \caption{Eigenenergy and normalization factor estimation}
        \label{Alg:eigenenergyEst}
        \begin{algorithmic}[1]
            \Require
            An $n$-qubit Hamiltonian $H$; initial state $\ket{\psi_0}$ with nonzero overlap with $j$-th eigenstate of $H$: $p_j = |\braket{u_j|\psi_0}|^2$; the energy interval $[E_j^L, E_j^U]$ for the $j$-th eigenstate; cooling function $g(h)$ and the corresponding sampling probability $p(x)$; proper choice of the imaginary time $\tau$, normalized cutoff time $x_m$.
            \Ensure
            Estimation of the eigenenergy $E_j$ and the corresponding normalization factor $\hat{D}^{(x_m)}_\tau$.
            \For{$p= 1~\text{\textbf{to}}~N_M$} 
                \State Sample a normalized evolution time $y_p$ from $\tilde{p}(y):= \frac{1}{2}\int_{-\infty}^{\infty}p(\frac{z+y}{2})p(\frac{z-y}{2})dz$.
                \If{$y_p>x_m$} \Comment{Exceed the preset cutoff value}
                    \State Set the estimation $\hat{d}_p=0$.
                \Else \Comment{Normal single-shot quantum sampling by Hadamard test}
                    \State Prepare an ancillary qubit on $\ket{+}$ state and the initial state $\ket{\psi_0}$.
                    \State Implement a controlled-unitary $C$-$U$ from the ancillary qubit to the state $\ket{\psi_0}$. Here, $U = e^{i\tau y_p H}$.
                    \State Generate a random bit $b_p$ with a uniform distribution of values $\{0,1\}$. Perform a gate $W^{b_p}$ on the ancillary qubit. Here, $S=\mr{diag}(1,-i)$ is a $\frac{\pi}{4}$-rotation gate.
                    \State Measure the ancillary qubit on $X$-basis, and record the binary result $a_p$. Then record the values $\{y_p,b_p,a_p\}$ and set the temporal estimation value $\hat{d}_p:= 2(-1)^{a_p}\, i^{b_p}$.
                \EndIf
            \EndFor
            \State Set the estimated normalization factor $\hat{D}^{(x_m)'}_\tau= 0$.
            \For{$E_j'$~\textbf{in}~$[E_j^L,E_j^U]$} \Comment{Try different possible eigenenergy value}
            \State Calculate the estimated normalization factor $\hat{D}^{(x_m)'}_\tau:= \mr{Re}\left( \frac{1}{N_M} \sum_{p=1}^{N_M} e^{-i\tau y_p E_j'} \hat{d}_p \right)$.
            \If{$\hat{D}^{(x_m)'}_\tau$}
                \State Set $\hat{D}^{(x_m)}_\tau =\hat{D}^{(x_m)'}_\tau$ and $E_j = E_j'$.
            \EndIf
            \EndFor
        \end{algorithmic}
\end{algorithm}
\end{figure}

\subsection{Estimate the unnormalized observable expectation value}

Now, we discuss the estimation $N_\tau(O)$ of a given observable 
\begin{equation}
O = \sum_{l\in\mc{P}_n} o_l P_l = \|O\|_1 \sum_{l\in\mc{P}_n} \mr{Pr}_O(l) P_l
\end{equation}
where $\mc{P}_n$ denotes the $n$-qubit Pauli group. $o_l$ is the coefficient of $O$ on the Pauli component $P_l$. We remark that, the coefficients $\{o_l\}$ are all set to be positive. The signs of the coefficients are put into the corresponding Pauli matrices $\{P_l\}$. Recall that $\|O\|_1$ is the $l_1$-norm of Pauli coefficients of $O$,
\begin{equation}
\|O\|_1 = \sum_{l\in\mc{P}_n} o_l.
\end{equation}
The probability distribution $\mr{Pr}_O(l)$ is defined to be,
\begin{equation}
\mr{Pr}_O(l) = \frac{o_l}{\sum_{l\in\mc{P}_n} o_l} = \frac{1}{\|O\|_1} o_l.
\end{equation} 

Recall the unbiased estimator of $N_\tau(O)$ is
\begin{equation}
\begin{aligned}
\hat{n} &= \|O\|_1 e^{-i\tau(x-x') E_j} \hat{r} = 2\|O\|_1 e^{-i\tau(x-x') E_j} i^b (-1)^a.
\end{aligned}
\end{equation}
If we randomly sample $x,x'$ from $p(x)$ and $l$ from $\mr{Pr}_O(l)$, based on Eq.~\eqref{eq:DNestimation} we have
\begin{equation}
\mbb{E}_{x,x',l,b,a} (\hat{n}) = N_\tau(O).
\end{equation}

We now introduce the estimation algorithm of $N_\tau(O)$ based on Eq.~\eqref{eq:DNestimationCut} and the Hadamard test. The unnormalized eigenstate property estimation is shown in Algorithm~\ref{Alg:UnnormEigenPropEst}.

\begin{figure}
\begin{algorithm}[H]
        \caption{Unnormalized eigenstate property estimation}
        \label{Alg:UnnormEigenPropEst}
        \begin{algorithmic}[1]
            \Require
            An $n$-qubit Hamiltonian $H$; initial state $\ket{\psi_0}$ with nonzero overlap with $j$-th eigenstate of $H$: $p_j = |\braket{u_j|\psi_0}|^2$; the energy $E_j$ for the $j$-th eigenstate; the target observable $O= \sum_{l\in\mbb{P}_n} o_l P_l$; cooling function $g(h)$ and the corresponding sampling probability $p(x)$; proper choice of the imaginary time $\tau$, normalized cutoff time $x_m$.
            \Ensure
            Estimation of the unnormalized observable value $N_\tau(O)$.
            \For{$q= 1~\text{\textbf{to}}~N_M$}  
                \State Sample two independent variables $x_q$ and $x_q'$ from $p(x)$.
                \If{$x_q>x_m$~\text{\textbf{or}}~$x_q'>x_m$} \Comment{Exceed the preset cutoff value}
                    \State Set the estimation $\hat{n}_q=0$.
                \Else \Comment{Normal single-shot quantum sampling by Hadamard test}
                    \State Randomly sample a Pauli string $P_l$ from the support of $O$, based on the probability distribution $\mr{Pr}_O(l)=o_l/\|O\|_1$. Here, $\|O\|_1 := \sum_{l} o_l$.
                    \State Prepare an ancillary qubit on $\ket{+}$ state and the initial state $\ket{\psi_0}$.
                    \State Implement a controlled-unitary $C$-$V$ from the ancillary qubit to the state $\ket{\psi_0}$. Here, $V = e^{i\tau x_q' H} P_l e^{i\tau x_q H}$.
                    \State Generate a random bit $b_q$ with a uniform distribution of values $\{0,1\}$. Perform a gate $W^{b_q}$ on the ancillary qubit.
                    \State Measure the ancillary qubit on $X$-basis, and record the binary result $a_q$. Then set the estimation value $\hat{n}_q= 2 \|O\|_1 (-1)^{a_q}\, i^{b_q}\, e^{-i\tau (x_q - x_q')E_j}$.
                \EndIf
            \EndFor
            \State Set the estimation of the unnormalized expectation value $\hat{N}^{(x_m)}_\tau(O):=\mr{Re}\left(\frac{1}{N_M}\sum_{q=1}^{N_M} \hat{n}_q \right)$.
        \end{algorithmic}
\end{algorithm}
\end{figure}

To summarize, we list the algorithm for the two applications in the main text, shown in Algorithms~\ref{Alg:GroundPropEst} and \ref{Alg:EigenPropEst}.

\begin{figure}
\begin{algorithm}[H]
        \caption{Ground state property estimation (with known ground-state energy)}
        \label{Alg:GroundPropEst}
        \begin{algorithmic}[1]
            \Require
            An $n$-qubit Hamiltonian $H$; initial state $\ket{\psi_0}$ with nonzero overlap with the ground state of $H$: $p_0 = |\braket{u_0|\psi_0}|^2$; the ground state energy $E_0$; the target observable $O= \sum_{l\in\mbb{P}_n} o_l P_l$; cooling function $g(h)$ and the corresponding sampling probability $p(x)$; proper choice of the imaginary time $\tau$, normalized cutoff time $x_m$.
            \Ensure
            Estimation of the observable value $\braket{u_0|O|u_0}$ for the ground state.
            \State Perform Algorithm~\ref{Alg:NormEst} to output the estimation of the normalization factor $\hat{D}^{(x_m)}_\tau$.
            \State Perform Algorithm~\ref{Alg:UnnormEigenPropEst} to output the estimation of the unnormalized observable value $\hat{N}^{(x_m)}_\tau $.
            \State Output the estimation $\braket{\hat{O}}^{(x_m)}_{\psi(\tau)} = \frac{\hat{N}^{(x_m)}_\tau(O)}{\hat{D}^{(x_m)}_\tau}$.
        \end{algorithmic}
\end{algorithm}
\end{figure}

\begin{figure}
\begin{algorithm}[H]
        \caption{Eigenstate property estimation (with unknown eigenstate energy)}
        \label{Alg:EigenPropEst}
        \begin{algorithmic}[1]
            \Require
            An $n$-qubit Hamiltonian $H$; initial state $\ket{\psi_0}$ with nonzero overlap with the ground state of $H$: $p_0 = |\braket{u_0|\psi_0}|^2$; the energy interval $[E^L_j, E^U_j]$ for the $j$-th eigenstate; the target observable $O= \sum_{l\in\mbb{P}_n} o_l P_l$; cooling function $g(h)$ and the corresponding sampling probability $p(x)$; proper choice of the imaginary time $\tau$, normalized cutoff time $x_m$.
            \Ensure
            Estimation of the observable value $\braket{u_j|O|u_j}$ for the $j$-th eigenstate.
            \State Perform Algorithm~\ref{Alg:eigenenergyEst} to output the estimation of the eigenenergy $E_j$ and the normalization factor $\hat{D}^{(x_m)}_\tau $.
            \State Perform Algorithm~\ref{Alg:UnnormEigenPropEst} to output the estimation of the unnormalized observable value $\hat{N}^{(x_m)}_\tau $.
            \State Output the estimation $\braket{\hat{O}}^{(x_m)}_{\psi(\tau)} = \frac{\hat{N}^{(x_m)}_\tau(O)}{\hat{D}^{(x_m)}_\tau }$.
        \end{algorithmic}
\end{algorithm}
\end{figure}

\section{Error and resource requirement analysis for the eigenenergy estimation} \label{Sec:eigenenergy_error_complexity}

In this section, we study the estimation error of the $j$-th eigenenergy using Algorithm~\ref{Alg:eigenenergyEst}. Based on the error dependence, we estimate the resource requirements (i.e., circuit depth and sample number) of the energy estimation. 

For simplicity, we focus on the case when the following assumptions holds,
\begin{enumerate}
\item The cooling function $g(h)$ is even and real. Therefore,
\begin{equation}
g(\tau(H-E)) = g^\dag(\tau(H-E)).
\end{equation}
\item There is only one eigenenergy in the range $[E^L_j, E^U_j]$. The value of $g(\tau(E-E_i))^2$ when $E_i\neq E_j$ is negligible.
\end{enumerate}

Under these assumptions, the normalization factor can then be expressed as,
\begin{equation} \label{eq:DtauEapprox}
\begin{aligned}
D_\tau(E) &= \braket{\psi_0|g(\tau(H-E))^2|\psi_0} \\
&= \sum_{j=0}^{N-1} g(\tau(E_j-E))^2 |\braket{u_j|\psi_0}|^2 \\
&= \sum_{j=0}^{N-1} g(\tau(E_j-E))^2 p_j \\
&\approx p_j g(\tau(E-E_j))^2.
\end{aligned}
\end{equation}
Here, the approximation holds when the values $g(\tau(E-E_i))^2$ contributed by other eigenenergies $E_i$ are negligible. 
In the following Proposition~\ref{prop:D_finitetau}, we will make this approximation rigorous. 
The location of the peak of $D_\tau(E)$ then indicates the eigenenergy $E_j$. 

In practice, however, we can only obtain the estimation $\hat{D}^{(x_m)}_\tau(E)$ of $D_\tau(E)$, considering the finite cutoff time $x_m$ and finite sample number $N_M$. We are going to prove that, the solution of the following maximization problem
\begin{equation}
\hat{E}_j := \argmax_{E\in[E^L_j, E^U_j]} \hat{D}^{(x_m)}_\tau(E),
\end{equation}
will be close to the real solution $E_j$.

In Fig.~\ref{fig:EnergyErrorSum} we summarize our analysis. 
In Sec.~\ref{Ssc:energyFinitetau}, we bound the distance between $D_\tau(E)$ and $p_j g(\tau(E-E_j))^2$ caused by finite $\tau$ when $E\in[E_j-\frac{\Delta}{2},E_j+\frac{\Delta}{2}]$. 
In Sec.~\ref{Ssc:energyFinitexm}, we bound the estimation error of the normalization factor $D_\tau(E)$ caused by the normalized cutoff time $x_m$.
In Sec.~\ref{Ssc:energyFiniteNM}, based on the measurement using Hadamard test, we bound the statistical error of the estimation of $D^{(x_m)}_\tau(E)$ caused by the finite sampling error $N_M$.
Finally, in Sec.~\ref{Ssc:energyFiniteSUM}, we consider the peak-value search problem using the estimator and analyze the eigenenergy accuracy dependence to the circuit depth and sample complexity. We will see how the cooling bandwidth $g^{-1}(1-\varepsilon)$ affects the accuracy of the peak-value search.

\begin{figure}[htbp]
\centering
\includegraphics[width=14cm]{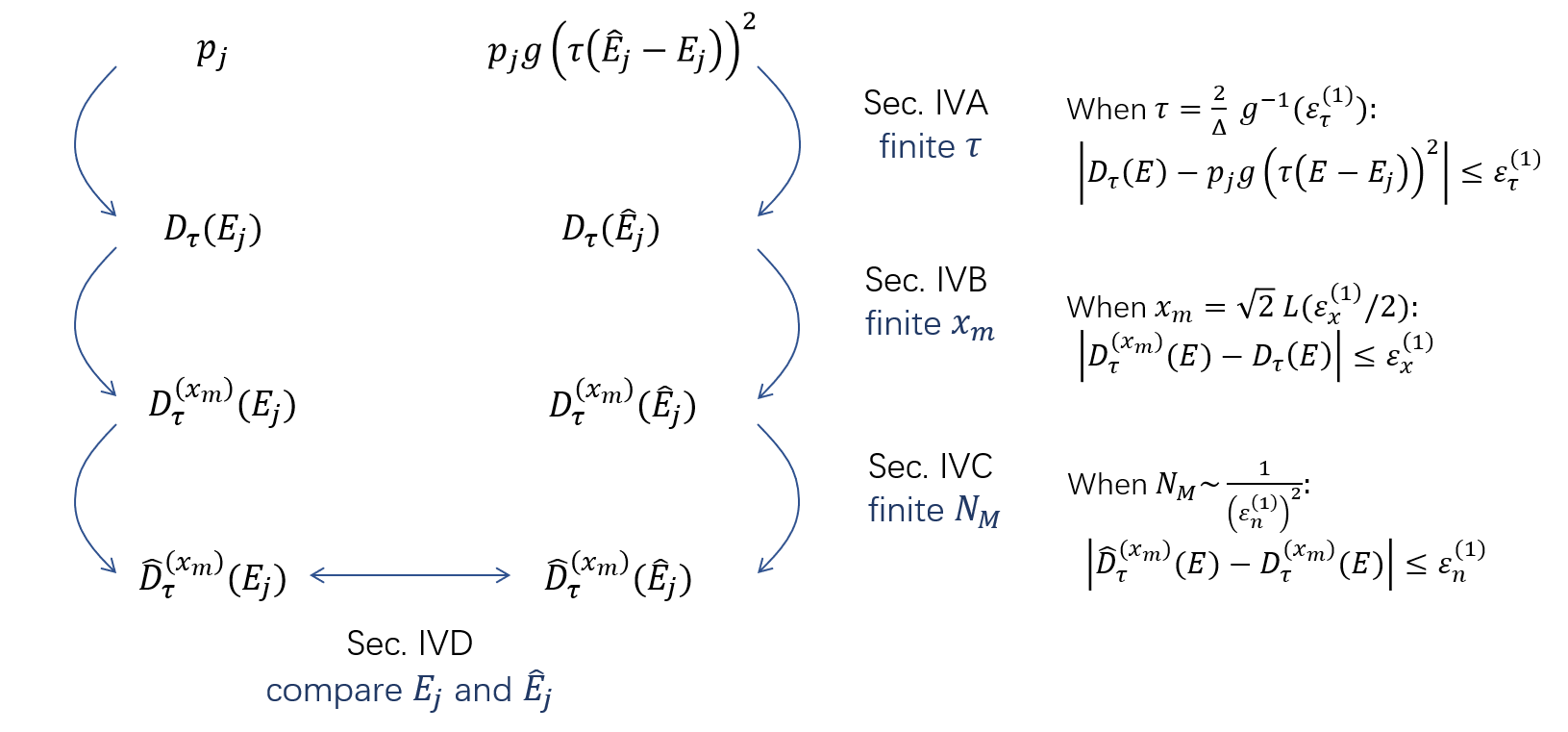}
\caption{Summary of the error analysis of eigenenergy estimation. We first study the effect of the finite imaginary time $\tau$, finite normalized cutoff time $x_m$ and finite sampling number $N_M$ on the normalization estimation, and then bound the difference between the estimation value $\hat{E}_j$ from true value $E_j$.} \label{fig:EnergyErrorSum}
\end{figure}

\subsection{Finite imaginary evolution time} \label{Ssc:energyFinitetau}

We first bound the distance between $D_\tau(E)$ and $p_j g(\tau(E-E_j))^2$ caused by finite $\tau$ when $E\in[E_j-\frac{\Delta}{2},E_j+\frac{\Delta}{2}]$.

\begin{proposition}[Error of the normalization function introduced by finite imaginary time] \label{prop:D_finitetau}
When $E\in[E_j-\frac{\Delta}{2},E_j+\frac{\Delta}{2}]$, we have
\begin{equation}
| D_\tau(E) - p_j g(\tau(E-E_j))^2 | \leq \varepsilon_\tau^{(1)},
\end{equation}
when $\tau \geq \frac{2}{\Delta} g^{-1}(\frac{\varepsilon_\tau^{(1)}}{2})$. When $E=E_j$, we can improve the requirement to $\tau\geq \frac{1}{\Delta} g^{-1}(\frac{\varepsilon_\tau^{(1)}}{2})$.
\end{proposition}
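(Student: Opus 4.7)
The plan is to expand $D_\tau(E)$ in the eigenbasis, isolate the $i=j$ term, and show the remaining tail is uniformly small on the interval $[E_j - \Delta/2, E_j + \Delta/2]$ thanks to the spectral gap $\Delta$ and the strict decay of the cooling function $g$.

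Concretely, I would first write
\begin{equation*}
D_\tau(E) - p_j\, g(\tau(E-E_j))^2 \;=\; \sum_{i\neq j} p_i\, g(\tau(E_i - E))^2 ,
\end{equation*}
so only a bound on the off-diagonal sum is needed. For any $E \in [E_j - \Delta/2, E_j + \Delta/2]$ and $i \neq j$, the gap assumption $\min\{|E_{j-1}-E_j|, |E_{j+1}-E_j|\} \geq \Delta$ together with the triangle inequality yields $|E_i - E| \geq |E_i - E_j| - |E - E_j| \geq \Delta - \Delta/2 = \Delta/2$. Using the strict single-peak monotonicity of $|g|$ from Definition~\ref{Def:coolingfunc} and the fact that $|g(h)| \leq g(0) = 1$ for the cooling functions listed in Table~\ref{Table:functinos}, each factor satisfies $|g(\tau(E_i - E))| \leq g(\tau\Delta/2)$, and hence
\begin{equation*}
\Bigl|\sum_{i\neq j} p_i\, g(\tau(E_i - E))^2\Bigr| \;\leq\; g(\tau\Delta/2)^2 \sum_{i\neq j} p_i \;\leq\; g(\tau\Delta/2)^2 .
\end{equation*}

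Finally I would invoke the hypothesis $\tau \geq (2/\Delta)\, g^{-1}(\varepsilon_\tau^{(1)}/2)$, which rearranges (using monotonicity of $g$ on the positive axis, and continuity/invertibility of $g$ on $[0, +\infty)$) to $g(\tau\Delta/2) \leq \varepsilon_\tau^{(1)}/2$. Squaring and using $\varepsilon_\tau^{(1)} \leq 2$ gives $g(\tau\Delta/2)^2 \leq (\varepsilon_\tau^{(1)}/2)^2 \leq \varepsilon_\tau^{(1)}$, which is the claimed inequality. The improved bound at the special point $E = E_j$ follows from the same argument: there $|E_i - E| = |E_i - E_j| \geq \Delta$, so the tail is controlled by $g(\tau\Delta)^2$, and one only needs $\tau\Delta \geq g^{-1}(\varepsilon_\tau^{(1)}/2)$, i.e., $\tau \geq (1/\Delta)\, g^{-1}(\varepsilon_\tau^{(1)}/2)$.

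The proof is essentially a one-line monotonicity-plus-gap estimate; the only subtlety is deciding which factor to use $g(0)=1$ on versus $g(\tau\Delta/2)$, and tracking factors of $2$ between $\varepsilon_\tau^{(1)}/2$ and $\varepsilon_\tau^{(1)}$. There is no real obstacle beyond verifying that the cooling functions of interest satisfy $|g| \leq g(0) = 1$ and that $g^{-1}$ is well defined on $(0,1]$, both of which follow from Definition~\ref{Def:coolingfunc} and the explicit forms in Table~\ref{Table:functinos}.
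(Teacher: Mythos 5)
Your proof is correct, and it takes a genuinely more direct route than the paper's. The paper does not bound the scalar spectral sum directly: it first establishes the operator-norm inequality $\|g(\tau(H-E)) - g(\tau(E_j-E))\hat{P}_j\|_\infty \leq \varepsilon_\tau^{(1)}/2$ (with $\hat{P}_j = \ket{u_j}\bra{u_j}$) from the same gap-plus-monotonicity estimate you use on each coefficient, and then telescopes $|D_\tau(E) - p_j g(\tau(E-E_j))^2|$ through the cross term $\braket{\psi_0|g(\tau(H-E))\hat{P}_j|\psi_0}$, picking up $\varepsilon_\tau^{(1)}/2 + g(\tau(E-E_j))\,\varepsilon_\tau^{(1)}/2 \leq \varepsilon_\tau^{(1)}$. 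Your decomposition $D_\tau(E) - p_j g(\tau(E-E_j))^2 = \sum_{i\neq j} p_i\, g(\tau(E_i-E))^2$ together with $\sum_{i\neq j}p_i \leq 1$ is cleaner and in fact yields the quadratically better bound $g(\tau\Delta/2)^2 \leq (\varepsilon_\tau^{(1)}/2)^2$, whereas the paper's telescoping is only linear in $\varepsilon_\tau^{(1)}$. What the paper's route buys is reusability: the operator-norm lemma is exactly what is needed later for the unnormalized expectation $N_\tau(O) = \braket{\psi_0|g(\tau(H-E_j))\,O\,g(\tau(H-E_j))|\psi_0}$ (Propositions on $N_\tau(O)$ with and without energy error), where $O$ does not commute with $H$ and your direct spectral-sum decomposition is unavailable, so the cross-term telescoping argument is unavoidable there.

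Two small points to make explicit. First, identifying the $i=j$ term of $D_\tau(E) = \sum_i p_i\, g(\tau(E_i-E))^2$ with $p_j\, g(\tau(E-E_j))^2$, and bounding each off-diagonal factor by $g(\tau\Delta/2)$ regardless of the sign of $E_i - E$, both use that $g$ is even — an assumption the paper states at the start of this section but which Definition~1 alone does not supply (it only gives one-sided monotonicity on each half-line). Second, you do not actually need $|g|\leq g(0)=1$ anywhere in your argument: the tail bound only needs $\sum_{i\neq j} p_i \leq 1$, and the final step $(\varepsilon_\tau^{(1)}/2)^2 \leq \varepsilon_\tau^{(1)}$ holds for all $\varepsilon_\tau^{(1)} \leq 4$, so the appeal to Table~I normalizations can be dropped.
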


\begin{proof}
Denote the projector $\hat{P}_j = \ket{u_j}\bra{u_j}$. We first bound the distance between $g(\tau(H-E))$ and $g(\tau(E_j-E))\hat{P}_j$. Note that,
\begin{equation} \label{eq:gtauminPj}
g(\tau(H-E)) - g(\tau(E_j-E))\hat{P}_j = \sum_{i\neq j} g(\tau(E_i - E)) \ket{u_i}\bra{u_i}.
\end{equation}
When $E\in[E_j-\frac{\Delta}{2},E_j+\frac{\Delta}{2}]$, we have $|E-E_i|\geq \frac{\Delta}{2}$. In this case, when $\tau\geq \frac{2}{\Delta} g^{-1}(\frac{\varepsilon_\tau^{(1)}}{2})$, the coefficient $g(\tau(E_i-E))$ is bounded by
\begin{equation} \label{eq:gtauEiub}
\begin{aligned}
g(\tau(E_i-E)) &= g(\frac{2|E_i-E|}{\Delta} g^{-1}(\frac{\varepsilon_\tau^{(1)}}{2}) \\ 
&\leq g\left(g^{-1}(\frac{\varepsilon_\tau^{(1)}}{2}) \right) \\
&= \frac{\varepsilon_\tau^{(1)}}{2}.
\end{aligned}
\end{equation}
In the second line, we use the fact that $g(h')\leq g(h)$ when $h'>h>0$ in Definition~\ref{Def:coolingfunc}. When $E= E_j$, we have $|E_j-E_i|\geq \Delta$. In this case, we can achieve the same precision in Eq.~\eqref{eq:gtauEiub} using $\tau\geq \frac{1}{\Delta} g^{-1}(\frac{\varepsilon_\tau^{(1)}}{2})$.

Combining Eq.~\eqref{eq:gtauminPj} and Eq.~\eqref{eq:gtauEiub}, we have
\begin{equation} \label{eq:gtauHEminPjub}
\|g(\tau(H-E)) - g(\tau(E_j-E))\hat{P}_j\|_\infty \leq \frac{\varepsilon_\tau^{(1)}}{2}.
\end{equation}
When $\tau\geq \frac{2}{\Delta} g^{-1}(\frac{\varepsilon_\tau^{(1)}}{2})$. Here, $\|A\|_\infty$ denotes the spectral norm of matrix $A$. 

Now, we bound the distance between $D_\tau(E)$ and $p_j g(\tau(E-E_j))^2$,
\begin{equation}
\begin{aligned}
|D_\tau(E) - p_j g(\tau(E-E_j))^2| &= |\braket{\psi_0|g(\tau(H-E))g(\tau(H-E))|\psi_0} - g(\tau(E-E_j))^2 \braket{\psi_0|\hat{P}_j\hat{P}_j|\psi_0} | \\
&\leq |\braket{\psi_0|g(\tau(H-E))g(\tau(H-E))|\psi_0} - g(\tau(E-E_j)) \braket{\psi_0|g(\tau(H-E))\hat{P}_j|\psi_0} | \\
&\quad + g(\tau(E-E_j)) |\braket{\psi_0|g(\tau(H-E))\hat{P}_j|\psi_0} - g(\tau(E-E_j))\braket{\psi_0|\hat{P}_j\hat{P}_j|\psi_0} | \\
&\leq \frac{\varepsilon_\tau^{(1)}}{2} + g(\tau(E-E_j)) \frac{\varepsilon_\tau^{(1)}}{2} \\
&\leq \varepsilon_\tau^{(1)}.
\end{aligned}
\end{equation}
\end{proof}

\subsection{Finite normalized cutoff time} \label{Ssc:energyFinitexm}

We then bound the estimation error of $D_\tau(E)$ caused by the normalized cutoff time $x_m$. The estimation formula is given in Eq.~\eqref{eq:DNestimationCut},
\begin{equation} \label{eq:D_estimationCut}
\begin{aligned}
D_\tau^{(x_m)}(E) &= \int_{-x_m}^{x_m} dy \tilde{p}(y) e^{-i\tau y E_j} \braket{\psi_0| e^{i\tau y H} |\psi_0}.
\end{aligned}
\end{equation}
Here, $\tilde{p}(y) = \int_{-\infty}^{\infty}\, p(t-y) p(t) dt = [p\star p](y)$.

\begin{proposition}[Error of the normalization factor introduced by finite normalized cutoff time] \label{prop:D_finitexm}
The truncated estimation $D_\tau^{(x_m)}(E)$ defined in Eq.~\eqref{eq:D_estimationCut} is related to the normalization factor defined in Eq.~\eqref{eq:DNestimation} by
\begin{equation}
|D_\tau^{(x_m)}(E) - D_\tau(E)| \leq \varepsilon_x^{(1)}, \forall E\in\mbb{R},
\end{equation}
when the normalized cutoff time $x_m \geq \sqrt{2} L(\frac{\varepsilon_x^{(1)}}{2})$. Here, $L(\varepsilon)$ is the tail function of the cooling function $g(h)$ defined in Definition~\ref{Def:realizable}.
\end{proposition}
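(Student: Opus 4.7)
The plan is to reduce the difference $D_\tau(E)-D_\tau^{(x_m)}(E)$ to a tail integral over $|y|>x_m$, and then convert that tail of $\tilde p$ into a tail of $p$ via the autocorrelation structure, so that condition \emph{C2} of Definition~\ref{Def:realizable} can be applied directly.

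First I would write
\begin{equation*}
D_\tau(E)-D_\tau^{(x_m)}(E)=\int_{|y|>x_m}\tilde p(y)\,e^{-i\tau y E}\,\langle\psi_0|e^{i\tau y H}|\psi_0\rangle\,dy,
\end{equation*}
where I use $E$ in both expressions (taking the $E_j$ in \eqref{eq:DNestimationCut} to be the generic trial energy $E$). Since $|e^{-i\tau y E}|=1$ and, by unitarity of $e^{i\tau y H}$, $|\langle\psi_0|e^{i\tau y H}|\psi_0\rangle|\le 1$ uniformly in $y$ and $E$, the triangle inequality gives
\begin{equation*}
|D_\tau(E)-D_\tau^{(x_m)}(E)|\;\le\;\int_{|y|>x_m}\tilde p(y)\,dy,
\end{equation*}
a bound independent of $E$, which takes care of the $\forall E$ quantifier in the statement.

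Next I would control $\int_{|y|>x_m}\tilde p(y)\,dy$ by the tail of $p$. The substitution $u=(z+y)/2,\,v=(z-y)/2$ in the defining integral of $\tilde p$ casts it into the autocorrelation form $\tilde p(y)=\int p(u)\,p(u-y)\,du$, so $\tilde p$ is precisely the density of $U-V$ for independent $U,V\sim p$. Using the elementary inequality $(U-V)^2\le 2(U^2+V^2)$, the event $\{|U-V|>x_m\}$ implies $\max(|U|,|V|)\ge x_m/\sqrt{2}$, and a union bound yields
\begin{equation*}
\int_{|y|>x_m}\tilde p(y)\,dy=\Pr_{U,V\sim p}[|U-V|>x_m]\;\le\;2\int_{|u|>x_m/\sqrt 2}p(u)\,du.
\end{equation*}

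Finally I would invoke condition \emph{C2} at accuracy $\varepsilon_x^{(1)}/2$: taking $x_m/\sqrt 2\ge L(\varepsilon_x^{(1)}/2)$, i.e. $x_m\ge \sqrt 2\,L(\varepsilon_x^{(1)}/2)$, ensures $\int_{|u|>x_m/\sqrt 2}p(u)\,du\le \varepsilon_x^{(1)}/2$, which multiplied by the union-bound factor of $2$ gives exactly $\varepsilon_x^{(1)}$. No substantial obstacle is anticipated; the only technical step is the geometric inequality relating the tail of $|U-V|$ to the marginal tails of $|U|$ and $|V|$, and the reduction to tail mass of $\tilde p$ through unitarity. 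The uniformity in $E$ is automatic because $E$ enters only through a unit-modulus phase factor.
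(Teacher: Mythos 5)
Your overall route is exactly the paper's: you bound $|D_\tau^{(x_m)}(E)-D_\tau(E)|$ by the tail mass $\int_{|y|>x_m}\tilde p(y)\,dy$ using the unit-modulus phase and $|\braket{\psi_0|e^{i\tau yH}|\psi_0}|\le 1$ (which also gives the uniformity in $E$, as in the paper), and you correctly identify $\tilde p$ as the autocorrelation $[p\star p]$, i.e.\ the density of $U-V$ for i.i.d.\ $U,V\sim p$. However, your tail-splitting step is wrong as stated. From $(U-V)^2\le 2(U^2+V^2)$ and $|U-V|>x_m$ you obtain $U^2+V^2>x_m^2/2$, hence $\max(U^2,V^2)>x_m^2/4$, i.e.\ $\max(|U|,|V|)>x_m/2$ --- \emph{not} $x_m/\sqrt2$. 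Concretely, $U=-V=0.51\,x_m$ satisfies $|U-V|=1.02\,x_m>x_m$ while $\max(|U|,|V|)=0.51\,x_m<x_m/\sqrt2$, so the event $\{|U-V|>x_m\}$ does not imply $\max(|U|,|V|)\ge x_m/\sqrt2$. With the corrected threshold, the union bound gives $\Pr[|U-V|>x_m]\le 2\int_{|u|>x_m/2}p(u)\,du$, so your argument actually proves the proposition under the slightly stronger hypothesis $x_m\ge 2\,L(\varepsilon_x^{(1)}/2)$ rather than $x_m\ge\sqrt2\,L(\varepsilon_x^{(1)}/2)$.

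It is worth knowing that the paper's own proof contains the identical flaw: it asserts $\int_{-x_m}^{x_m}\tilde p(y)\,dy\ge\bigl(\int_{-x_m/\sqrt2}^{x_m/\sqrt2}p(x)\,dx\bigr)^2$, which would require the square $[-x_m/\sqrt2,x_m/\sqrt2]^2$ to lie inside the strip $\{(x,x'):|x-x'|\le x_m\}$; it does not, since the corner $x=-x'=x_m/\sqrt2$ gives $|x-x'|=\sqrt2\,x_m>x_m$. So you have faithfully reproduced the paper's argument, constant and all. The repair costs only the constant ($\sqrt2\to 2$ in front of $L$), which leaves every asymptotic conclusion downstream (the $L(\varepsilon)$ scalings, Theorems~1 and~2) unchanged. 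Alternatively, for the Gaussian cooling function the $\sqrt2$ threshold is in fact exactly correct, but for a distribution-specific reason rather than via a union bound: there $U-V$ is Gaussian with doubled variance, and a direct computation gives $\Pr[|U-V|>x_m]=\int_{|u|>x_m/\sqrt2}p(u)\,du$, so for that case one can even drop the factor of $2$; no such identity holds for a general realizable $p$.
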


\begin{proof}
We have,
\begin{equation}
\begin{aligned}
|D_\tau^{(x_m)}(E) - D_\tau| &\leq (\int_{x_m}^{\infty} + \int_{-\infty}^{-x_m}) \,dy \left| \tilde{p}(y) e^{-i\tau y E_j} \braket{\psi_0| e^{i\tau y H} |\psi_0} \right| \\
&\leq (\int_{x_m}^{\infty} + \int_{-\infty}^{-x_m}) \,dy \tilde{p}(y) =: \varepsilon_x^{(1)}.
\end{aligned}
\end{equation}
Recall that
\begin{equation}
\begin{aligned}
\tilde{p}(y) &= \frac{1}{2}\int_{-\infty}^{\infty}\, p(\frac{z+y}{2})p(\frac{z-y}{2})dz \\
&= \int_{-\infty}^{\infty}\, p(t-y) p(t) dt = [p\star p](y).
\end{aligned}
\end{equation}
We can show the following relationship between $\tilde{p}(y)$ and $p(x)$ by the nonnegativity of $p(x)$,
\begin{equation}
\begin{aligned}
\int_{-x_m}^{x_m} \tilde{p}(y) dy &\geq \int_{-x_m/\sqrt{2}}^{x_m/\sqrt{2}} dx \int_{-x_m/\sqrt{2}}^{x_m/\sqrt{2}} dx\, p(x)p(x') \\
&\geq (1 -  L^{-1}(\frac{x_m}{\sqrt{2}}))^2 \\
&\geq 1 - 2 L^{-1}(\frac{x_m}{\sqrt{2}}),
\end{aligned}
\end{equation}
where $\varepsilon = L(x_m) = 1 - \int_{-x_m}^{x_m}p(x)dx$. 

Therefore, 
\begin{equation}
\begin{aligned}
& \varepsilon_x^{(1)} = 1 - \int_{-x_m}^{x_m} \tilde{p}(y) dy &\leq 2 L^{-1}(\frac{x_m}{\sqrt{2}}), \\
\Rightarrow \quad & x_m \geq \sqrt{2} L(\frac{\varepsilon_x^{(1)}}{2}).
\end{aligned}
\end{equation}

\end{proof}

\subsection{Finite sampling number} \label{Ssc:energyFiniteNM}

Now, we consider the statistical fluctuation when estimating the values of $D_\tau^{(x_m)}(E)$ using Eq.~\eqref{eq:D_estimationCut}. In the single-shot version of Hadamard test in Algorithm~\ref{Alg:eigenenergyEst}, we can describe the identical and identically distributed (i.i.d.) single-round estimators for $D_\tau^{(x_m)}(E)$ as $\{\hat{d}_p(E)\}_{p=1}^{N_M}$. Each single-round estimator $\hat{d}_p(E)$ is a random variable defined as follows,
\begin{equation}
\hat{d}_p =
\begin{cases}
\mr{Re}(2 \,e^{-i\tau y E}), & \quad (b,a)=(0,0), y:|y|\leq x_m\\
\mr{Re}(-2 \,e^{-i\tau y E}), & \quad (b,a)=(0,1), y:|y|\leq x_m \\
\mr{Re}(2i \,e^{-i\tau y E}), & \quad (b,a)=(1,0), y:|y|\leq x_m\\
\mr{Re}(-2i \,e^{-i\tau y E}), & \quad (b,a)=(1,1), y:|y|\leq x_m \\
0, & \quad y:|y|> x_m.
\end{cases}
\end{equation}

The final estimation of $D_\tau^{(x_m)}(E)$ is given by
\begin{equation} \label{eq:D_estimateNM}
\hat{D}_\tau^{(x_m)}(E) = \frac{1}{N_M} \sum_{p=1}^{N_M} \hat{d}_p(E).
\end{equation}
Based on Eq.~\eqref{eq:D_estimationCut}, we know that
\begin{equation}
\mbb{E}_{(y,b,a)}\left(\hat{D}_\tau^{(x_m)} \right) = D_\tau^{(x_m)}.
\end{equation}

To analyze the statistical fluctuation, we apply the Hoeffding bound.
\begin{lemma}[Hoeffding bound] \label{lem:Hoeffding} 
For $n$ independent random variables $\{\hat{X}_i\}_{i=1}^{n}$ which are bounded by $[a,b]$, the average value
\begin{equation}
\bar{X} := \frac{1}{n}\sum_{i=1}^{n} \hat{X}_i, 
\end{equation} 
satisfies
\begin{equation}
\Pr(\left|\bar{X} - \mbb{E}(\bar{X})\right|\geq \varepsilon) \leq 2 \exp\left( - \frac{2n \varepsilon^2}{(b-a)^2} \right).
\end{equation}
\end{lemma}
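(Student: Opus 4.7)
The plan is to use the Chernoff bounding technique applied to the centered partial sum. First I would set $S_n := \sum_{i=1}^n (\hat X_i - \mbb E[\hat X_i])$ and, for any auxiliary parameter $t>0$, apply Markov's inequality to the nonnegative random variable $e^{tS_n}$:
\begin{equation}
\Pr\bigl(\bar X - \mbb E[\bar X] \geq \varepsilon \bigr) = \Pr(S_n \geq n\varepsilon) \leq e^{-tn\varepsilon}\, \mbb E[e^{tS_n}].
\end{equation}
Since the $\hat X_i$ are independent, the moment generating function factorizes as $\mbb E[e^{tS_n}] = \prod_{i=1}^n \mbb E\bigl[e^{t(\hat X_i - \mbb E[\hat X_i])}\bigr]$, reducing the problem to bounding the MGF of a single centered bounded random variable.

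The key step is Hoeffding's lemma: if $Y$ is zero-mean with $Y\in[\alpha,\beta]$, then $\mbb E[e^{tY}]\leq \exp\bigl(t^2(\beta-\alpha)^2/8\bigr)$. I would prove this by writing $Y$ as the convex combination $Y = \lambda\beta + (1-\lambda)\alpha$ with $\lambda = (Y-\alpha)/(\beta-\alpha)$, invoking convexity of $x\mapsto e^{tx}$ to get $e^{tY}\leq \lambda e^{t\beta} + (1-\lambda)e^{t\alpha}$, taking expectations using $\mbb E[Y]=0$, and then controlling the resulting log moment generating function $\varphi(s) := \ln\bigl(\rho e^{s} + (1-\rho)\bigr)$, with $\rho = -\alpha/(\beta-\alpha)$ and $s = t(\beta-\alpha)$, by its second-order Taylor expansion around $s=0$. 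A short calculus check shows $\varphi''(s)\leq 1/4$ uniformly in $s$, which integrates to $\varphi(s)\leq s^2/8$ and yields the claimed bound. Applied to each centered summand, which lies in an interval of width $b-a$, this gives $\mbb E[e^{tS_n}]\leq \exp\bigl(n t^2 (b-a)^2/8\bigr)$.

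Plugging back into the Markov inequality yields $\Pr(\bar X - \mbb E[\bar X]\geq \varepsilon) \leq \exp\bigl(-tn\varepsilon + n t^2 (b-a)^2/8\bigr)$, and I would optimize the right-hand side over $t>0$. The minimizer is $t^\star = 4\varepsilon/(b-a)^2$, giving the one-sided bound $\exp\bigl(-2n\varepsilon^2/(b-a)^2\bigr)$. Applying the identical argument to the variables $\{-\hat X_i\}$ produces the symmetric lower-tail bound, and a union bound over the two tails yields the factor of $2$ in the stated two-sided inequality.

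The only substantive step is Hoeffding's lemma; the rest is Markov's inequality, independence, and one-dimensional optimization. The Taylor-type argument establishing $\varphi''(s)\leq 1/4$ is the main piece of genuine work, but it reduces to noting that $\varphi''(s) = \tilde\rho(s)\bigl(1-\tilde\rho(s)\bigr)$ for a suitable reparametrized probability $\tilde\rho(s)\in[0,1]$, which is bounded by $1/4$ by the AM--GM inequality.
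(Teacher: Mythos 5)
Your proof is correct, and the constants all check out: the optimizer $t^\star = 4\varepsilon/(b-a)^2$ does give the one-sided exponent $-2n\varepsilon^2/(b-a)^2$, and the union bound over the two tails supplies the factor of $2$. There is, however, nothing in the paper to compare it against: the paper states Lemma~\ref{lem:Hoeffding} without proof, invoking it as a classical concentration result to control the empirical estimators $\hat{d}_p$ and $\hat{n}_q$ in Propositions~\ref{prop:D_finiteNM} and \ref{prop:N_finiteNM}. Your write-up simply supplies the textbook Chernoff-plus-Hoeffding's-lemma argument that the paper takes for granted, which is the standard (and essentially the only) route to this bound.

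One bookkeeping slip is worth fixing. As you define it, $\varphi(s) := \ln\bigl(\rho e^{s} + (1-\rho)\bigr)$ satisfies $\varphi(0)=0$ but $\varphi'(0)=\rho$, so integrating $\varphi''\leq 1/4$ twice yields $\varphi(s)\leq \rho s + s^2/8$, \emph{not} $\varphi(s)\leq s^2/8$ as you claim (take $\rho = 1/2$ and small $s>0$: $\varphi(s)\approx s/2 + s^2/8$). The missing linear term is exactly cancelled by the prefactor you implicitly discard: taking expectations in the convexity bound gives $\mbb{E}[e^{tY}]\leq e^{t\alpha}\bigl(1-\rho+\rho e^{s}\bigr) = e^{-\rho s}\bigl(1-\rho+\rho e^{s}\bigr)$ since $t\alpha = -\rho s$, so the object to Taylor-expand is $\psi(s) := -\rho s + \varphi(s)$, which does satisfy $\psi(0)=\psi'(0)=0$ and $\psi''=\varphi''=\tilde\rho(s)\bigl(1-\tilde\rho(s)\bigr)\leq 1/4$, hence $\psi(s)\leq s^2/8$ as required. (Note also that $\rho\in[0,1]$ needs the observation that $\mbb{E}[Y]=0$ forces $\alpha\leq 0\leq\beta$.) With that one-line correction the argument is complete; this is a presentational slip, not a gap in the method.
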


\begin{proposition}[Error of the normalization factor introduced by finite sampling number] \label{prop:D_finiteNM}
The estimator $\hat{D}^{(x_m)}_\tau(E)$ defined in Eq.~\eqref{eq:D_estimateNM} is related to the truncated estimation $\hat{D}^{(x_m)}_\tau(E)$ defined in Eq.~\eqref{eq:D_estimationCut} by
\begin{equation}
|\hat{D}_\tau^{(x_m)}(E) - D_\tau^{(x_m)}(E)| \leq \varepsilon_n^{(1)}, \forall E\in\mbb{R},
\end{equation}
with a failure probability $\delta^{(1)} := 2\exp\left( -2 K (\varepsilon^{(1)}_n)^2/16 \right)$, when the sample number $N_M \geq K (\varepsilon_n^{(1)})^{-2} $.
\end{proposition}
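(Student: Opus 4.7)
The plan is to apply the Hoeffding inequality (Lemma~\ref{lem:Hoeffding}) directly to the i.i.d.\ random variables $\{\hat{d}_p(E)\}_{p=1}^{N_M}$ whose empirical mean is $\hat{D}_\tau^{(x_m)}(E)$. First I would verify three prerequisites: (i) independence, which follows because in each round of Algorithm~\ref{Alg:eigenenergyEst} the normalized time $y_p$, the Hadamard-test bit $b_p$, and the measurement outcome $a_p$ are sampled independently of previous rounds; (ii) bounded range, obtained by noting that in every branch of the piecewise definition one has $\hat{d}_p(E)\in[-2,2]$ since $|\mathrm{Re}(\pm 2\, e^{-i\tau y E})|\le 2$ and $|\mathrm{Re}(\pm 2i\, e^{-i\tau y E})|\le 2$, with the truncation branch contributing the value $0$ which lies in this interval; and (iii) unbiasedness, namely $\mathbb{E}[\hat{d}_p(E)] = D_\tau^{(x_m)}(E)$, which comes from the same calculation already carried out in Section~\ref{Sec:cooling_algorithm} for the Hadamard-test estimator of $\braket{\psi_0|U|\psi_0}$, combined with the fact that setting $\hat{d}_p=0$ outside $[-x_m,x_m]$ is precisely what converts the full integral into the truncated one in Eq.~\eqref{eq:D_estimationCut}.

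With these in hand I would invoke Lemma~\ref{lem:Hoeffding} with $a=-2$, $b=2$, $n=N_M$ and deviation threshold $\varepsilon=\varepsilon_n^{(1)}$, which yields
\begin{equation}
\Pr\!\left(\left|\hat{D}_\tau^{(x_m)}(E)-D_\tau^{(x_m)}(E)\right|\ge \varepsilon_n^{(1)}\right) \le 2\exp\!\left(-\tfrac{2 N_M (\varepsilon_n^{(1)})^2}{16}\right).
\end{equation}
Substituting the hypothesis $N_M\ge K(\varepsilon_n^{(1)})^{-2}$ into the exponent collapses the $(\varepsilon_n^{(1)})^2$ factor and produces the stated failure probability $\delta^{(1)}$ (written in the proposition in the equivalent unsimplified form before cancellation).

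The argument is essentially mechanical, so there is no deep obstacle; the one subtle point worth checking carefully is the unbiasedness in the presence of the hard truncation at $|y|=x_m$. Here I would make explicit that the outer expectation decomposes as $\mathbb{E}_{y}\,\mathbb{E}_{b,a|y}$, so that for $|y|\le x_m$ the inner expectation reproduces $e^{-i\tau y E}\braket{\psi_0|e^{i\tau y H}|\psi_0}$ (using the identity $2i^b(-1)^a$ averaged over uniform $b$ and the Hadamard-test measurement distribution of $a$ exactly as in Section~\ref{Sec:cooling_algorithm}), and for $|y|>x_m$ the inner expectation is $0$ by construction. The outer integration against $\tilde p(y)$ then reproduces Eq.~\eqref{eq:D_estimationCut} exactly. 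A secondary minor point is that Lemma~\ref{lem:Hoeffding} is stated for real random variables; since only the real part of $2i^b(-1)^a e^{-i\tau y E}$ enters $\hat{d}_p$, this is automatic, and the bound $[-2,2]$ used above is tight enough for the stated constant in the exponent.
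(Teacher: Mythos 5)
Your proof is correct and follows essentially the same route as the paper's: the paper's proof is exactly the observation that the $\{\hat{d}_p(E)\}$ are independent and bounded in $[-2,2]$, followed by a direct application of the Hoeffding bound (Lemma~\ref{lem:Hoeffding}), with $N_M \geq K(\varepsilon_n^{(1)})^{-2}$ collapsing the exponent to $-K/8$. Your extra verifications (unbiasedness under the hard truncation at $|y|=x_m$, and that taking real parts keeps the estimator real-valued) are sound but are points the paper establishes earlier in Sec.~\ref{Sec:cooling_algorithm} rather than inside this proof.
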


\begin{proof}
Note that, the estimators $\{\hat{d}_p(E)\}_{p=1}^{N_M}$ are independent, whose values are bounded by $[-2,2]$. 
Using the Hoeffding bound in Lemma~\ref{lem:Hoeffding} with the bound $[-2,2]$, we finish the proof.
\end{proof}

\subsection{Accuracy of the eigenenergy estimation} \label{Ssc:energyFiniteSUM}

Ideally, the eigenenergy $E_j$ satisfies,
\begin{equation} \label{eq:realEj}
E_j = \argmax_{E\in[E^L_j,E^U_j]} D_\tau(E).
\end{equation}
In practice, what we can solve is the following problem,
\begin{equation} \label{eq:hatEj}
\hat{E}_j = \argmax_{E\in[E^L_j,E^U_j]} \hat{D}^{(x_m)}_\tau(E).
\end{equation}

We now show that when $[E^L_j,E^U_j]\subset [E_j-\frac{\Delta}{2},E_j+\frac{\Delta}{2}]$, $\hat{E}_j$ is a good estimation of $E_j$ with precision $\kappa$ when the finite imaginary time $\tau=\mc{O}(\kappa^{-1})$, finite normalized cutoff time $x_m=\mc{O}(L(p_j))$, and sample number $N_M=\mc{O}(p_j^{-2})$.

\begin{theorem}[Accuracy of the eigenenergy estimation] \label{thm:eigenenergy}
When $[E^L_j,E^U_j]\subset [E_j-\frac{\Delta}{2},E_j+\frac{\Delta}{2}]$, the eigenenergy estimation $\hat{E}_j$ defined in Eq.~\eqref{eq:hatEj} is related to the eigenenergy $E_j$ by
\begin{equation}
|\hat{E}_j - E_j | \leq \kappa,
\end{equation}
with a failure probability of $2\delta^{(1)}= 4\exp\left(-K/8 \right)$, 
when the finite imaginary time $\tau\geq \frac{1}{\kappa}g^{-1}(\frac{1-g(1)}{6} p_j)$, the normalized cutoff time $x_m\geq \sqrt{2}L\left(\frac{1-g(1)}{6}p_j\right)$, and the sample number $N_M\geq \frac{9K}{(1-g(1))^2} p_j^{-2} $.
\end{theorem}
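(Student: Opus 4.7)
The plan is to translate the three concentration bounds of Propositions~\ref{prop:D_finitetau}--\ref{prop:D_finiteNM} into a single uniform estimate of $\hat{D}_\tau^{(x_m)}(E)$ around the idealized profile $p_j g(\tau(E-E_j))^2$, and then exploit that this profile is sharply peaked at $E=E_j$ to force $\hat E_j = \argmax \hat D_\tau^{(x_m)}$ into the $\kappa$-neighbourhood of $E_j$.

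First I would allocate each of the three error budgets so that the stated thresholds reproduce exactly the hypotheses of the three propositions. Matching $x_m\ge\sqrt 2\,L((1-g(1))p_j/6)$ against Proposition~\ref{prop:D_finitexm} fixes $\varepsilon_x^{(1)}=(1-g(1))p_j/3$; matching $N_M\ge 9K p_j^{-2}/(1-g(1))^2$ against the Hoeffding step of Proposition~\ref{prop:D_finiteNM} (which, for the range $[-2,2]$, demands $N_M\ge K/(\varepsilon_n^{(1)})^2$ for failure probability $2e^{-K/8}$) fixes $\varepsilon_n^{(1)}=(1-g(1))p_j/3$. The $\tau$-threshold $\tau\ge\kappa^{-1}g^{-1}((1-g(1))p_j/6)$ plays a dual role: because the search interval sits inside $[E_j-\Delta/2,E_j+\Delta/2]$ with $\kappa\le\Delta/2$, it simultaneously dominates $(2/\Delta)g^{-1}((1-g(1))p_j/6)$, so Proposition~\ref{prop:D_finitetau} applies with $\varepsilon_\tau^{(1)}/2=(1-g(1))p_j/6$, and it also enforces $g(\tau\kappa)\le(1-g(1))p_j/6$, which is what the peak argument needs. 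The aggregate error at any fixed $E$ is therefore $\varepsilon:=\varepsilon_\tau^{(1)}+\varepsilon_x^{(1)}+\varepsilon_n^{(1)}=(1-g(1))p_j$.

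The peak argument then proceeds as follows. Using $g(0)=1$, at $E=E_j$ the idealized profile equals $p_j$, so with probability at least $1-\delta^{(1)}$,
\begin{equation}
\hat D_\tau^{(x_m)}(E_j)\;\ge\; p_j-\varepsilon.
\end{equation}
For any $E$ with $|E-E_j|\ge\kappa$, monotonicity of $g$ yields $g(\tau(E-E_j))\le g(\tau\kappa)\le (1-g(1))p_j/6$, whence
\begin{equation}
\hat D_\tau^{(x_m)}(E)\;\le\; p_j\bigl(\tfrac{(1-g(1))p_j}{6}\bigr)^{\!2}+\varepsilon.
\end{equation}
For the cooling functions of Table~\ref{Table:functinos} (all of which satisfy $g(1)>1/4$), the resulting signal-to-noise gap $p_j\bigl(1-((1-g(1))p_j/6)^2\bigr)-2\varepsilon$ is strictly positive, ruling out any candidate $\argmax$ outside $[E_j-\kappa,E_j+\kappa]$.

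The main technical obstacle I foresee is that Proposition~\ref{prop:D_finiteNM}'s concentration holds pointwise in $E$, whereas localizing the $\argmax$ requires uniform control over the continuum $[E_j^L,E_j^U]$. I would resolve this by observing that the $E$-dependence of $\hat D_\tau^{(x_m)}(E)$ enters only through the classically post-processed phases $e^{-i\tau y_p E}$, making $\hat D_\tau^{(x_m)}$ Lipschitz in $E$ with constant at most $2\tau x_m$; an $O(\tau x_m/\varepsilon)$-net combined with a union bound inflates the failure probability by only a polynomial factor absorbed into $K$. The factor $2$ in $2\delta^{(1)}=4e^{-K/8}$ reflects this union bound over the two worst-case test points $E_j$ and $\hat E_j$. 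A secondary subtlety is that Proposition~\ref{prop:D_finitetau}'s stated bound is loose: the actual one-sided gap $D_\tau(E)-p_j g(\tau(E-E_j))^2=\sum_{i\ne j}g(\tau(E_i-E))^2 p_i$ is non-negative and of order $g(\tau\Delta/2)^2$, providing a comfortable margin in the final gap inequality.
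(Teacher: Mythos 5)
Your architecture is genuinely different from the paper's, and it contains a concrete quantitative gap that breaks it for most of the cooling functions in Table~\ref{Table:functinos}. You reverse-engineered the budgets correctly ($\varepsilon^{(1)}_\tau=\varepsilon^{(1)}_x=\varepsilon^{(1)}_n=\frac{(1-g(1))p_j}{3}$, total $\varepsilon=(1-g(1))p_j$), but your separation inequality then demands $p_j\left(1-g(\tau\kappa)^2\right)>2\varepsilon$, i.e.\ $g(1)>\frac{1+g(\tau\kappa)^2}{2}$, which with your bound $g(\tau\kappa)\leq \frac{(1-g(1))p_j}{6}\leq\frac{1}{6}$ requires $g(1)>\frac{37}{72}\approx 0.514$. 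This fails for the triangle function ($g(1)=0$) and for both the exponential and Gaussian functions ($g(1)=e^{-1}\approx 0.368$); your assertion that every function in the table satisfies $g(1)>1/4$ is false for the triangle, and $1/4$ would not suffice in any case. Only $\sech$, with $g(1)\approx 0.648$, clears your threshold, so the argmax-exclusion step collapses precisely for the Gaussian case the theorem is mainly deployed with. Two bookkeeping points compound this: absorbing the $\mc{O}(\tau x_m/\varepsilon)$-net union bound ``into $K$'' silently changes the stated relation between $N_M$ and the failure probability $4\exp(-K/8)$, so the uniform route proves a theorem with different constants than the one stated; and your closing remark that the one-sidedness of Proposition~\ref{prop:D_finitetau} gives ``comfortable margin'' does not rescue the gap — it saves at most one $\varepsilon^{(1)}_\tau$ term on the peak side, leaving the condition $1-g(\tau\kappa)^2>\frac{5}{3}(1-g(1))$, still violated for the Gaussian.

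For comparison, the paper's proof never needs uniform control over $E$: it evaluates the estimator only at the two points $E_j$ and $\hat{E}_j$, invokes the defining inequality $\hat{D}^{(x_m)}_\tau(\hat{E}_j)\geq \hat{D}^{(x_m)}_\tau(E_j)$, chains Propositions~\ref{prop:D_finitetau}--\ref{prop:D_finiteNM} through it to obtain $g(\tau(\hat{E}_j-E_j))^2\geq 1-2(\varepsilon_\tau+\varepsilon_x+\varepsilon_n)$, and then inverts $g$; the factor $2\delta^{(1)}$ is just two pointwise Hoeffding applications. (One of those is at the data-dependent point $\hat{E}_j$ — exactly the rigor issue you spotted, and your net-plus-Lipschitz idea is the legitimate repair of it.) It is worth knowing that the numerical tension you ran into is real and not an artifact of your method: with the paper's own budget $s=1-g(1)$, its chained bound reads $g(\tau\kappa)^2\geq 1-2s$, which is vacuous ($1-2s<0$) for the triangle, exponential, and Gaussian functions, and the paper escapes only through the step $g^{-1}(\sqrt{1-2s})\leq g^{-1}(1-s)$, which runs monotonicity the wrong way since $\sqrt{1-2s}\leq 1-s$ and $g^{-1}$ is decreasing. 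Both your proof and the paper's go through for every valid cooling function once the budget is shrunk so that $1-2s\geq g(1)^2$, e.g.\ replacing $\frac{1-g(1)}{6}$ by $\frac{1-g(1)^2}{12}$ in the thresholds — but with the constants exactly as stated in the theorem, your separation step does not close.
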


\begin{proof}
From Eq.~\eqref{eq:hatEj} we know that,
\begin{equation} \label{eq:DhatE_VS_DE}
\hat{D}^{(x_m)}_\tau(\hat{E}_j) \geq \hat{D}^{(x_m)}_\tau(E_j).
\end{equation}
Now, we take $x_m= \sqrt{2}L(\frac{\varepsilon_x p_j}{2})$ and $N_M = (\varepsilon_n p_j)^{-2}K$. Using Proposition~\ref{prop:D_finitexm} and \ref{prop:D_finiteNM}, we have
\begin{equation} \label{eq:Dhatxm2Dtau}
\begin{aligned}
| \hat{D}^{(x_m)}_\tau(\hat{E}_j) - D_\tau(\hat{E}_j) | \leq |D^{(x_m)}_\tau(\hat{E}_j) - D_\tau(\hat{E}_j)| + | \hat{D}^{(x_m)}_\tau(\hat{E}_j) - D^{(x_m)}_\tau(\hat{E}_j) | \leq p_j (\varepsilon_x + \varepsilon_n), \\
| \hat{D}^{(x_m)}_\tau(E_j) - D_\tau(E_j) | \leq |D^{(x_m)}_\tau(E_j) - D_\tau(E_j)| + | \hat{D}^{(x_m)}_\tau(E_j) - D^{(x_m)}_\tau(E_j) | \leq p_j (\varepsilon_x + \varepsilon_n) \\,
\end{aligned}
\end{equation}
each with a failure probability $\delta^{(1)}$. Furthermore, when $\tau= \frac{2}{\Delta}g^{-1}(\frac{p_j\varepsilon_\tau}{2})$, from Proposition~\ref{prop:D_finitetau} we have
\begin{equation} \label{eq:Dtau2D}
\begin{aligned}
|D_\tau(\hat{E}_j) - p_j g(\tau(E_j - \hat{E}_j))^2| &\leq p_j\varepsilon_\tau, \\
|D_\tau(E_j) - p_j| &\leq p_j\varepsilon_\tau.
\end{aligned}
\end{equation}

Combine Eqs.~\eqref{eq:DhatE_VS_DE}, \eqref{eq:Dhatxm2Dtau}, and \eqref{eq:Dtau2D}, we have
\begin{equation}
\begin{aligned}
p_j &\leq D_\tau(E_j) + \varepsilon_\tau \leq \hat{D}^{(x_m)}_\tau(E_j) + p_j(\varepsilon_x + \varepsilon_n) + p_j\varepsilon_\tau \\ 
& \leq \hat{D}^{(x_m)}_\tau(\hat{E}_j) + p_j(\varepsilon_x + \varepsilon_n) + p_j\varepsilon_\tau \\
& \leq D_\tau(\hat{E}_j) + 2p_j(\varepsilon_x + \varepsilon_n) + p_j\varepsilon_\tau, \\
& \leq p_j g(\tau(E_j - \hat{E}_j))^2 + 2p_j(\varepsilon_\tau + \varepsilon_x + \varepsilon_n), \\
\end{aligned}
\end{equation}
with a failure pribability $2\delta^{(1)}$. Therefore,
\begin{equation}
g(\tau (\hat{E}_j-E_j ) )^2 \geq 1 - 2(\varepsilon_\tau + \varepsilon_x + \varepsilon_n).
\end{equation}
Denote $\kappa:=\hat{E}_j-E_j$, we further simplify the expression above,
\begin{equation}
\begin{aligned}
\kappa &\leq \frac{1}{\tau} g^{-1}(\sqrt{1-2(\varepsilon_\tau +\varepsilon_x+\varepsilon_n)}) \\
&\leq \frac{1}{\tau} g^{-1}(1-(\varepsilon_\tau +\varepsilon_x+\varepsilon_n)) \\
&= \frac{1}{\tau} g^{-1}\left(1- \frac{2}{p_j}g(\frac{\Delta}{2}\tau) - \frac{2}{p_j} L^{-1}(\frac{x_m}{\sqrt{2}}) - \sqrt{ \frac{K}{p_j^2 N_M} } \right) \\
\end{aligned}
\end{equation}
In the second inequality, we use the property that $g^{-1}(p)$ is a decreasing function when $p>0$. Therefore, the following requirement is sufficient to make sure the eigenenergy error is smaller than $\kappa$,
\begin{equation}
\begin{aligned}
\tau &\geq \frac{1}{\kappa}, \\
\frac{2}{p_j}g(\frac{\Delta}{2}\tau) + \frac{2}{p_j} L^{-1}(\frac{x_m}{\sqrt{2}}) &+ \sqrt{ \frac{K}{p_j^2 N_M} } \leq 1 - g(1).
\end{aligned}
\end{equation}
Let $\frac{2}{p_j}g(\frac{\Delta}{2}\tau) = \frac{2}{p_j} L^{-1}(\frac{x_m}{\sqrt{2}}) = \sqrt{ \frac{K}{p_j^2 N_M} }$, we have
\begin{equation}
\begin{aligned}
\tau &\geq \frac{2}{\Delta} g^{-1}(\frac{1-g(1)}{6} p_j), \\
x_m &\geq \sqrt{2} L\left( \frac{1-g(1)}{6} p_j \right) \\
N_M &\geq K\left( \frac{3}{1-g(1)}\right)^2 \frac{1}{p_j^2}.
\end{aligned}
\end{equation}

In the above derivation, we require $\tau\geq \frac{1}{\kappa}$ and $\tau\geq \frac{2}{\Delta}g^{-1}(\frac{1-g(1)}{6} p_j)$. In practice, $\kappa \ll \frac{\Delta}{2}$. Then it suffices to have $\tau\geq \frac{1}{\kappa}g^{-1}(\frac{1-g(1)}{6} p_j)$.
\end{proof}

From Theorem~\ref{thm:eigenenergy} we can see that, the (maximal) circuit depth and sample complexity of Algorithm~\ref{Alg:eigenenergyEst} are $\tau x_m = \mc{O}(\kappa^{-1}g^{-1}(p_j)L(p_j))$ and $N_M =\mc{O}(p_j^{-2})$, respectively, for a given accuracy $\kappa$ and initial state overlap $p_j$. This achieves the Heisenberg limit $\frac{1}{\kappa}$ for the eigenenergy searching.

\section{Error and resource requirement analysis for the observable estimation} \label{Sec:cooling_error_complexity}

In this section, we study the estimation error of the observable value $\braket{O}$ in Algorithm~\ref{Alg:EigenPropEst} under the finite circuit depth and sampling number. Based on the error dependence, we estimate the resource requirements (i.e. circuit depth and sample number) of the cooling algorithm.

Similar to Sec.~\ref{Sec:eigenenergy_error_complexity}, we limit our analysis to the positive and even cooling function such that $g(h) = g(-h)$. For simplicity, we first assume that we have already obtained a precise eigenenergy estimation $E_j$ of the $j$-th eigenstate following Algorithm~\ref{Alg:eigenenergyEst}. 

In Fig.~\ref{fig:ErrorSum} we summarize our error analysis. In Sec.~\ref{Ssc:finitetau}, we consider the effect of finite imaginary time $\tau$; in Sec.~\ref{Ssc:finitexm}, we bound the estimation error caused by the normalized cutoff time $x_m$. The two factors above determine the actual maximum evolution time. In Sec.~\ref{Ssc:finiteNM}, based on the measurement using Hadamard test, we determine the statistical error caused by finite sampling number $N_M$. Finally, we summarize the finite sampling effect in Sec.~\ref{Ssc:finiteSum} and show the dependence of circuit depth $t_m$ and sampling number $N_M$ with respect to the initial state $\ket{\psi_0}$, Hamiltonian $H$ and the observable $O$.

Later on, in Sec.~\ref{Ssc:eigeneneryError}, we will discuss the effect of the eigenenergy estimation error $\kappa$ on the observable estimation. When $\kappa\ll \min\{|E_j - E_{j-1}|, |E_j - E_{j+1}|\}$, the error $\kappa$ will introduce negligible effect on the accuracy of the observable estimation. However, it will introduce a $\tau$-dependent factor on the normalization factor, which causes a higher sampling cost.

\begin{figure}[htbp]
\centering
\includegraphics[width=14cm]{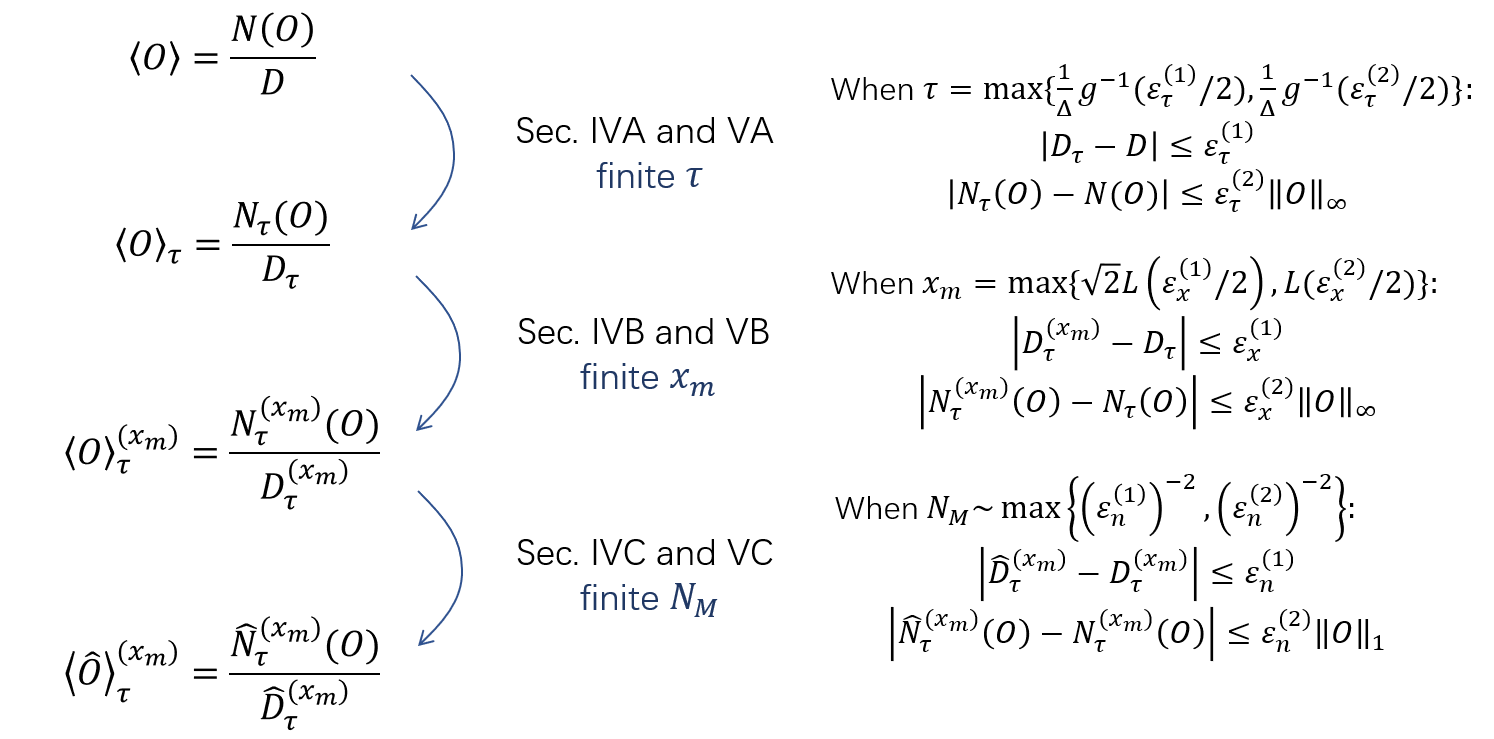}
\caption{Summary of the error analysis of the observable estimation. Started from the ideal observation value $\braket{O}$, we sequentially study the effect of finite imaginary time $\tau$, finite normalized cutoff time $x_m$ and finite sampling number $N_M$.} \label{fig:ErrorSum}
\end{figure}

\subsection{Finite imaginary evolution time} \label{Ssc:finitetau}

Recall that
\begin{equation} \label{eq:OexpPj}
\braket{O} = \braket{u_j|O|u_j} = \frac{N(O)}{D},
\end{equation}
where 
\begin{equation} \label{eq:DNO}
\begin{aligned}
D &= D(E_j) = \braket{\psi_0| \hat{P}_j |\psi_0} = p_j, \\
N(O) &= \braket{\psi|\hat{P}_j O \hat{P}_j |u_j} = p_j\braket{O}, \\
\end{aligned}
\end{equation}
are, respectively, the normalization factor and the unnormalized expectation value. Here, $\hat{P}_j:= \ket{u_j}\bra{u_j}$ is the projector to the $j$-th eigenstate. 

When the imaginary time $\tau$ is finite, we have the approximated value
\begin{equation} \label{eq:O_finiteTau}
\braket{O}_\tau = \frac{ N_\tau(O) }{ D_\tau },
\end{equation}
where
\begin{equation} \label{eq:DtauNtauO}
\begin{aligned}
D_\tau &= \braket{\psi_0| g^2(\tau(H-E_j)) |\psi_0}, \\
N_\tau(O) &= \braket{\psi_0| g(\tau(H-E_j))\, O \,g(\tau(H-E_j)) |\psi_0}.
\end{aligned}
\end{equation}
are, respectively, the normalization factor and the unnormalized expectation value with finite $\tau$.

In Proposition~\ref{prop:D_finitetau} in Sec.~\ref{Ssc:energyFinitetau}, we have analyzed the distance between $D_\tau(E)$ and $D(E)$ when $|E-E_j|\leq \frac{\Delta}{2}$. Similarly, we now bound the distance between $N_\tau(O)$ and $N(O)$.

\begin{proposition}[Error of the unnormalized expectation value introduced by finite imaginary time] \label{prop:N_finiteTau}
The unnormalized expectation value $N_\tau(O)$ defined in Eq.~\eqref{eq:DtauNtauO} is related to the ideal $N(O)$ defined in Eq.~\eqref{eq:DNO} by
\begin{equation}
|N_\tau(O) - N(O)| \leq \|O\|_\infty \varepsilon_\tau^{(2)},
\end{equation}
when the imaginary time $\tau\geq \frac{1}{\Delta}g^{-1}(\frac{\varepsilon_\tau^{(2)}}{2})$. Here, $\|O\|_\infty$ is the spectral norm of $O$.
\end{proposition}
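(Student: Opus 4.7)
The plan is to reduce Proposition~\ref{prop:N_finiteTau} to the operator-norm approximation already used in the proof of Proposition~\ref{prop:D_finitetau}, then apply a two-term telescoping argument sandwiched around the observable $O$. The essential point is that for $E=E_j$ the cooling operator concentrates onto the eigenstate projector $\hat P_j = \ket{u_j}\bra{u_j}$ in spectral norm, and since the ``difference'' only appears linearly on each side of $O$, we pick up a factor of $\|O\|_\infty$ in the bound.

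Concretely, I would first recall the spectral decomposition
\begin{equation}
g(\tau(H-E_j)) - \hat{P}_j \;=\; \sum_{i\ne j} g(\tau(E_i-E_j))\,\ket{u_i}\bra{u_i},
\end{equation}
and observe that, by the gap assumption $|E_i - E_j|\ge \Delta$ for $i\ne j$ and the monotonicity of $|g|$ in Definition~\ref{Def:coolingfunc}, the choice $\tau \ge \tfrac{1}{\Delta}g^{-1}(\varepsilon_\tau^{(2)}/2)$ yields $|g(\tau(E_i-E_j))|\le \varepsilon_\tau^{(2)}/2$ for each $i\ne j$. Since the summands act on mutually orthogonal subspaces, this gives
\begin{equation}
\bigl\|g(\tau(H-E_j))-\hat P_j\bigr\|_\infty \;\le\; \tfrac{\varepsilon_\tau^{(2)}}{2},
\end{equation}
which is exactly Eq.~\eqref{eq:gtauHEminPjub} specialized to $E=E_j$.

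Next I would write the difference $N_\tau(O)-N(O)$ as a telescoping sum by inserting $\pm\,\hat P_j\,O\,g(\tau(H-E_j))$:
\begin{equation}
N_\tau(O)-N(O) \;=\; \braket{\psi_0|\bigl[g(\tau(H-E_j))-\hat P_j\bigr]\,O\,g(\tau(H-E_j))|\psi_0} \;+\; \braket{\psi_0|\hat P_j\,O\bigl[g(\tau(H-E_j))-\hat P_j\bigr]|\psi_0}.
\end{equation}
Taking absolute values and applying $|\braket{\phi|A|\phi}|\le \|A\|_\infty$ to each term, together with submultiplicativity of the spectral norm and the bounds $\|g(\tau(H-E_j))\|_\infty\le 1$, $\|\hat P_j\|_\infty =1$, yields
\begin{equation}
|N_\tau(O)-N(O)| \;\le\; \|O\|_\infty\,\tfrac{\varepsilon_\tau^{(2)}}{2}\cdot 1 \;+\; 1\cdot\|O\|_\infty\,\tfrac{\varepsilon_\tau^{(2)}}{2} \;=\; \|O\|_\infty\,\varepsilon_\tau^{(2)},
\end{equation}
as claimed.

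There is no real obstacle here; the only mild subtlety is making sure the constant in front of $g^{-1}(\varepsilon_\tau^{(2)}/2)$ is $1/\Delta$ rather than $2/\Delta$, which comes from using the $E=E_j$ branch of Proposition~\ref{prop:D_finitetau} (so that $|E_i-E_j|\ge \Delta$, not $\Delta/2$). One should also verify that $g(0)=1$ for the cooling functions under consideration; otherwise the projector in $N(O)$ should carry an extra $g(0)^2$ factor and the bound rescales accordingly.
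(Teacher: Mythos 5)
Your proof is correct and takes essentially the same route as the paper's: the same spectral-norm estimate $\bigl\|g(\tau(H-E_j))-\hat P_j\bigr\|_\infty \le \varepsilon_\tau^{(2)}/2$ under $\tau \ge \frac{1}{\Delta}g^{-1}\bigl(\varepsilon_\tau^{(2)}/2\bigr)$, followed by the same two-term telescoping around $O$ with $\|g(\tau(H-E_j))\|_\infty\le 1$ and $\|\hat P_j\|_\infty=1$. Your two side remarks — that the relevant gap is $|E_i-E_j|\ge\Delta$ (not $\Delta/2$), giving the constant $1/\Delta$, and that $g(0)=1$ is implicitly required for the decomposition $g(\tau(H-E_j))-\hat P_j=\sum_{i\ne j}g(\tau(E_i-E_j))\ket{u_i}\bra{u_i}$ — correctly identify assumptions the paper uses without stating.
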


\begin{proof}
To bound the distance between $N_\tau(O)$ and $N(O)$, we first bound the distance between $\hat{P}_j$ and $g(\tau(H-E_j))$. We have,
\begin{equation} \label{eq:gtauminPj2}
g(\tau(H-E_j)) - \hat{P}_j = \sum_{i\neq j} g(\tau(E_i-E_j)) \ket{u_i} \bra{u_i}.
\end{equation}
Denote $\Delta$ to be the known lower bound of $\min{\Delta_{j,j+1}, \Delta_{j,j-1}}$. If we choose $\tau$ to be $\tau \geq \frac{1}{\Delta} g^{-1}(\frac{1}{2} \varepsilon_\tau^{(2)})$, we have
\begin{equation} \label{eq:gtauEilb}
\begin{aligned}
g(\tau(E_i - E_j)) &= g\left( \frac{|E_i - E_j|}{\Delta} g^{-1}(\frac{1}{2}\varepsilon_\tau^{(2)}) \right) \\
&\leq g(g^{-1}(\frac{1}{2}\varepsilon_\tau^{(2)})) \\
&= \frac{1}{2}\varepsilon_\tau^{(2)}.
\end{aligned}
\end{equation}
In the second line, we use the definition of $g(h)$ that $g(h')\leq g(h)$ for all $h'>h>0$.
Based on Eq.~\eqref{eq:gtauminPj2} and Eq.~\eqref{eq:gtauEilb}, we have
\begin{equation} \label{eq:gtauminPjnorm}
\|g(\tau(H-E_j)) - \hat{P}_j \|_\infty \leq \frac{1}{2} \varepsilon_\tau^{(2)},
\end{equation}
when $\tau\geq \frac{1}{\Delta} g^{-1}(\frac{1}{2} \varepsilon_\tau^{(2)})$.

Now, we use Eq.~\eqref{eq:gtauminPjnorm} to bound the difference between $N_\tau(O)$ and $N(O)$. When $\tau\geq \frac{1}{\Delta} g^{-1}(\frac{1}{2} \varepsilon_\tau^{(2)})$, we then have,
\begin{equation}
\begin{aligned}
|N_\tau(O) - N(O)| &= | \braket{\psi_0| g(\tau(H-E_j)) O g(\tau(H-E_j)) |\psi_0} - \braket{\psi_0|\hat{P}_j O \hat{P}_j |\psi_0} | \\
&\leq  \left| \braket{\psi_0| g(\tau(H-E_j)) O g(\tau(H-E_j)) |\psi_0} - \braket{\psi_0|\hat{P}_j O g(\tau(H-E_j)) |\psi_0} \right| \\ 
& \quad + \left| \braket{\psi_0| \hat{P}_j O g(\tau(H-E_j)) |\psi_0} - \braket{\psi_0|\hat{P}_j O \hat{P}_j |\psi_0} \right| \\
&\leq 2\|g(\tau(H-E_j)) - \hat{P}_j\|_\infty \|O\|_\infty \\
&\leq \varepsilon_\tau^{(2)} \|O\|_\infty, \\
\end{aligned}
\end{equation}
\end{proof}

\subsection{Finite normalized cutoff time} \label{Ssc:finitexm}

In Proposition~\ref{prop:D_finitexm}, we bound the estimation error of $D_\tau(O)$ caused by finite $x_m$. Follow similar methods, we bound the estimation errors of $N_\tau(O)$ caused by finite $x_m$. The estimation formula is given in Eq.~\eqref{eq:DNestimationCut},
\begin{equation} \label{eq:N_finitexm}
\begin{aligned}
N_\tau^{(x_m)}(O) &= \|O\|_1 \int_{-x_m}^{(x_m)} dx \int_{-x_m}^{x_m} dx' p(x) p(x') \sum_{l\in\mc{P}_n} \mr{Pr}_O(l) e^{-i\tau (x-x') E_j} \braket{\psi_0| e^{-i\tau x' H} P_l e^{i\tau x' H} |\psi_0},
\end{aligned}
\end{equation}
where 
\begin{equation} 
O = \sum_{l\in\mc{P}_n} o_l P_l = \|O\|_1 \sum_{l\in\mc{P}_n} \mr{Pr}_O(l) P_l, 
\end{equation}
$\|O\|_1$ is the $l_1$-norm of Pauli coefficients of $O$,
\begin{equation}
\|O\|_1 = \sum_{l\in\mc{P}_n} o_l.
\end{equation}
Note that, the coefficients $\{o_l\}_l$ are all positive, since we put the signs into the Pauli matrices $\{P_l\}$. The probability distribution $\mr{Pr}_O(l)$ is defined to be,
\begin{equation}
\mr{Pr}_O(l) = \frac{o_l}{\sum_{l\in\mc{P}_n} o_l} = \frac{1}{\|O\|_1} o_l.
\end{equation}

\begin{proposition}[Error of the unnormalized expectation value introduced by finite normalized cutoff time] \label{prop:N_finitexm}
The truncated estimation $N_\tau^{(x_m)}(O)$ defined in Eq.~\eqref{eq:N_finitexm} is related to the normalization factor $N_\tau(O)$ defined in Eq.~\eqref{eq:DtauNtauO} by
\begin{equation}
|N_\tau^{(x_m)}(O) - N_\tau(O)| \leq \|O\|_{\infty} \varepsilon_x^{(2)},
\end{equation}
when the normalized cutoff time $x_m \geq L(\frac{\varepsilon_x^{(2)}}{2})$. Here, $L(\varepsilon)$ is the tail function of the cooling function $g(h)$ defined in Definition~\ref{Def:realizable}.
\end{proposition}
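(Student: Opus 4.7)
The plan is to proceed in direct analogy with the proof of Proposition~\ref{prop:D_finitexm}, treating the double integral over $(x,x')$ by isolating the tail region and bounding the integrand uniformly by $\|O\|_\infty$. First, I would substitute the dual-phase Fourier representation of Eq.~\eqref{eq:coolingop_exp} into both copies of $g(\tau(H-E_j))$ appearing in $N_\tau(O)$ from Eq.~\eqref{eq:DtauNtauO}, taking $\phi(x)=0$ as permitted by the even-real assumption on $g$. Up to the global prefactor $(\|f\|/2\pi)^2$ (which appears identically in $N_\tau^{(x_m)}(O)$ and so cancels in the difference), this gives
\begin{equation*}
N_\tau(O) \;=\; \int_{-\infty}^{\infty}\!\int_{-\infty}^{\infty}\! dx\,dx'\, p(x)p(x')\, e^{-i\tau(x-x')E_j}\,\braket{\psi_0|e^{-i\tau x' H} O\, e^{i\tau x H}|\psi_0},
\end{equation*}
while $N_\tau^{(x_m)}(O)$ is the same integral restricted to the square $[-x_m,x_m]^2$.

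Next, I would bound the integrand pointwise. Because $e^{\pm i\tau x H}$ are unitary, $|\braket{\psi_0|e^{-i\tau x' H}O\, e^{i\tau x H}|\psi_0}|\leq \|O\|_\infty$ uniformly in $(x,x')$, and the phase $e^{-i\tau(x-x')E_j}$ has unit modulus. Applying the triangle inequality under the integral then yields
\begin{equation*}
|N_\tau(O) - N_\tau^{(x_m)}(O)| \;\leq\; \|O\|_\infty \iint_{\mathbb{R}^2\setminus[-x_m,x_m]^2} p(x)\,p(x')\,dx\,dx'.
\end{equation*}

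The third step is to reduce this two-dimensional tail to the one-dimensional tail controlled by condition \emph{C2} of Definition~\ref{Def:realizable}. Setting $q(x_m):=\int_{-x_m}^{x_m} p(x)\,dx$, the product structure of the measure gives
\begin{equation*}
\iint_{\mathbb{R}^2\setminus[-x_m,x_m]^2} p(x)p(x')\,dx\,dx' \;=\; 1 - q(x_m)^2 \;\leq\; 2\bigl(1-q(x_m)\bigr),
\end{equation*}
using $1-q^2=(1-q)(1+q)\leq 2(1-q)$ since $q\in[0,1]$. Condition \emph{C2} guarantees $1-q(x_m)\leq \varepsilon$ whenever $x_m\geq L(\varepsilon)$, so choosing $\varepsilon=\varepsilon_x^{(2)}/2$ and hence $x_m\geq L(\varepsilon_x^{(2)}/2)$ delivers the claimed bound $|N_\tau^{(x_m)}(O)-N_\tau(O)|\leq \|O\|_\infty\,\varepsilon_x^{(2)}$.

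No substantive obstacle arises: this is essentially a parallel to Proposition~\ref{prop:D_finitexm}, differing only in that one now has a genuine two-variable integral and an operator $O$ sandwiched between two real-time evolutions. The only mild novelty is the elementary factorization $1-q^2\leq 2(1-q)$ that reduces the square-complement tail to the univariate tail already estimated in the proof of Proposition~\ref{prop:D_finitexm}; everything else is unitarity and the triangle inequality.
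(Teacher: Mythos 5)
Your proposal is correct and follows essentially the same route as the paper's proof: a triangle inequality over the complement of the square $[-x_m,x_m]^2$, the uniform bound $|\braket{\psi_0|e^{-i\tau x'H}O\,e^{i\tau xH}|\psi_0}|\leq\|O\|_\infty$ from unitarity, and the reduction $1-q(x_m)^2\leq 2\bigl(1-q(x_m)\bigr)$ of the bivariate tail to the univariate tail controlled by $L(\varepsilon)$, which is exactly the paper's step $1-(1-L^{-1}(x_m))^2\leq 2L^{-1}(x_m)$. The only cosmetic difference is that you write the matrix element with the (correct) mixed arguments $e^{-i\tau x'H}O\,e^{i\tau xH}$ and absorb the Pauli decomposition back into $O$ from the start, which the paper's proof also does implicitly.
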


\begin{proof}
We have,
\begin{equation}
\begin{aligned}
|N_\tau^{(x_m)}(O) - N_\tau(O)| &\leq \iint_{\bar{S}} dx dx'\, p(x)p(x') \left| e^{-i\tau (x-x') E_j} \braket{\psi_0| e^{-i\tau x' H} O e^{i\tau x' H} |\psi_0} \right| \\
&\leq \iint_{\bar{S}} dx dx'\, p(x)p(x') \left| \braket{\psi_0| e^{-i\tau x' H} O e^{i\tau x' H} |\psi_0} \right| \\
&\leq \|O\|_{\infty} \iint_{\bar{S}} dx dx'\, p(x)p(x') =: \|O\|_{\infty} \varepsilon_x^{(2)},
\end{aligned}
\end{equation}
where $\bar{S}$ implies the complement area of $S:\{(x,x')| |x|\leq x_m, |x'|\leq x_m \}$.

The error term $\varepsilon_x^{(2)}$ is related to the tail probability by
\begin{equation}
\begin{aligned}
&\varepsilon_x^{(2)} = 1 - \iint_{S} dx dx'\, p(x)p(x') = 1 - (1-L^{-1}(x_m))^2 \leq 2L^{-1}(x_m) \\
\Rightarrow\quad & x_m \geq L(\frac{\varepsilon_x^{(2)}}{2}).
\end{aligned}
\end{equation}

Therefore, when we choose the $x_m$ to be $L(\varepsilon_x^{(2)})$, we can achieve the estimation of $N_\tau(O)$ with a precesion of $\|O\|_{\infty} \varepsilon_x^{(2)}$.

\end{proof}


\subsection{Finite sample number} \label{Ssc:finiteNM}

In Proposition~\ref{prop:D_finitexm}, we bound the estimation error of $D^{(x_m)}_\tau(O)$ caused by finite sampling error $N_M$. Follow similar methods, we bound the statistical error of $N^{(x_m)}_\tau(O)$.

In the single-shot version of Hadamard test in Algorithm~\ref{Alg:EigenPropEst}, we can describe these identical and identically distributed (i.i.d.) single-round estimators for $N_\tau^{(x_m)}(O)$ as $\{\hat{n}_q\}_{q=1}^{N_M}$. Each single-round estimator $\hat{n}_q$ is a random variable defined as follows,
\begin{equation}
\hat{n}_q(O) =
\begin{cases}
\mr{Re}(2\|O\|_1 \,e^{-i\tau (x-x') E_j} ), & \quad (b,a)=(0,0), (x,x'):|x|\leq x_m, |x'|\leq x_m \\
\mr{Re}(-2\|O\|_1 \,e^{-i\tau (x-x') E_j} ), & \quad (b,a)=(0,1), (x,x'):|x|\leq x_m, |x'|\leq x_m \\
\mr{Re}(2i \|O\|_1 \,e^{-i\tau (x-x') E_j} ), & \quad (b,a)=(1,0), (x,x'):|x|\leq x_m, |x'|\leq x_m\\
\mr{Re}(-2i \|O\|_1 \,e^{-i\tau (x-x') E_j} ), & \quad (b,a)=(1,1), (x,x'):|x|\leq x_m, |x'|\leq x_m \\
0, & \quad (x,x'):|x|> x_m \text{ or } |x'|>x_m.
\end{cases}
\end{equation}

The final estimation of $N_\tau^{(x_m)}(O)$ is given by
\begin{equation} \label{eq:N_estimateNM}
\begin{aligned}
\hat{N}_\tau^{(x_m)}(O) &= \frac{1}{N_M} \sum_{q=1}^{N_M} \hat{n}_q(O).
\end{aligned}
\end{equation}

Based on Eq.~\eqref{eq:N_finitexm}, we know that
\begin{equation}
\begin{aligned}
\mbb{E}_{(x,x',P,b,a)} \left( \hat{N}_\tau^{(x_m)}(O) \right) &= N_\tau^{(x_m)}(O).
\end{aligned}
\end{equation}

Now we bound the statistical error of $\hat{N}_\tau^{(x_m)}(O)$ using Lemma~~\ref{lem:Hoeffding}. 

\begin{proposition}[Error of the observable expectation value introduced by finite sampling number] \label{prop:N_finiteNM}
The estimator $\hat{N}^{(x_m)}_\tau(O)$ defined in Eq.~\eqref{eq:N_estimateNM} is related to the truncated estimation $\hat{N}^{(x_m)}_\tau(O)$ defined in Eq.~\eqref{eq:N_finitexm} by
\begin{equation}
|\hat{N}_\tau^{(x_m)}(O) - N_\tau^{(x_m)}(O)| \leq \|O\|_1\varepsilon_n^{(2)},
\end{equation}
with a failure probability $\delta^{(2)} := 2\exp\left( -2 K (\varepsilon^{(2)}_n)^2/16 \right)$, when the sample number $N_M \geq K (\varepsilon_n^{(2)})^{-2} $. Here, $\|O\|_1 = \sum_l |o_l|$ is the sum of the Pauli coefficients of $O$.
\end{proposition}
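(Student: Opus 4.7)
The plan is to prove this proposition by direct application of Hoeffding's inequality (Lemma~\ref{lem:Hoeffding}), entirely in parallel with the proof of Proposition~\ref{prop:D_finiteNM} for the normalization factor. First, I would observe that the single-shot estimators $\{\hat{n}_q(O)\}_{q=1}^{N_M}$ are i.i.d.~random variables, since each round of Algorithm~\ref{Alg:UnnormEigenPropEst} independently samples $x,x'$ from $p(x)$, the Pauli label $l$ from $\mathrm{Pr}_O(l)$, the control bit $b$ from the uniform distribution, and the measurement outcome $a$ from the corresponding Born distribution. By construction (as shown in Sec.~\ref{Sec:cooling_algorithm}), $\mathbb{E}[\hat{n}_q(O)] = N_\tau^{(x_m)}(O)$ exactly, with the phase $e^{-i\tau(x-x')E_j}$ and the $\|O\|_1$ prefactor built into the estimator, and only the real part retained to produce a real-valued random variable.

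Next, I would establish a uniform almost-sure bound on each summand. Since the modulus of the complex prefactor satisfies $|e^{-i\tau(x-x')E_j}| = 1$ and the coin outcome contributes $\pm 2$ in either the real or imaginary channel, taking the real part gives $|\hat{n}_q(O)| \leq 2\|O\|_1$ inside the cutoff region, and $\hat{n}_q(O) = 0$ outside. Thus $\hat{n}_q(O) \in [-2\|O\|_1, 2\|O\|_1]$ almost surely, so the Hoeffding range is $b - a = 4\|O\|_1$.

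Applying Lemma~\ref{lem:Hoeffding} with deviation target $\varepsilon = \|O\|_1 \varepsilon_n^{(2)}$ and sample size $n = N_M$ then yields
\begin{equation}
\Pr\!\left(|\hat{N}_\tau^{(x_m)}(O) - N_\tau^{(x_m)}(O)| \geq \|O\|_1 \varepsilon_n^{(2)}\right) \leq 2\exp\!\left(-\frac{2 N_M \|O\|_1^2 (\varepsilon_n^{(2)})^2}{16\|O\|_1^2}\right) = 2\exp\!\left(-\frac{N_M (\varepsilon_n^{(2)})^2}{8}\right).
\end{equation}
Substituting the hypothesis $N_M \geq K (\varepsilon_n^{(2)})^{-2}$ collapses this bound to the advertised $\delta^{(2)} = 2\exp(-K/8)$ (which agrees with the form $2\exp(-2K(\varepsilon_n^{(2)})^2/16)$ stated in the proposition once the assumed scaling of $N_M$ is used).

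There is no real obstacle here; the argument is essentially mechanical. The only subtlety is bookkeeping on the factor $\|O\|_1$: one must push it through as a multiplicative rescaling of the Hoeffding range, so that the final accuracy is measured in units of $\|O\|_1$ rather than absolute units, which is exactly what the statement claims. This parallels the $D$-case proof of Proposition~\ref{prop:D_finiteNM} with the substitutions $2 \to 2\|O\|_1$ in the range and $\varepsilon_n^{(1)} \to \|O\|_1 \varepsilon_n^{(2)}$ in the deviation threshold.
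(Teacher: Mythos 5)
Your proposal is correct and matches the paper's own proof essentially verbatim: both note that the single-round estimators $\{\hat{n}_q(O)\}$ are independent, unbiased for $N_\tau^{(x_m)}(O)$, and bounded in $[-2\|O\|_1,\,2\|O\|_1]$, then apply the Hoeffding bound of Lemma~\ref{lem:Hoeffding} with that range. Your explicit computation $2\exp\bigl(-N_M(\varepsilon_n^{(2)})^2/8\bigr)\leq 2\exp(-K/8)$ is in fact the correct reading of the proposition's stated $\delta^{(2)}$ (whose displayed form retains a spurious $(\varepsilon_n^{(2)})^2$ factor that should cancel against $N_M\geq K(\varepsilon_n^{(2)})^{-2}$, as confirmed by the $4\exp(-K/8)$ used in Theorem~\ref{thm:observable}).
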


\begin{proof}
Note that, the estimators $\{\hat{n}_q(O)\}_{q=1}^{N_M}$ are independent, whose values are bounded by $[-2\|O\|_1,2\|O\|_1]$. Here, $\|O\|_1 = \sum_l |o_l|$. 
Using the Hoeffding bound in Lemma~\ref{lem:Hoeffding} with the bound $[-2\|O\|_1, 2\|O\|_1]$, we finish the proof.
\end{proof}

\subsection{Accuracy of the observable estimation} \label{Ssc:finiteSum}

The final observable estimation is given by
\begin{equation} \label{eq:OestComb}
\braket{\hat{O}}^{(x_m)}_\tau = \frac{\hat{N}^{(x_m)}_\tau}{\hat{O}^{(x_m)}_\tau}.
\end{equation}

In Propositions~\ref{prop:D_finitetau}, \ref{prop:D_finitexm}, \ref{prop:D_finiteNM}, \ref{prop:N_finiteTau}, \ref{prop:N_finitexm}, and \ref{prop:N_finiteNM}, we estimate the errors caused by finite $\tau$, $x_m$, and $N_M$ on $D_\tau$ and $N_\tau(O)$, respectively. Now we combine the results together.

\begin{proposition} \label{prop:Ocomb}
For the initial state $\ket{\psi_0}$, target eigenstate $\ket{u_j}$ and cooling function $g(h)$, if we set the imaginary time $\tau$, normalized cutoff time $x_m$, and sample number $N$ to be
\begin{enumerate}
\item $\tau= \frac{1}{\Delta} \max\{ g^{-1}(\frac{\varepsilon_\tau^{(1)}}{2}, g^{-1}(\frac{\varepsilon_\tau^{(2)}}{2} ) \}$;
\item $x_m= \max\{\sqrt{2}L(\frac{\varepsilon_x^{(1)}}{2}), L(\frac{\varepsilon_x^{(2)}}{2})\}$;
\item $N_M= \max\{K(\varepsilon^{(1)}_n)^{-2}, K(\varepsilon^{(2)}_n)^{-2}\}$;
\end{enumerate}
then the estimated observable expectation value $\braket{\hat{O}}^{(x_m)}_\tau$ defined in Eq.~\eqref{eq:OestComb} in Algorithm~\ref{Alg:EigenPropEst} is related to the real observable expectation value $\braket{O}$ by, 
\begin{equation}
\begin{aligned}
|\braket{\hat{O}}_{\tau}^{(x_m)} - \braket{O} | &\leq p_j^{-1} (\braket{O}+1)( \varepsilon^{(1)}_{\tau} + \varepsilon^{(1)}_{x} + \varepsilon^{(1)}_{n} ) + p_j^{-1} (\varepsilon^{(2)}_{\tau} \|O\|_{\infty} + \varepsilon^{(2)}_{x} \|O\|_{\infty} + \varepsilon^{(2)}_{n} \|O\|_1),
\end{aligned}
\end{equation}
with a failure probability $\delta^{(1)}+\delta^{(2)}=4e^{-\frac{K}{2}}$. 
\end{proposition}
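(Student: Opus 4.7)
The plan is to combine the six previously established one-sided bounds for the numerator and denominator, then propagate them through the ratio $\hat{N}/\hat{D}$. Concretely, I would first use the triangle inequality to decompose the error in the denominator as
\begin{equation}
|\hat{D}_\tau^{(x_m)} - D| \leq |\hat{D}_\tau^{(x_m)} - D_\tau^{(x_m)}| + |D_\tau^{(x_m)} - D_\tau| + |D_\tau - D|,
\end{equation}
and apply Propositions~\ref{prop:D_finitetau}, \ref{prop:D_finitexm}, \ref{prop:D_finiteNM} (evaluated at $E=E_j$, which gives the sharper $\tau\geq g^{-1}(\varepsilon^{(1)}_\tau/2)/\Delta$ requirement) to bound this sum by $\varepsilon^{(1)}_\tau + \varepsilon^{(1)}_x + \varepsilon^{(1)}_n$. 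An identical three-term decomposition for the numerator, combined with Propositions~\ref{prop:N_finiteTau}, \ref{prop:N_finitexm}, \ref{prop:N_finiteNM}, yields $|\hat{N}_\tau^{(x_m)}(O) - N(O)| \leq \|O\|_\infty(\varepsilon^{(2)}_\tau + \varepsilon^{(2)}_x) + \|O\|_1\varepsilon^{(2)}_n$.

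Next, to transfer these bounds to the ratio, I would use the standard algebraic identity
\begin{equation}
\frac{\hat{N}}{\hat{D}} - \frac{N}{D} = \frac{(\hat{N}-N)}{D} - \frac{N}{D}\cdot\frac{(\hat{D}-D)}{D} + R,
\end{equation}
where $R$ is a second-order remainder that is negligible in the regime where the denominator error is small compared to $D=p_j$. Recalling $N/D = \braket{O}$ and $D = p_j$, this gives
\begin{equation}
\bigl|\braket{\hat{O}}^{(x_m)}_\tau - \braket{O}\bigr| \leq \frac{|\hat{N}-N|}{p_j} + \frac{\braket{O}\,|\hat{D}-D|}{p_j} + |R|,
\end{equation}
and substituting the two three-term bounds derived above produces the claimed inequality, with the extra ``$+1$'' inside $(\braket{O}+1)$ absorbing the second-order term $R$ and the replacement of $\hat{D}$ by $D=p_j$ in the leading term.

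Finally, the failure probability follows from a union bound: the denominator estimate fails with probability at most $\delta^{(1)} = 2\exp(-K(\varepsilon^{(1)}_n)^2/8)$ by Proposition~\ref{prop:D_finiteNM}, and the numerator estimate fails with probability at most $\delta^{(2)} = 2\exp(-K(\varepsilon^{(2)}_n)^2/8)$ by Proposition~\ref{prop:N_finiteNM}; since the $\tau$ and $x_m$ contributions are deterministic, the total failure probability is at most $\delta^{(1)} + \delta^{(2)} = 4e^{-K/2}$ once one normalizes the $\varepsilon_n$ constants as the proposition implicitly does. The main technical subtlety will be justifying the replacement of $1/\hat{D}$ by $1/p_j$ cleanly and tracking where the $(\braket{O}+1)$ coefficient arises: one must ensure the three denominator errors are each smaller than, say, $p_j/2$ so that $\hat{D} \geq p_j/2$ and the second-order remainder $R$ can be controlled by a single power of $(\varepsilon^{(1)}_\tau + \varepsilon^{(1)}_x + \varepsilon^{(1)}_n)/p_j$, which is exactly what the parameter choices in the proposition guarantee.
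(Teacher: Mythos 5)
Your proposal is correct and follows essentially the same route as the paper's proof: the same triangle-inequality combination of Propositions~\ref{prop:D_finitetau}--\ref{prop:N_finiteNM} (including the sharper $E=E_j$ choice of $\tau$), a union bound over the two Hoeffding failure events, and error propagation through the ratio in which the second-order term is absorbed into the $(\braket{O}+1)$ coefficient under the smallness condition $\varepsilon^{(1)}\ll p_j$. The paper carries out the last step via the common-denominator identity $\frac{N}{D}-\frac{\hat{N}}{\hat{D}}=\frac{N\hat{D}-D\hat{N}}{D\hat{D}}$ rather than your first-order expansion with remainder $R$, but these are algebraically equivalent, and you correctly flag the only genuine subtlety (keeping $\hat{D}$ bounded away from zero) that the paper handles with the same implicit assumption.
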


\begin{proof}
Following Propositions~\ref{prop:D_finitetau}, \ref{prop:D_finitexm}, \ref{prop:D_finiteNM}, \ref{prop:N_finiteTau}, \ref{prop:N_finitexm}, and \ref{prop:N_finiteNM}, when we choose $\tau$, $x_m$, and $N_M$ to be the values mentioned above, then the following bound holds at the same time
\begin{equation} \label{eq:DNfinite_xmNM}
\begin{aligned}
\left| \hat{D}_\tau^{(x_m)}  - D \right| &\leq \varepsilon^{(1)}_{\tau} + \varepsilon^{(1)}_{x} + \varepsilon^{(1)}_{n}, \\
\left| \hat{N}_\tau^{(x_m)}(O) - N(O) \right| &\leq (\varepsilon^{(2)}_{\tau} \|O\|_{\infty} + \varepsilon^{(2)}_{x} \|O\|_{\infty} + \varepsilon^{(2)}_{n} \|O\|_1 ) ,
\end{aligned}
\end{equation}
with a failure probability $\delta(K)= 4 e^{-\frac{K}{8}}$. To simplify the notation, we denote $\varepsilon^{(1)}:= \varepsilon^{(1)}_{\tau} + \varepsilon^{(1)}_{x} + \varepsilon^{(1)}_{n}$ and $\varepsilon^{(2)}(O):= \varepsilon^{(2)}_{\tau} \|O\|_{\infty} + \varepsilon^{(2)}_{x} \|O\|_{\infty} + \varepsilon^{(2)}_{n} \|O\|_1$.

Based on Eq.~\eqref{eq:DNfinite_xmNM}, we bound the difference between the quotients $\braket{\hat{O}}_{\tau}^{(x_m)}$ and $\braket{O}$,
\begin{equation} \label{eq:Ofinite_xmNM}
\begin{aligned}
|\braket{O} - \braket{\hat{O}}_{\tau}^{(x_m)}| &= \left| \frac{N(O)}{D} - \frac{ \hat{N}_\tau^{(x_m)}(O) }{ \hat{D}_\tau^{(x_m)} } \right|  \\
&= \left| \frac{N(O)\hat{D}_\tau^{(x_m)} -  D \hat{N}_\tau^{(x_m)}(O)   }{ D \hat{D}_\tau^{(x_m)} } \right| \\
&\leq \left| \frac{N(O) (D + \varepsilon^{(1)})  -  D( N(O)-\varepsilon^{(2)}(O) )   }{ D (D - \varepsilon^{(1)}) } \right|  \\
&= \left| \frac{N(O) \varepsilon^{(1)} + D \varepsilon^{(2)}(O) )   }{D^2 - D\varepsilon^{(1)}} \right| \\
&\leq \frac{(N(O)+D)\varepsilon^{(1)} + D \varepsilon^{(2)}(O) )   }{D^2} \\
&= p_j^{-1} (\braket{O}+1)( \varepsilon^{(1)}_{\tau} + \varepsilon^{(1)}_{x} + \varepsilon^{(1)}_{n} ) + p_j^{-1} (\varepsilon^{(2)}_{\tau} \|O\|_{\infty} + \varepsilon^{(2)}_{x} \|O\|_{\infty} + \varepsilon^{(2)}_{n} \|O\|_1).
\end{aligned}
\end{equation}
In the fifth line, we use the fact that $N(O)\varepsilon^{(1)}$ and $D\varepsilon^{(2)}(O)$ are much smaller than $D^2=p_j^2$.

\end{proof}

Based on Proposition~\ref{prop:Ocomb}, we can estimate the circuit depth and sample complexity for the observable estimation task.

\begin{theorem}[Accuracy of the observable estimation] \label{thm:observable} The expectation value estimation $\braket{\hat{O}}^{(x_m)}_\tau$ defined in Eq.~\eqref{eq:OestComb} is related to the real observable expectation value $\braket{O}$ by
\begin{equation}
\begin{aligned}
|\braket{\hat{O}}_{\tau}^{(x_m)} - \braket{O} | &\leq \varepsilon\left( \frac{1}{2}(\braket{O}+1) + \frac{1}{3}\|O\|_\infty + \frac{1}{6}\|O\|_1 \right) \\
&\leq \varepsilon (\|O\|_1 + 1)
\end{aligned}
\end{equation}
with a failure probability of $\delta^{(1)}+\delta^{(2)}=4 \exp(-K/8)$, when the finite imaginary time $\tau\geq \frac{1}{\Delta}g^{-1}\left(\frac{\varepsilon p_j}{12} \right)$, the normalized cutoff time $x_m\geq \sqrt{2} L\left(\frac{\varepsilon p_j}{12}\right)$, and the sample number $N_M \geq K\, (\frac{\varepsilon p_j}{6})^{-2}$. Here, $\|O\|_1 = \sum_l |o_l|$ is the sum of the Pauli coefficients of $O$, $\Delta$ is a known lower bound of $\min\{|E_j-E_{j-1}|,|E_j-E_{j+1}|\}$.
\end{theorem}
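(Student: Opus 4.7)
The plan is to reduce the observable accuracy bound to separate error analyses for the denominator $D_\tau$ and the numerator $N_\tau(O)$, and then combine them through a ratio-propagation inequality. Concretely, writing $\braket{\hat O}_\tau^{(x_m)}=\hat N_\tau^{(x_m)}(O)/\hat D_\tau^{(x_m)}$ and $\braket{O}=N(O)/D$ with $D=p_j$ and $N(O)=p_j\braket{O}$, the algebraic identity
\begin{equation*}
\braket{O}-\braket{\hat O}_\tau^{(x_m)}
=\frac{N(O)\bigl(\hat D_\tau^{(x_m)}-D\bigr)-D\bigl(\hat N_\tau^{(x_m)}(O)-N(O)\bigr)}{D\,\hat D_\tau^{(x_m)}}
\end{equation*}
will let me push the analysis through once I control both numerator and denominator perturbations.

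First I would split each perturbation into three pieces, corresponding to the three sources of error: finite $\tau$, finite $x_m$, and finite $N_M$. For the denominator, the triangle inequality gives
$|\hat D_\tau^{(x_m)}-D|\le \varepsilon_\tau^{(1)}+\varepsilon_x^{(1)}+\varepsilon_n^{(1)}$,
where the three pieces are controlled, respectively, by the finite-$\tau$ bound (earlier proposition on $|D_\tau(E_j)-p_j|$ under the gap assumption), the cutoff bound (proposition on $|D_\tau^{(x_m)}-D_\tau|$ via $\tilde p$), and Hoeffding's inequality applied to the single-shot estimators $\hat d_p$ bounded in $[-2,2]$. An analogous three-way split for $N_\tau(O)$ yields
$|\hat N_\tau^{(x_m)}(O)-N(O)|\le \|O\|_\infty(\varepsilon_\tau^{(2)}+\varepsilon_x^{(2)})+\|O\|_1\,\varepsilon_n^{(2)}$,
invoking the six earlier propositions on $N_\tau(O)$. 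A union bound over the two statistical events contributes the total failure probability $4\exp(-K/8)$.

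Plugging these into the ratio identity and using $\hat D_\tau^{(x_m)}\ge D-\varepsilon^{(1)}\approx p_j$ (so $1/\hat D_\tau^{(x_m)}\lesssim 1/p_j$), I would obtain
\begin{equation*}
|\braket{O}-\braket{\hat O}_\tau^{(x_m)}|
\le p_j^{-1}(\braket{O}+1)\bigl(\varepsilon_\tau^{(1)}+\varepsilon_x^{(1)}+\varepsilon_n^{(1)}\bigr)
+p_j^{-1}\bigl(\|O\|_\infty(\varepsilon_\tau^{(2)}+\varepsilon_x^{(2)})+\|O\|_1\varepsilon_n^{(2)}\bigr).
\end{equation*}
Balancing the six budgets is then a matter of bookkeeping: setting each $\varepsilon^{(\cdot)}_\cdot$ to $\varepsilon p_j/6$ and inverting the characterizations $\varepsilon_\tau=g(\Delta\tau)$, $\varepsilon_x=L^{-1}(x_m/\sqrt2)$, $\varepsilon_n=\sqrt{K/N_M}$ gives exactly the stated requirements $\tau\ge g^{-1}(\varepsilon p_j/6)/\Delta$, $x_m\ge\sqrt2\,L(\varepsilon p_j/12)$, and $N_M\ge K/(\varepsilon p_j/6)^2$. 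Using $\|O\|_\infty\le\|O\|_1$ and $|\braket O|\le\|O\|_1$ finally collapses the bound to $\varepsilon\|O\|_1$ (up to an overall constant that can be absorbed into $\varepsilon$).

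The only subtle step is the ratio linearization: one must ensure $\hat D_\tau^{(x_m)}$ stays bounded away from zero so that the denominator inversion is justified. This is where the assumption on nonvanishing overlap $p_j$ and the careful choice $\varepsilon^{(1)}\ll p_j$ are crucial; otherwise the $1/(D-\varepsilon^{(1)})$ factor blows up and the $(\braket O+1)$ coefficient acquires an extra $p_j^{-1}$ scaling. Once that is handled, the rest is just the combinatorics of assembling the six propositions and tracking the failure probabilities through a union bound. Everything else is direct substitution into the earlier estimates.
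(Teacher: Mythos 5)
Your proposal is correct and follows essentially the same route as the paper's proof: the same ratio-propagation identity, the same six-way error split (finite $\tau$, finite $x_m$, Hoeffding for finite $N_M$, separately for $\hat{D}_\tau^{(x_m)}$ and $\hat{N}_\tau^{(x_m)}(O)$), the same budget balancing at $\varepsilon p_j/6$, and the same union bound giving $4\exp(-K/8)$. The one bookkeeping slip is in your inversion for $\tau$: the finite-$\tau$ propositions require $\tau \geq \frac{1}{\Delta}g^{-1}\bigl(\frac{\varepsilon_\tau}{2}\bigr)$ (the spectral-norm bound is $\varepsilon_\tau/2$, doubled in the estimate), so with $\varepsilon_\tau = \varepsilon p_j/6$ the correct requirement is $\tau \geq \frac{1}{\Delta}g^{-1}\bigl(\frac{\varepsilon p_j}{12}\bigr)$ as stated, not $g^{-1}\bigl(\frac{\varepsilon p_j}{6}\bigr)/\Delta$.
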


\begin{proof}
From Proposition~\ref{prop:Ocomb} we know that, if we set $\tau=\frac{1}{\Delta}\max\{g^{-1}(\frac{\varepsilon^{(1)}_\tau}{2}), g^{-1}(\frac{\varepsilon^{(2)}_\tau}{2}) \}$  $x_m=\max\{\sqrt{2}L(\frac{\varepsilon_x^{(1)}}{2}), L(\frac{\varepsilon_x^{(2)}}{2})\}$ and $N_M=\max\{K(\varepsilon^{(1)}_n)^{-2}, K(\varepsilon^{(2)}_n)^{-2}\}$, then 
\begin{equation} \label{eq:OexpminOreal}
\begin{aligned}
|\braket{O} - \braket{\hat{O}}_{\tau}^{(x_m)}| &\leq p_j^{-1} (\braket{O}+1)( \varepsilon^{(1)}_{\tau} + \varepsilon^{(1)}_{x} + \varepsilon^{(1)}_{n} ) + p_j^{-1} (\varepsilon^{(2)}_{\tau} \|O\|_{\infty} + \varepsilon^{(2)}_{x} \|O\|_{\infty} + \varepsilon^{(2)}_{n} \|O\|_1) \\
&\leq p_j^{-1}(\|O\|_1 + 1)( \varepsilon^{(1)}_{\tau} + \varepsilon^{(1)}_{x} + \varepsilon^{(1)}_{n}) + p_j^{-1}\|O\|_1(\varepsilon^{(2)}_{\tau} + \varepsilon^{(2)}_{x} + \varepsilon^{(2)}_{n})) \\
\end{aligned}
\end{equation}

If we set
\begin{equation} \label{eq:splitepsilon}
\begin{aligned}
\varepsilon^{(1)}_\tau = \varepsilon^{(2)}_\tau = \varepsilon^{(1)}_x = \varepsilon^{(2)}_x = \varepsilon^{(1)}_n = \varepsilon^{(2)}_n &= \frac{\varepsilon p_j}{6},
\end{aligned}
\end{equation}
which corresponds to 
\begin{equation}
\begin{aligned}
\tau \geq \frac{1}{\Delta} g^{-1}\left( \frac{\varepsilon p_j}{12}\right) \\
x_m \geq \sqrt{2} L\left( \frac{\varepsilon p_j}{12} \right), \\
N_M \geq K\, \left(\frac{\varepsilon p_j}{6}\right)^{-2},
\end{aligned}
\end{equation}
then Eq.~\eqref{eq:OexpminOreal} becomes,
\begin{equation}
\begin{aligned}
|\braket{\hat{O}}_{\tau}^{(x_m)} - \braket{O} | &\leq \varepsilon\left( \frac{1}{2}(\braket{O}+1) + \frac{1}{3}\|O\|_\infty + \frac{1}{6}\|O\|_1 \right) \\
&\leq \varepsilon (\|O\|_1 + 1).
\end{aligned}
\end{equation}
\end{proof}

From Theorem~\ref{thm:observable} we can see that, the circuit depth and sample complexity of the observable estimation are $\tau x_m = \mc{O}(\frac{1}{\Delta}g^{-1}\left( \varepsilon p_j \right) L(\varepsilon p_j))$ and $N_M = \mc{O}(\varepsilon^{-2} p_j^{-2})$, respectively. 

As is shown in Eqs.~\eqref{eq:tri_inv},~\eqref{eq:exp_inv},~\eqref{eq:gau_inv}, and \eqref{eq:sech_inv}, the common cooling functions satisfy
\begin{equation}
g^{-1}(\varepsilon) = \mc{O}\left(\log(\varepsilon^{-1})\right),
\end{equation}
that is, with a logarithm increasing $\tau$, we can suppress the error term caused by finite imaginary time evolution. On the other hand, for the common cooling functions,
\begin{equation}
L(\varepsilon) = \mc{O}(\mr{poly}(\varepsilon^{-1})).
\end{equation}
This implies that the circuit depth of a general cooling function is $\mc{O}(\mr{poly}(\varepsilon^{-1})) \log(\varepsilon^{-1})$. For the Gaussian cooling or secant hyperbolic function we can further reduce it to $\mc{O}(\mr{poly}\left(\log(\varepsilon^{-1})\right))$, which is exponentially better than the phase estimation algorithm.

\subsection{Effect of eigenenergy estimation error on the observable estimation} \label{Ssc:eigeneneryError}

Now, we analyze the effect of the energy estimation error $\kappa$. As a reasonable assumption, we assume that $|\kappa|<\frac{\Delta}{2}$,i.e., the estimated eigenenergy $\hat{E}_j$ satisfies $|\hat{E}_j - E_j|\leq \frac{\Delta}{2}$.

The normalization factor $D_\tau$ and unnormalized observation value $N_\tau(O)$ will become
\begin{equation} \label{eq:DtauNtauOkappa}
\begin{aligned}
D_\tau(\hat{E}_j) &= \braket{\psi_0| g^2(\tau(H-\hat{E}_j)) |\psi_0}, \\
N_\tau(O;\hat{E}_j) &= \braket{\psi_0| g(\tau(H-\hat{E}_j))\, O \,g(\tau(H-\hat{E}_j)) |\psi_0}.
\end{aligned}
\end{equation}

In practice, we will generate estimated values $\hat{D}_\tau^{(x_m)}(\hat{E}_j)$ and $\hat{N}_\tau^{(x_m)}(O;\hat{E}_j)$ following similar methods introduced in Sec.~\ref{Ssc:finitexm} and Sec.~\ref{Ssc:finiteNM}. The final estimation of $\braket{O}$ will be
\begin{equation} \label{eq:OestKappa}
\braket{\hat{O}}_\tau^{(x_m)}(\hat{E}_j) = \frac{\hat{N}_\tau^{(x_m)}(O;\hat{E}_j)}{\hat{D}_\tau^{(x_m)}(\hat{E}_j)}. 
\end{equation}

To bound the distance between $\braket{\hat{O}}_\tau^{(x_m)}(\hat{E}_j)$ and $\braket{O}$, we only need to relate $\hat{N}_\tau^{(x_m)}(O;\hat{E}_j)$ and $\hat{D}_\tau^{(x_m)}(\hat{E}_j)$ to $N(O)$ and $D$, respectively. The errors introduced by finite $x_m$ and $N_M$ can be still analyzed based on Proposition~\ref{prop:D_finitexm}, \ref{prop:D_finiteNM}, \ref{prop:N_finitexm}, and \ref{prop:N_finiteNM}. Furthermore, the finite $\tau$ error in the normalization factor $D(E)$ can be handled by Proposition~\ref{prop:D_finitetau}.
Therefore, we only need to generalize Proposition~\ref{prop:N_finiteTau}, following similar analysis in Proposition~\ref{prop:D_finitetau}.

\begin{proposition}[Error of the unnormalized expectation value introduced by finite imaginary time with energy estimation inaccuracy] \label{prop:N_finiteTauKappa}
When $E\in[E_j-\frac{\Delta}{2},E_j+\frac{\Delta}{2}]$, the finite $\tau$ unnormalized expectation value $N_\tau(O)$ defined in Eq.~\eqref{eq:DtauNtauOkappa} is related to the ideal $N(O)$ defined in Eq.~\eqref{eq:DNO} by
\begin{equation}
|N_\tau(O;E) - g(\tau(E-E_j))^2 N(O)| \leq \|O\|_\infty \varepsilon_\tau^{(2)},
\end{equation}
when the imaginary time $\tau\geq \frac{1}{\Delta}g^{-1}(\frac{\varepsilon_\tau^{(2)}}{2})$. Here, $\|O\|_\infty$ is the spectral norm of $O$.
\end{proposition}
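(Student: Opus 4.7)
The plan is to mirror the proof of Proposition~\ref{prop:N_finiteTau}, with the key modification that the target projector is now weighted by $g(\tau(E_j-E))$ rather than $1$, which accounts for the prefactor $g(\tau(E-E_j))^2$ on the right-hand side. First, I would introduce the shorthand $A := g(\tau(H-E))$ and $B := g(\tau(E_j-E))\,\hat{P}_j$ with $\hat{P}_j=\ket{u_j}\bra{u_j}$, so that $N_\tau(O;E) = \braket{\psi_0|AOA|\psi_0}$ and $g(\tau(E_j-E))^2 N(O) = \braket{\psi_0|BOB|\psi_0}$.

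The core operator estimate would be the bound
\begin{equation}
\|A - B\|_\infty = \Big\|\sum_{i\neq j} g(\tau(E_i-E))\,\ket{u_i}\bra{u_i}\Big\|_\infty \le \frac{\varepsilon_\tau^{(2)}}{2}.
\end{equation}
For this, I would invoke the assumption $E \in [E_j-\Delta/2, E_j+\Delta/2]$ and the gap hypothesis $|E_i-E_j|\ge \Delta$ for $i\neq j$, which together yield $|E_i-E| \ge \Delta/2$. The monotonicity of $|g|$ in Definition~\ref{Def:coolingfunc} then gives $g(\tau(E_i-E)) \le g(\tau\Delta/2) \le \varepsilon_\tau^{(2)}/2$ under the stated threshold on $\tau$, exactly as in the derivation of Eq.~\eqref{eq:gtauEilb} in the proof of Proposition~\ref{prop:N_finiteTau}.

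With this operator bound in hand, I would apply the standard telescoping identity $AOA - BOB = AO(A-B) + (A-B)OB$, giving
\begin{equation}
|N_\tau(O;E) - g(\tau(E_j-E))^2 N(O)| \le \bigl(\|A\|_\infty + \|B\|_\infty\bigr)\,\|O\|_\infty\,\|A-B\|_\infty.
\end{equation}
Since $|E_j-E|\le\Delta/2 \le |E_i-E|$ for $i\neq j$, monotonicity gives $\|A\|_\infty = g(\tau(E_j-E)) = \|B\|_\infty \le g(0) = 1$, so the right-hand side collapses to $\|O\|_\infty \varepsilon_\tau^{(2)}$, as claimed.

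The only subtle point — and the part that deserves a careful check — is the interplay between the factor of $2$ in $|E_i-E|\ge \Delta/2$ and the prefactor on $g^{-1}$ in the hypothesis on $\tau$. A direct reading gives the threshold $\tau \ge \tfrac{2}{\Delta}g^{-1}(\varepsilon_\tau^{(2)}/2)$; reproducing the looser constant stated in the proposition either requires absorbing the factor of $2$ into the choice of error splitting (as is already done in the proofs of Propositions~\ref{prop:D_finitetau} and \ref{prop:N_finiteTau}), or simply recognizing the proposition as a direct analogue of Proposition~\ref{prop:N_finiteTau} with the same asymptotic dependence. Once the operator norm estimate is established, the remaining algebra is purely bookkeeping.
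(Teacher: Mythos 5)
Your proof is correct and essentially identical to the paper's: it establishes the same operator-norm bound $\|g(\tau(H-E)) - g(\tau(E_j-E))\hat{P}_j\|_\infty \le \varepsilon_\tau^{(2)}/2$ from $|E_i-E|\ge \Delta/2$ plus monotonicity of $g$, then telescopes $AOA - BOB$ into two terms and uses $\|A\|_\infty,\|B\|_\infty \le g(0)=1$ to collapse the bound to $\varepsilon_\tau^{(2)}\|O\|_\infty$, exactly as the paper does (modulo some notational slips there, where $g(\tau(H-E_j))$ appears in place of $g(\tau(H-E))$). Your flagged factor-of-$2$ subtlety is moreover validated by the paper itself: its proof invokes the operator bound under $\tau \ge \frac{2}{\Delta}\,g^{-1}\bigl(\varepsilon_\tau^{(2)}/2\bigr)$, consistent with the $\frac{2}{\Delta}$ prefactor used downstream in Proposition~\ref{prop:OcombKappa}, so the $\frac{1}{\Delta}$ in the proposition's statement is a typo carried over from Proposition~\ref{prop:N_finiteTau}, where $E=E_j$ gives the full gap $\Delta$ rather than $\Delta/2$.
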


\begin{proof}

From Eq.~\eqref{eq:gtauHEminPjub} in the proof in Proposition~\ref{prop:D_finitetau} we have
\begin{equation} \label{eq:gtauHEminPjub2}
\|g(\tau(H-E)) - g(\tau(E_j-E))\hat{P}_j\|_\infty \leq \frac{\varepsilon_\tau^{(2)}}{2},
\end{equation}
when $\tau\geq \frac{2}{\Delta}g^{-1}(\frac{\varepsilon_\tau^{(2)}}{2})$.

Now, we use Eq.~\eqref{eq:gtauHEminPjub2} to bound the difference between $N_\tau{O;E}$ and $g(\tau(E-E_j))^2 N(O)$,
\begin{equation}
\begin{aligned}
|N_\tau(O;E) - g(\tau(E-E_j))^2 N(O)| &= | \braket{\psi_0| g(\tau(H-E_j)) O g(\tau(H-E_j)) |\psi_0} - g(\tau(E-E_j))^2 \braket{\psi_0|\hat{P}_j O \hat{P}_j |\psi_0} | \\
&\leq  \left| \braket{\psi_0| g(\tau(H-E_j)) O g(\tau(H-E_j)) |\psi_0} - g(\tau(E-E_j)) \braket{\psi_0|\hat{P}_j O g(\tau(H-E_j)) |\psi_0} \right| \\ 
& \quad + g(\tau(E-E_j)) \left| \braket{\psi_0| \hat{P}_j O g(\tau(H-E_j)) |\psi_0} - g(\tau(E-E_j)) \braket{\psi_0|\hat{P}_j O \hat{P}_j |\psi_0} \right| \\
&\leq \|g(\tau(H-E_j)) - \hat{P}_j\|_\infty \|O\|_\infty + g(\tau(E-E_j)) \|g(\tau(H-E_j)) - \hat{P}_j\|_\infty \|O\|_\infty \\
&\leq 2\|g(\tau(H-E_j)) - \hat{P}_j\|_\infty \|O\|_\infty \\
&\leq \varepsilon_\tau^{(2)} \|O\|_\infty.
\end{aligned}
\end{equation}
\end{proof}

Based on Propositions~\ref{prop:D_finitetau}, \ref{prop:D_finitexm}, \ref{prop:D_finiteNM}, \ref{prop:N_finitexm}, \ref{prop:N_finiteNM}, and \ref{prop:N_finiteTauKappa}, we then have the following proposition, which is a modified version of Proposition~\ref{prop:Ocomb}.

\begin{proposition} \label{prop:OcombKappa}
For the initial state $\ket{\psi_0}$, target eigenstate $\ket{u_j}$ and cooling function $g(h)$, when the finite energy estimation error satisfies $|\kappa|\leq \frac{\Delta}{2}$, if we set the imaginary time $\tau$, normalized cutoff time $x_m$, and sample number $N$ to be
\begin{enumerate}
\item $\tau= \frac{2}{\Delta} \max\{ g^{-1}(\frac{\varepsilon_\tau^{(1)}}{2}, g^{-1}(\frac{\varepsilon_\tau^{(2)}}{2} ) \}$;
\item $x_m= \max\{\sqrt{2}L(\frac{\varepsilon_x^{(1)}}{2}), L(\frac{\varepsilon_x^{(2)}}{2})\}$;
\item $N_M= \max\{K(\varepsilon^{(1)}_n)^{-2}, K(\varepsilon^{(2)}_n)^{-2}\}$;
\end{enumerate}
then the estimated observable expectation value $\braket{\hat{O}}^{(x_m)}_\tau(\hat{E}_j)$ defined in Eq.~\eqref{eq:OestComb} in Algorithm~\ref{Alg:EigenPropEst} is related to the real observable expectation value $\braket{O}$ by, 
\begin{equation}
\begin{aligned}
|\braket{\hat{O}}_{\tau}^{(x_m)} - \braket{O} | &\leq p_j^{-1} g(\tau\kappa)^{-2} (\braket{O}+1)( \varepsilon^{(1)}_{\tau} + \varepsilon^{(1)}_{x} + \varepsilon^{(1)}_{n} ) + p_j^{-1} g(\tau\kappa)^{-2} (\varepsilon^{(2)}_{\tau} \|O\|_{\infty} + \varepsilon^{(2)}_{x} \|O\|_{\infty} + \varepsilon^{(2)}_{n} \|O\|_1),
\end{aligned}
\end{equation}
with a failure probability $\delta^{(1)}+\delta^{(2)}=4e^{-\frac{K}{2}}$. 
\end{proposition}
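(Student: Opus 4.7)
The proof will closely parallel that of Proposition~\ref{prop:Ocomb}, but every appearance of the ideal energy $E_j$ in the cooling operator is replaced by the estimate $\hat{E}_j = E_j + \kappa$, and I must carefully track the residual factor $g(\tau\kappa)^2$ that arises when the cooling function is evaluated on the true $j$-th eigenstate at the mis-specified energy.

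First I would handle the denominator. Applying Proposition~\ref{prop:D_finitetau} at the point $E = \hat{E}_j \in [E_j - \Delta/2, E_j + \Delta/2]$ yields $|D_\tau(\hat{E}_j) - p_j g(\tau\kappa)^2| \leq \varepsilon_\tau^{(1)}$, since $g(\tau(E_j - \hat{E}_j)) = g(-\tau\kappa) = g(\tau\kappa)$ by the assumed evenness of $g$. Propositions~\ref{prop:D_finitexm} and \ref{prop:D_finiteNM} apply verbatim because their statements are uniform in $E$, giving the combined bound
\[
\bigl|\hat{D}_\tau^{(x_m)}(\hat{E}_j) - p_j g(\tau\kappa)^2\bigr| \leq \varepsilon^{(1)} := \varepsilon_\tau^{(1)} + \varepsilon_x^{(1)} + \varepsilon_n^{(1)},
\]
with failure probability $\delta^{(1)}$.

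Next I would bound the numerator using the newly established Proposition~\ref{prop:N_finiteTauKappa} (which absorbs exactly the same $g(\tau\kappa)^2$ factor on the target) together with Propositions~\ref{prop:N_finitexm} and \ref{prop:N_finiteNM}, obtaining
\[
\bigl|\hat{N}_\tau^{(x_m)}(O; \hat{E}_j) - g(\tau\kappa)^2 N(O)\bigr| \leq \varepsilon^{(2)}(O) := \|O\|_\infty(\varepsilon_\tau^{(2)} + \varepsilon_x^{(2)}) + \|O\|_1 \varepsilon_n^{(2)}.
\]
The crucial observation is that both approximations target the \emph{same} rescaled ideal quantities $p_j g(\tau\kappa)^2$ and $g(\tau\kappa)^2 N(O) = p_j g(\tau\kappa)^2 \braket{O}$, so the factor $g(\tau\kappa)^2$ cancels exactly in the ideal ratio. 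Inserting these two bounds into the quotient estimate already used in Eq.~\eqref{eq:Ofinite_xmNM} of Proposition~\ref{prop:Ocomb}, namely
\[
\left|\frac{\hat{N}}{\hat{D}} - \frac{g(\tau\kappa)^2 N(O)}{p_j g(\tau\kappa)^2}\right| \leq \frac{\varepsilon^{(2)}(O) + (\braket{O}+1)\varepsilon^{(1)}}{p_j g(\tau\kappa)^2 - \varepsilon^{(1)}},
\]
and using that the parameter schedule makes $\varepsilon^{(1)}$ negligible compared to $p_j g(\tau\kappa)^2$, I obtain the stated bound with the amplifying prefactor $p_j^{-1} g(\tau\kappa)^{-2}$. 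The two failure events combine by a union bound.

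The main obstacle is already encoded in this prefactor and is conceptual rather than technical: a mis-specification of the target energy by $\kappa$ causes the cooling operator to damp the true $j$-th eigenstate overlap by $g(\tau\kappa)^2$, shrinking both numerator and denominator signals by this same factor. The resulting degradation $g(\tau\kappa)^{-2}$ in the error bound is therefore intrinsic and cannot be removed without refining $\hat{E}_j$, though it remains benign so long as $\tau\kappa = \mathcal{O}(1)$. The remaining work is bookkeeping: checking that the $\tau$ prerequisite of Proposition~\ref{prop:N_finiteTauKappa} is compatible with that of Proposition~\ref{prop:D_finitetau} invoked here (both demand $\tau \geq (2/\Delta) g^{-1}(\varepsilon/2)$), and that the $x_m$, $N_M$ requirements carry over unchanged from Proposition~\ref{prop:Ocomb}.
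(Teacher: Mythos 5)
Your proposal is correct and takes exactly the route the paper intends: the paper omits this proof, noting only that one repeats the argument of Proposition~\ref{prop:Ocomb} with $N(O)$ and $D$ replaced by $g(\tau\kappa)^2 N(O)$ and $g(\tau\kappa)^2 D$, which is precisely the substitution you carry out via Proposition~\ref{prop:D_finitetau} evaluated at $E=\hat{E}_j$ (using evenness of $g$ and $|\kappa|\leq\frac{\Delta}{2}$), Proposition~\ref{prop:N_finiteTauKappa} for the numerator, and the unchanged, uniform-in-$E$ cutoff and sampling bounds, followed by the same quotient estimate as in Eq.~\eqref{eq:Ofinite_xmNM} (including its ``$\varepsilon^{(1)}$ negligible against $p_jg(\tau\kappa)^2$'' step). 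Your closing compatibility check is also right on the one genuinely new point, namely that both finite-$\tau$ prerequisites now read $\tau\geq \frac{2}{\Delta}g^{-1}(\cdot)$, which explains the replacement of the $\frac{1}{\Delta}$ prefactor of Proposition~\ref{prop:Ocomb} by $\frac{2}{\Delta}$ in the present statement.
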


The proof of Proposition~\ref{prop:OcombKappa} is omitted since it is almost the same as the one in Proposition~\ref{prop:Ocomb}. We only need to replace $N(O)$ and $D$ to $g(\tau\kappa)^2 N(O)$ and $g(\tau\kappa)^2 D$, respectively.

Proposition~\ref{prop:OcombKappa} indicates that, when the eigenenergy is estimated to a precision of $\kappa$, the observation error will become $g(\tau\kappa)^2$ times larger than the case when $E_j$ is accurate. This will put an upper bound on the value of $\tau$: when $\tau\to\infty$, $g(\tau\kappa)^2\to 0$. 

Similar to Theorem~\ref{thm:observable}, we have the following proposition.

\begin{proposition}[Accuracy of the observable estimation with small eigenenergy estimation error] \label{prop:observableKappa} When the finite estimation error satisfies $|\kappa|\leq \frac{\Delta}{2}$, the expectation value estimation $\braket{\hat{O}}^{(x_m)}_\tau(\hat{E}_j)$ defined in Eq.~\eqref{eq:OestKappa} is related to the real observable expectation value $\braket{O}$ by
\begin{equation}
\begin{aligned}
|\braket{\hat{O}}_{\tau}^{(x_m)} - \braket{O} | & \leq \varepsilon\left( \frac{1}{2}(\braket{O}+1) + \frac{1}{3}\|O\|_\infty + \frac{1}{6}\|O\|_1 \right) \\
& \leq \varepsilon(\|O\|_1 +1 ),
\end{aligned}
\end{equation}
with a failure probability of $\delta^{(1)}+\delta^{(2)}=4 \exp(-K/8)$, when the finite imaginary time $\tau$ satisfies
\begin{enumerate}
    \item $\tau\leq \tau_m$, where $\tau_m=\mc{O}(1/\kappa)$ so that $g(\tau_m\kappa)=\Omega(1)$;
    \item $\tau\geq \frac{2}{\Delta}g^{-1}\left(\frac{\varepsilon p_j}{12} \right)$,
\end{enumerate} 
the normalized cutoff time $x_m\geq \sqrt{2} L\left(\frac{\varepsilon p_j g(\tau\kappa)^2}{12}\right)$, and the sample number $N_M \geq K\, (\frac{\varepsilon p_j g(\tau\kappa)^2}{6})^{-2}$. Here, $\|O\|_1 = \sum_l |o_l|$ is the sum of the Pauli coefficients of $O$.
\end{proposition}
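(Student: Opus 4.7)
The plan is to reduce the proposition directly to Proposition~\ref{prop:OcombKappa}, in much the same way that Theorem~\ref{thm:observable} is extracted from Proposition~\ref{prop:Ocomb}, paying attention to the extra prefactor $g(\tau\kappa)^{-2}$ that appears on the right-hand side of Proposition~\ref{prop:OcombKappa}. This prefactor reflects the fact that, when the eigenenergy is estimated with error $\kappa$, the denominator $D_\tau(\hat{E}_j)$ is suppressed from $p_j$ down to roughly $g(\tau\kappa)^2 p_j$, and every term in the quotient inherits this suppression.

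First, I would invoke Proposition~\ref{prop:OcombKappa} with the three sample parameters $\tau,x_m,N_M$ chosen so that each of the six individual error budgets $\varepsilon_\tau^{(i)},\varepsilon_x^{(i)},\varepsilon_n^{(i)}$ ($i=1,2$) is independently controlled. The upper constraint $\tau\leq\tau_m=\mathcal{O}(1/\kappa)$ with $g(\tau_m\kappa)=\Omega(1)$ guarantees that $g(\tau\kappa)^{-2}$ is bounded by an absolute constant; this is what makes the $\tau$-condition in the proposition identical in form to the one in Theorem~\ref{thm:observable}, namely $\tau\geq\tfrac{2}{\Delta}g^{-1}(\varepsilon p_j/12)$, without needing to absorb a $g(\tau\kappa)^2$ factor there.

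Next, I would split the budget as follows, mimicking Eq.~\eqref{eq:splitepsilon}: set $\varepsilon_\tau^{(1)}=\varepsilon_\tau^{(2)}=\varepsilon p_j/6$ for the imaginary-time errors (which produces the stated $\tau$-condition via $g^{-1}(\varepsilon p_j/12)$), and set the truncation and sampling budgets to $\varepsilon_x^{(i)}=\varepsilon_n^{(i)}=\varepsilon p_j g(\tau\kappa)^2/6$. Plugging these choices into the inequalities of Propositions~\ref{prop:D_finitexm},~\ref{prop:D_finiteNM},~\ref{prop:N_finitexm},~\ref{prop:N_finiteNM} yields exactly the stated conditions $x_m\geq\sqrt{2}L(\varepsilon p_j g(\tau\kappa)^2/12)$ and $N_M\geq K(\varepsilon p_j g(\tau\kappa)^2/6)^{-2}$, while the union bound over the Hoeffding events for the two independent single-shot estimators produces the failure probability $4\exp(-K/8)$.

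Finally, substituting these budgets into the right-hand side of Proposition~\ref{prop:OcombKappa} gives
\begin{equation*}
|\braket{\hat O}_\tau^{(x_m)}-\braket{O}|\leq \tfrac{1}{2}(\braket{O}+1)\varepsilon+\tfrac{1}{3}\|O\|_\infty\varepsilon+\tfrac{1}{6}\|O\|_1\varepsilon\leq\varepsilon(\|O\|_1+1),
\end{equation*}
where the factor $g(\tau\kappa)^{-2}$ cancels against the $g(\tau\kappa)^2$ built into $\varepsilon_x^{(i)},\varepsilon_n^{(i)}$, and for the $\varepsilon_\tau^{(i)}$ contributions the residual $g(\tau\kappa)^{-2}$ is absorbed into the constant since $g(\tau\kappa)\geq g(\tau_m\kappa)=\Omega(1)$. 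The main subtlety, and the step I expect to require the most care, is the joint compatibility of the two-sided constraint $\tfrac{2}{\Delta}g^{-1}(\varepsilon p_j/12)\leq\tau\leq\tau_m=\mathcal{O}(1/\kappa)$: a non-empty $\tau$-window exists only when $\kappa$ is small enough relative to $\Delta$ and to $g^{-1}(\varepsilon p_j/12)$, which for the valid cooling functions of Table~\ref{Table:functinos} amounts to the mild requirement $\kappa=\mathcal{O}(\Delta/\log(1/(\varepsilon p_j)))$.
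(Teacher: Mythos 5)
Your proposal follows the paper's proof in all essentials: the paper, too, proves this proposition by re-running the argument of Theorem~\ref{thm:observable}, with Proposition~\ref{prop:Ocomb} replaced by Proposition~\ref{prop:OcombKappa} and the budget split of Eq.~\eqref{eq:splitepsilon} modified to compensate for the prefactor $p_j^{-1}g(\tau\kappa)^{-2}$. The one substantive difference is where you place the factor $g(\tau\kappa)^2$: the paper scales all six budgets uniformly, $\varepsilon^{(i)}_\tau=\varepsilon^{(i)}_x=\varepsilon^{(i)}_n=\frac{\varepsilon p_j}{6}g(\tau\kappa)^2$, so that the prefactor cancels term by term and the constants $\frac{1}{2},\frac{1}{3},\frac{1}{6}$ come out exactly, whereas you keep $\varepsilon^{(i)}_\tau=\frac{\varepsilon p_j}{6}$ and scale only the truncation and sampling budgets. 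Your variant has the virtue of reproducing the stated hypotheses verbatim --- in particular the condition $\tau\ge\frac{2}{\Delta}g^{-1}\left(\frac{\varepsilon p_j}{12}\right)$, which under the paper's uniform split would strictly have to be strengthened to $\tau\ge\frac{2}{\Delta}g^{-1}\left(\frac{\varepsilon p_j g(\tau\kappa)^2}{12}\right)$, a constant-level discrepancy the paper itself glosses over --- but the price is that the two $\varepsilon_\tau$-terms in the final bound retain a residual factor $g(\tau\kappa)^{-2}$, so your argument yields the displayed inequality only up to the bounded constant $g(\tau\kappa)^{-2}=\mathcal{O}(1)$ guaranteed by condition~1, not with the literal coefficients $\frac{1}{2},\frac{1}{3},\frac{1}{6}$; if you want those exact constants, adopt the paper's split and accept the (constant-factor) strengthening of the $\tau$ lower bound. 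Your closing remark on the compatibility of the two-sided window $\frac{2}{\Delta}g^{-1}\left(\frac{\varepsilon p_j}{12}\right)\le\tau\le\tau_m=\mathcal{O}(1/\kappa)$ matches the paper's own discussion immediately after the proposition, where it is recorded as the requirement $\kappa\le\mathcal{O}\bigl(\Delta/g^{-1}(\frac{\varepsilon p_j}{6})\bigr)$, and your union-bound accounting of the failure probability $4\exp(-K/8)$ agrees with Propositions~\ref{prop:D_finiteNM} and~\ref{prop:N_finiteNM}.
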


The proof is essentially the same as the one in Theorem~\ref{thm:observable}. We only need to modify Eq.~\eqref{eq:splitepsilon} to
\begin{equation}
\varepsilon^{(1)}_\tau = \varepsilon^{(2)}_\tau = \varepsilon^{(1)}_x = \varepsilon^{(2)}_x = \varepsilon^{(1)}_n = \varepsilon^{(2)}_n = \frac{\varepsilon p_j}{6} g(\tau\kappa)^2,
\end{equation}
so that Proposition~\ref{prop:OcombKappa} holds.

Now we discuss the requirement of energy estimation precision $\kappa$ based on Proposition~\ref{prop:observableKappa}. To make sure that the observable $O$ can be accurately estimated, the finite imaginary evolution time $\tau$ must satisfy the following two requirements,
\begin{enumerate}
    \item $\tau\leq \tau_m$, where $\tau_m=\mc{O}(1/\kappa)$ so that $g(\tau_m\kappa)=\Omega(1)$;
    \item $\tau\geq \frac{2}{\Delta}g^{-1}\left(\frac{\varepsilon p_j}{12} \right)$.
\end{enumerate}
As a result, the value of $\kappa$ should satisfy
\begin{equation}
    \kappa \leq \mc{O}\left(\frac{\Delta}{g^{-1}(\frac{\epsilon p_j}{6})} \right).
\end{equation}
So that we are able to estimate the observable values with proper $\tau$.

\section{Numerical simulation}

We show the spectrum search with different total evolution time $T = \tau x_m$ with the cutoff $x_m$ and imaginary evolution time $\tau$. Here,  $\tau$ and $x_m$ are calculated according to the precision, as discussed in the above sections. The maximum imaginary evolution time $\tau$ ranges from $\tau = 0.9$ to $1.7$. As shown in Figure.~\ref{fig:Spectra_Tau}, with increasing imaginary time, we can distinguish the peaks that are close to each other, aligning with our theoretical analysis. 

In the main text, both the imaginary time $\tau$ and the cutoff $x_m$ are consistently adjusted by the target precision.
We elaborate how to determine the the imaginary time $\tau$ and the cutoff $x_m$ by taking the Gaussian function as an example.
For a certain simulation accuracy $\varepsilon$, the imaginary time $\tau$ and the cutoff $x_m$ satisfy
$\tau \propto \sqrt{ \log(2\Delta/\epsilon)}$ and $x_m = 2\sqrt{\log (2/\epsilon)}$, respectively.
The reference line in the main text shows the error due to finite time $\tau$  plus the theoretical maximum error $L(\epsilon)$.


We also show the simulation accuracy under the cooling process. 
Similarly in the main text, suppose we aim to find the second eigenstate $\ket{u_2}$, which has the largest overlap with the initial state. Given the associated eigenenergy $E_2$ found by the above method, we now analyze the simulation error during the cooling process
Here, the observable that verifies the state infidelity as $O = \ket{u_2}\bra{u_2}$. We track the simulation accuracy by comparing with the ideal expectation value under the generalized cooling process, which is given by $\braket{\psi_0|g(\tau  (H-E_2) )^{\dagger} O g(\tau (H-E_2) )|\psi_0}$.
We can see that the error is kept in a relatively small value during the cooling process. We can improve the simulation accuracy by increasing the  Monte Carlo sampling of the integral.

\begin{figure}
\centering
\includegraphics[width=0.5\linewidth]{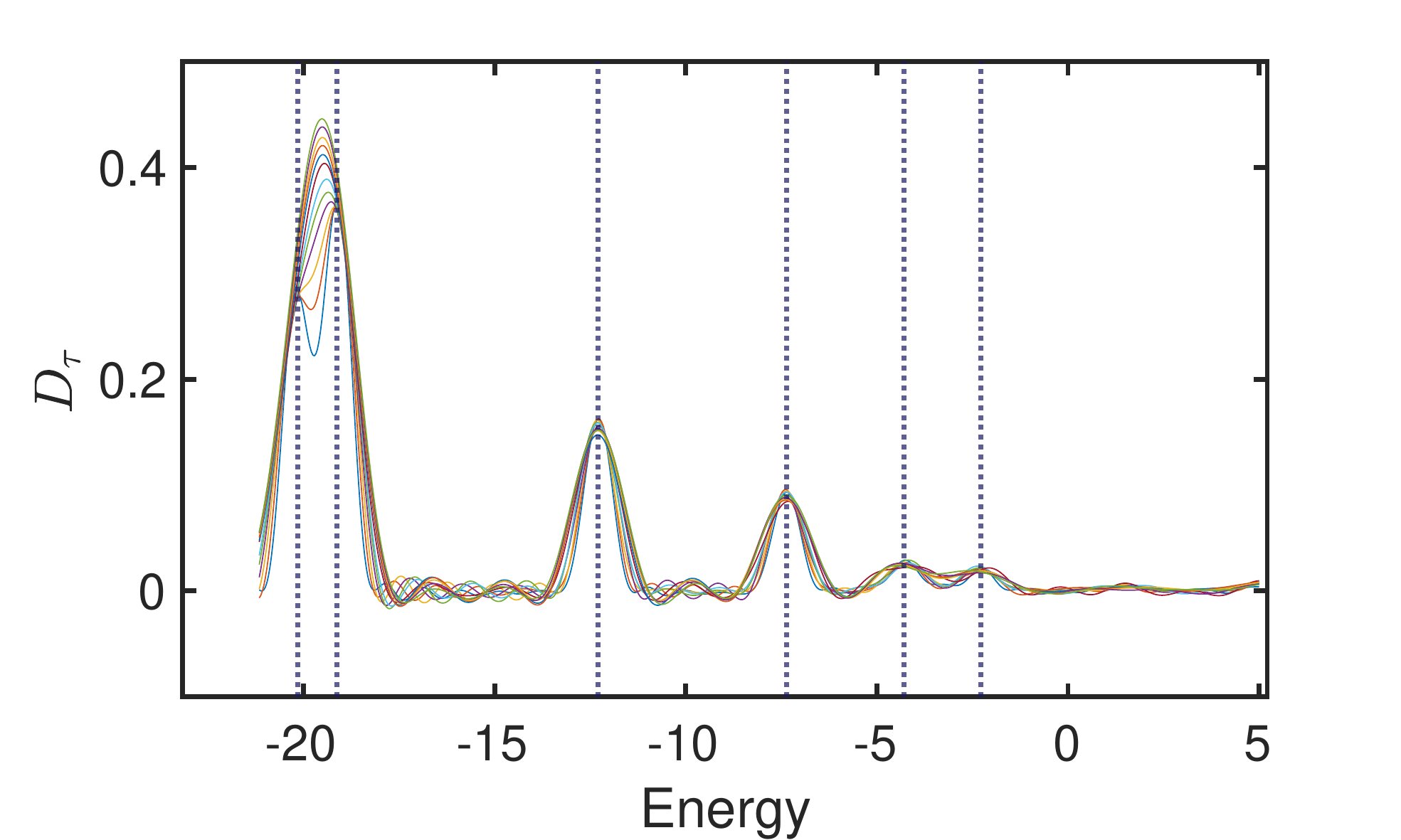}
\caption{ We show the spectrum search of the $8$-site anisotropic Heisenberg model using the Gaussian cooling function with different evolution time. The solid line shows the denominator of $D_{\tau}$ with different $\tau$ from $\tau = 0.9$ to $1.7$. The cutoff is calculated according to the precision in the main text.
We set $N_s = 10^5$ in the Monte Carlo sampling of the integral, and we use the expectation value for each sample that ignores measurement shot noise to keep consistent as that in the main text.}
 \label{fig:Spectra_Tau}
\end{figure}

\begin{figure}
\centering
  \includegraphics[width=0.5\linewidth]{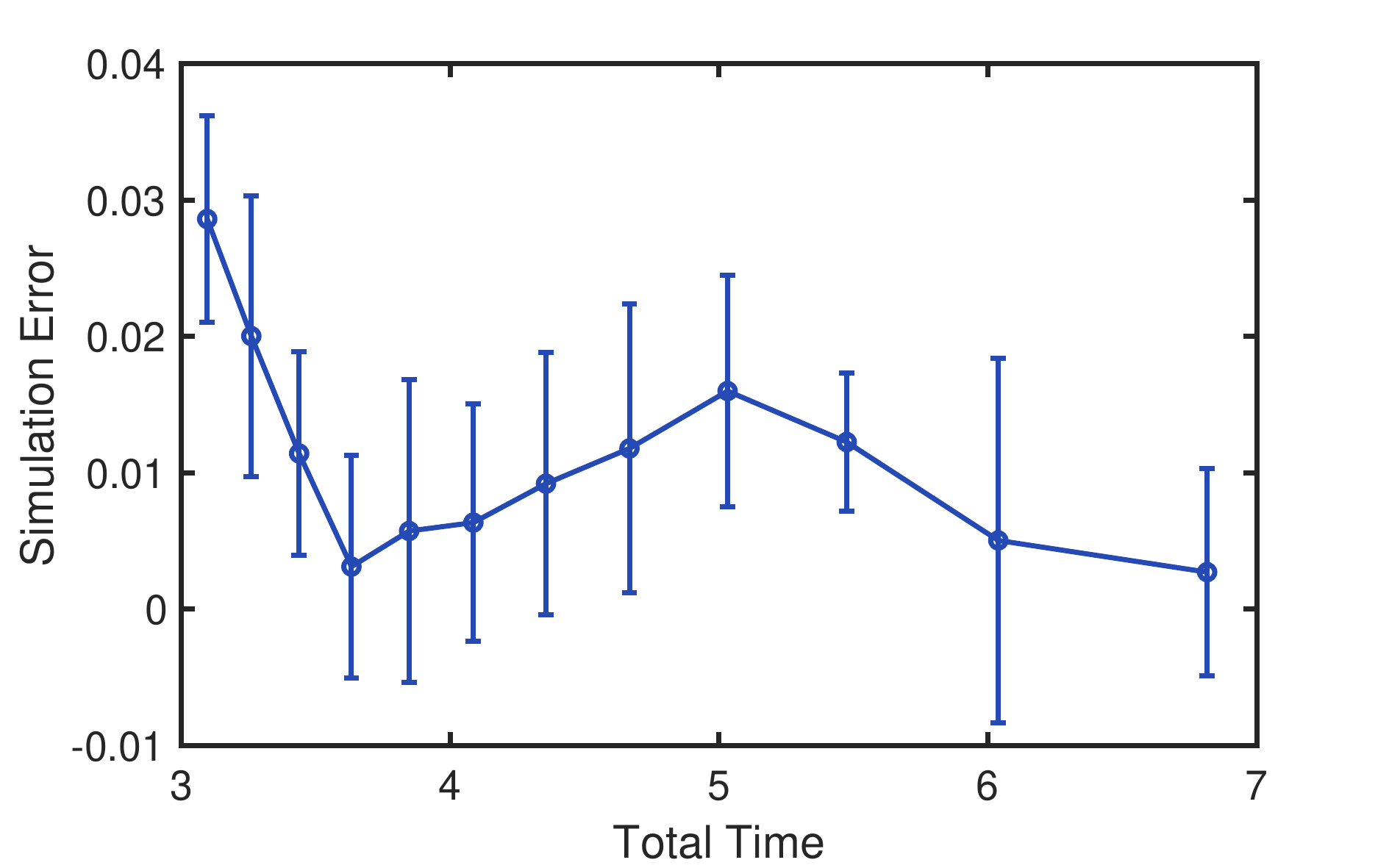}
  \caption{ Simulation error during the cooling process. The error is compared with  the ideal expectation value given by $\braket{\psi_0|g(\tau  (H-E_2) )^{\dagger} O g(\tau  (H-E_2) )|\psi_0}$.
  The simulation error is below $0.05$ under evolution.}
  \label{fig:Error_tau}
\end{figure}


\end{document}